   \newcommand\SkipToFmtEnd{}%
   \newcommand\EndFmtInput{}%
   \long\def\SkipToFmtEnd#1\EndFmtInput{}%
\newcommand\ReadOnlyOnce[1]{\@ifundefined{#1}{\@namedef{#1}{}}\SkipToFmtEnd}
\DeclareFontFamily{OT1}{cmtex}{}
\DeclareFontShape{OT1}{cmtex}{m}{n}
  {<5><6><7><8>cmtex8
   <9>cmtex9
   <10><10.95><12><14.4><17.28><20.74><24.88>cmtex10}{}
\DeclareFontShape{OT1}{cmtex}{m}{it}
  {<-> ssub * cmtt/m/it}{}
\DeclareFontShape{OT1}{cmtt}{bx}{n}
  {<5><6><7><8>cmtt8
   <9>cmbtt9
   <10><10.95><12><14.4><17.28><20.74><24.88>cmbtt10}{}
\DeclareFontShape{OT1}{cmtex}{bx}{n}
  {<-> ssub * cmtt/bx/n}{}
\newcommand{\Conid}[1]{\mathit{#1}}
\newcommand{\Varid}[1]{\mathit{#1}}
\newcommand{\anonymous}{\kern0.06em \vbox{\hrule\@width.5em}}
\newcommand{\plus}{\mathbin{+\!\!\!+}}
\newcommand{\bind}{\mathbin{>\!\!\!>\mkern-6.7mu=}}
\newdimen\mathindent\mathindent\leftmargini}%
\def\resethooks{%
  \global\let\SaveRestoreHook\empty
  \global\let\ColumnHook\empty}
\newcommand*{\savecolumns}[1][default]%
  {\g@addto@macro\SaveRestoreHook{\savecolumns[#1]}}
\newcommand*{\restorecolumns}[1][default]%
  {\g@addto@macro\SaveRestoreHook{\restorecolumns[#1]}}
\newcommand*{\aligncolumn}[2]%
  {\g@addto@macro\ColumnHook{\column{#1}{#2}}}
\newcommand{\onelinecommentchars}{\quad-{}- }
\newcommand{\commentbeginchars}{\enskip\{-}
\newcommand{\commentendchars}{-\}\enskip}
\newcommand{\visiblecomments}{%
  \let\onelinecomment=\onelinecommentchars
  \let\commentbegin=\commentbeginchars
  \let\commentend=\commentendchars}
\newcommand{\invisiblecomments}{%
  \let\onelinecomment=\empty
  \let\commentbegin=\empty
  \let\commentend=\empty}
\newlength{\blanklineskip}
\newcommand{\hsindent}[1]{\quad}% default is fixed indentation
\let\hspre\empty
\let\hspost\empty
\newcommand{\hsnewpar}[1]%
  {{\parskip=0pt\parindent=0pt\par\vskip #1\noindent}}
\newcommand{\hscodestyle}{}
\newcommand{\sethscode}[1]%
  {\expandafter\let\expandafter\hscode\csname #1\endcsname
   \expandafter\let\expandafter\endhscode\csname end#1\endcsname}
   \let\hspre\(\let\hspost\)%
   \let\hspre\(\let\hspost\)%
\newcommand{\plainhs}{\sethscode{plainhscode}}
\def\codeframewidth{\arrayrulewidth}
   \let\endoflinesave=\\
   \framedhslinecorrect\endoflinesave{.5ex}\hline
\newcommand{\framedhslinecorrect}[2]%
  {#1[#2]}
\def\column##1##2{}%
   \newcommand\>[1][]{}\newcommand\<[1][]{}\newcommand\\[1][]{}%
   \def\fromto##1##2##3{##3}%
\let\orighscode=\hscode
   \let\origendhscode=\endhscode
   \def\endhscode{\def\hscode{\endgroup\def\@currenvir{hscode}\\}\begingroup}
\def\hscode{\endgroup\def\@currenvir{hscode}}}%
   \global\let\hscode=\orighscode
   \global\let\endhscode=\origendhscode}%
\let\HaskellResetHook\empty
\newcommand*{\AtHaskellReset}[1]{%
  \g@addto@macro\HaskellResetHook{#1}}
\newcommand*{\HaskellReset}{\HaskellResetHook}
\newcommand\hsforall{\global\let\hsdot=\hsperiodonce}
\newcommand*\hsperiodonce[2]{#2\global\let\hsdot=\hscompose}
\newcommand*\hscompose[2]{#1}
\newcommand{\Nat}{\mathbb{N}}
\numberwithin{equation}{section}
\newtheorem{theorem}{Theorem}[section]
\newtheorem{defn}[theorem]{Definition}
\newtheorem{prop}[theorem]{Proposition}
\newtheorem{lemma}[theorem]{Lemma}
\newtheorem{corollary}[theorem]{Corollary}
\newtheorem{example}[theorem]{Example}
\newtheorem{remark}[theorem]{Remark}
\newcommand{\comment}[1]{\quad\{ \text{ #1 } \}}
\newenvironment{calculation}{\par\medskip\noindent\(\displaystyle}{\)\bigskip}
\newcommand{\Set}{\mathbf{\mathsf{Set}}}
\newcommand{\cat}[1]{\mathcal{#1}}
\newcommand{\cA}{\cat{A}}  %the category A
\newcommand{\cC}{\cat{C}}  %the category C
\newcommand{\cD}{\cat{D}}  %the category D
\newcommand{\cP}{\cat{P}}  %the category P
\newcommand{\op}[1]{{#1}^{\text{op}}}
\newcommand{\Mon}{\cat{M}\!\!on} %the category of monads over Set
\newcommand{\FC}[2]{{#2}^{#1}}  %Functor category from #1 to #2
\newcommand{\Hom}[3]{ {#2} \xrightarrow{#1} {#3} }
\newcommand{\tto}{\rightarrow} %set exponentiation
\newcommand{\Id}{I}  %Identity Functor
\newcommand{\ladj}[1]{\lfloor {#1} \rfloor} %left adjunct
\newcommand{\radj}[1]{\lceil {#1} \rceil} %right adjunct
\newcommand{\App}{\cat{A}} %category of applicative functors and applicative morphisms
\newcommand{\FN}[1]{\textsf{#1}}
\newcommand{\traverse}{\FN{traverse}}
\title[A Representation Theorem for Second-Order Functionals]
    {A Representation Theorem for Second-Order Functionals}
\author[M.\ Jaskelioff and R.\ O'Connor]
   { MAURO JASKELIOFF\\
    %Centro Internacional Franco Argentino de Ciencias de la Informaci\'on y de Sistemas
    CIFASIS, CONICET, Argentina \\
    FCEIA, Universidad Nacional de Rosario, Argentina\\[8pt] 
    RUSSELL O'CONNOR\\
    Google Canada\\
    Kitchener, Ontario, Canada}
\begin{document}

\maketitle

\begin{abstract}
  Representation theorems relate seemingly complex
  objects to concrete, more tractable ones.  

  In this paper, we take advantage of the abstraction power of
  category theory and provide a datatype-generic representation
  theorem. More precisely, we prove a representation theorem for a
  wide class of second-order functionals which are polymorphic over a
  class of functors.
  Types polymorphic over a class of functors are easily representable
  in languages such as Haskell, but are difficult to analyse and
  reason about. The concrete representation provided by the theorem is
  easier to analyse, but it might not be as convenient to
  implement. Therefore, depending on the task at hand, the change of
  representation may prove valuable in one direction or the other.

  We showcase the usefulness of the representation theorem with a
  range of examples. Concretely, we show how the representation
  theorem can be used to prove that traversable functors are finitary
  containers, how coalgebras of a parameterised store comonad
  relate to very well-behaved
  lenses, and how algebraic effects might be implemented in a
  functional language.

  % Traversals are an ubiquitous programming idiom which consists of
  % iterating over the elements of a data structure. But how can
  % traversals be characterised mathematically? The answer to this
  % question is important. An extensional characterization would permit,
  % for instance, to reason algebraically about traversals, to specify
  % traversals of abstract data types, and to obtain a deeper
  % understanding of its semantics.

  % In this article we characterise traversals that only depend on the
  % structure of a datatype, but not on its contents. We start from
  % McBride and Paterson's
  % characterisation~\shortcite{mcbride08:applicative-programming},
  % adopt Jaskelioff and Rypacek~\shortcite{Jaskelioff:MSFP2012}
  % traversal laws, and show that every traversable functor must be a
  % finitary container. More precisely, we show that whenever a functor
  % is equipped with a traversal then it is isomorphic to a finitary
  % container whose canonical traversal corresponds to the original one.
  % \hl{The other direction is trivial. Mention it anyway?}
\end{abstract}

% Introduction

\section{Introduction}

%Representation Theorems.
When dealing with a type which uses advanced features of modern type
systems such as polymorphism and higher-order types and functions, it
is convenient to analyse whether there is another datatype that can
represent it, as the alternative representation might be easier to
program or to reason about.
A simple example of a datatype that might be better understood through
a different representation is the type of polymorphic functions
$\forall A.\ A \to A$ which, although it involves a function space
and a universal quantifier, has only one non-bottom inhabitant: the identity function.

Hence, a representation theorem opens the design space for programmers
and computer scientists, providing and connecting different views on
some construction.
When a representation is an isomorphism, we say that it is
\emph{exact}, and the change of representation can be done in both
  directions.

In this article we will consider second-order functionals that are
polymorphic over a class of functors, such as monads or applicative
functors.
In particular we will give a concrete representation for inhabitants
of types of the form
\[
\ensuremath{\forall \Conid{F}\hsforall .~(\Conid{A}_{1}\to \Conid{F}\;\Conid{B}_{1})\to (\Conid{A}_{2}\to \Conid{F}\;\Conid{B}_{2})\to \ldots\to \Conid{F}\;\Conid{C}}
\]%
Here \ensuremath{\Conid{A}_i}, \ensuremath{\Conid{B}_i}, and $C$ are fixed types, and $F$ ranges over an
appropriate class of functors. There is a condition on the class of
functors which will be made precise during the presentation of the
theorem, but basically it amounts to the existence of free
constructions. The representation is exact, as it is an isomorphism.

We will express the representation theorem using
category theory. Although the knowledge of category
theory that is required should be covered by an introductory textbook such as~\cite{awodey2006ct},
we introduce the more important concepts in Section~\ref{sec:catprelim}.
The usefulness of the representation theorem
(Section~\ref{sec:representation}) is illustrated with a range of
examples.
Concretely, we show how coalgebras of a specific parameterised comonad
are related to very
well-behaved lenses (Section~\ref{sec:pcomonads_and_lenses}), and how
traversable functors, subjected to certain coherence laws, are exactly
the finitary containers (Section~\ref{sec:traversals}). Finally we
show how the representation theorem can help when implementing free
theories of algebraic effects~(Section~\ref{sec:algebraic_theories})
and discuss related work~(Section~\ref{sec:related}).

There is a long tradition of categorically inspired functional
programming~\cite{BdM97} even though functional programming languages
like Haskell usually lack some basic structure such as products or
coproducts. The implementation of our results in Haskell, as shown in
Section~\ref{sec:lenses_in_Haskell} and
Section~\ref{sec:algebraic_theories}, should be taken simply as
categorically-inspired code. Nevertheless, the code could be interpreted to
be ``morally correct'' in a precise technical
sense~\cite{Danielsson:2006}.

\subsection{A taste of the representation theorem}

In order to get a taste of the representation theorem, we reason
informally on a total polymorphic functional language.
Consider the type $$T=\ensuremath{\forall \Conid{F}\hsforall \colon\FN{Functor}.~(\Conid{A}\to \Conid{F}\;\Conid{B})\to \Conid{F}\;\Conid{C}}.$$%  
What do the inhabitants of this type look like?

The inhabitants of $T$ are functions $h = \lambda g.\, r$.  Given that
the functor $F$ is universally quantified, the only way of obtaining a
result in $\ensuremath{\Conid{F}\;\Conid{C}}$ is that in the expression $r$ there is an
application of the argument $g$ to some $a: A$. This yields
something in \ensuremath{\Conid{F}\;\Conid{B}} rather than the sought \ensuremath{\Conid{F}\;\Conid{C}}, so a function \ensuremath{\Varid{k}\colon\Conid{B}\to \Conid{C}} is needed in order to construct a map $F(k) : F\,B \to F\, C$.
This informal argument suggests that all inhabitants of $T$ can be
built from a pair of an element of $A$ and a function \ensuremath{\Conid{B}\to \Conid{C}}.
Hence, it is natural to propose the type $A\times(\ensuremath{\Conid{B}\to \Conid{C}})$ as a
simpler representation of the inhabitants of type \ensuremath{\Conid{T}}.

More formally, in order to check that the inhabitants of $T$ are in a
one-to-one correspondence with the inhabitants of $A\times(\ensuremath{\Conid{B}\to \Conid{C}})$,
we want to find an isomorphism
$$
\ensuremath{\forall \Conid{F}\hsforall \colon\FN{Functor}.~(\Conid{A}\to \Conid{F}\;\Conid{B})\to \Conid{F}\;\Conid{C}} 
\xymatrix{~ \ar@/^/@<1ex>[r]^{\varphi}
               \ar@{}[r]|{\cong}
& 
     \ar@/^/@<1ex>[l]^{\varphi^{-1}}~} 
A\times(\ensuremath{\Conid{B}\to \Conid{C}}).
$$%
We define $\varphi^{-1}$ using the procedure described above.

\begin{hscode}\SaveRestoreHook
\column{B}{@{}>{\hspre}l<{\hspost}@{}}%
\column{7}{@{}>{\hspre}l<{\hspost}@{}}%
\column{13}{@{}>{\hspre}l<{\hspost}@{}}%
\column{24}{@{}>{\hspre}l<{\hspost}@{}}%
\column{E}{@{}>{\hspre}l<{\hspost}@{}}%
\>[B]{}\varphi^{-1}{}\<[7]%
\>[7]{}\colon\Conid{A}\times(\Conid{B}\to \Conid{C}){}\<[24]%
\>[24]{}\to \forall \Conid{F}\hsforall \colon\FN{Functor}.~(\Conid{A}\to \Conid{F}\;\Conid{B})\to \Conid{F}\;\Conid{C}{}\<[E]%
\\
\>[B]{}\varphi^{-1}\;(\Varid{a},\Varid{k}){}\<[13]%
\>[13]{}\mathrel{=}\lambda \Varid{g}.~\Conid{F}\;(\Varid{k})\;(\Varid{g}\;\Varid{a}){}\<[E]%
\ColumnHook
\end{hscode}\resethooks
In order to define \ensuremath{\varphi}, notice that \ensuremath{\Conid{R}\;\Conid{C}\mathrel{=}\Conid{A}\;\times\;(\Conid{B}\to \Conid{C})} is
functorial on \ensuremath{\Conid{C}}, with action on morphisms given by \ensuremath{\Conid{R}\;(\Varid{f})\;(\Varid{a},\Varid{g})\mathrel{=}(\Varid{a},\Varid{f}\circ\Varid{g})}. Hence, we can instantiate a polymorphic function \ensuremath{\Varid{h}\colon\Conid{T}} to the
functor \ensuremath{\Conid{R}} and obtain \ensuremath{h_{R}\colon(\Conid{A}\to \Conid{R}\;\Conid{B})\to \Conid{R}\;\Conid{C}}, which amounts to the type
\ensuremath{h_{R}\colon(\Conid{A}\to (\Conid{A}\times(\Conid{B}\to \Conid{B})))\to \Conid{A}\times(\Conid{B}\to \Conid{C})}.

\begin{hscode}\SaveRestoreHook
\column{B}{@{}>{\hspre}l<{\hspost}@{}}%
\column{E}{@{}>{\hspre}l<{\hspost}@{}}%
\>[B]{}\varphi\colon(\forall \Conid{F}\hsforall \colon\FN{Functor}.~(\Conid{A}\to \Conid{F}\;\Conid{B})\to \Conid{F}\;\Conid{C})\to \Conid{A}\times(\Conid{B}\to \Conid{C}){}\<[E]%
\\
\>[B]{}\varphi\;\Varid{h}\mathrel{=}h_{R}\;(\lambda \Varid{a}.~(\Varid{a},id_B)){}\<[E]%
\ColumnHook
\end{hscode}\resethooks
The proof that \ensuremath{\varphi} and \ensuremath{\varphi^{-1}} are indeed inverses will be given for
a $\Set$ model in Section~\ref{sec:representation}.

The simple representation \ensuremath{\Conid{A}\times(\Conid{B}\to \Conid{C})} is possible due to the
restrictive nature of the type $T$: all we know about $F$ is that it
is a functor. What happens when $F$ has more structure?

Consider now the type 
$$T'=\ensuremath{\forall \Conid{F}\hsforall \colon\FN{Pointed}.~(\Conid{A}\to \Conid{F}\;\Conid{B})\to \Conid{F}\;\Conid{C}}.$$
In this case $F$ ranges over \emph{pointed functors}. That is, $F$ is
a functor equipped with a natural transformation $\ensuremath{\eta }_X : \ensuremath{\Conid{X}\to \Conid{F}\;\Conid{X}}$.
An inhabitant of $T'$ is a function $h = \lambda g.\, r$, where $r$ can be
obtained in the same manner as before, or else by applying the point
$\eta_C$ to a given $c\in C$. Hence, a simpler type representing
$T'$ seems to be $(A\times(\ensuremath{\Conid{B}\to \Conid{C}})) + C$.

More formally, we want an isomorphism
$$
\ensuremath{\forall \Conid{F}\hsforall \colon\FN{Pointed}.~(\Conid{A}\to \Conid{F}\;\Conid{B})\to \Conid{F}\;\Conid{C}} 
\xymatrix{~ \ar@/^/@<1ex>[r]^{\varphi'}
               \ar@{}[r]|{\cong}
& 
     \ar@/^/@<1ex>[l]^{\varphi'^{-1}}~} 
(A\times(\ensuremath{\Conid{B}\to \Conid{C}})) + C.
$$
The definition of $\varphi'^{-1}$ is the following.
\begin{hscode}\SaveRestoreHook
\column{B}{@{}>{\hspre}l<{\hspost}@{}}%
\column{8}{@{}>{\hspre}l<{\hspost}@{}}%
\column{20}{@{}>{\hspre}l<{\hspost}@{}}%
\column{31}{@{}>{\hspre}l<{\hspost}@{}}%
\column{E}{@{}>{\hspre}l<{\hspost}@{}}%
\>[B]{}\varphi^{\prime-1}{}\<[8]%
\>[8]{}\colon(\Conid{A}\times(\Conid{B}\to \Conid{C}))\mathbin{+}\Conid{C}{}\<[31]%
\>[31]{}\to \forall \Conid{F}\hsforall \colon\FN{Pointed}.~(\Conid{A}\to \Conid{F}\;\Conid{B})\to \Conid{F}\;\Conid{C}{}\<[E]%
\\
\>[B]{}\varphi^{\prime-1}\;(\textsf{inl}\;(\Varid{a},\Varid{k})){}\<[20]%
\>[20]{}\mathrel{=}\lambda \Varid{g}.~\Conid{F}\;(\Varid{k})\;(\Varid{g}\;\Varid{a}){}\<[E]%
\\
\>[B]{}\varphi^{\prime-1}\;(\textsf{inr}\;\Varid{c}){}\<[20]%
\>[20]{}\mathrel{=}\lambda \anonymous .~\eta_C\;\Varid{c}{}\<[E]%
\ColumnHook
\end{hscode}\resethooks
In order to define \ensuremath{\varphi^\prime}, notice that \ensuremath{\Conid{R'}\;\Conid{C}\mathrel{=}(\Conid{A}\;\times\;(\Conid{B}\to \Conid{C}))\mathbin{+}\Conid{C}}
is a pointed functor on \ensuremath{\Conid{C}}, with \ensuremath{\eta \mathrel{=}\textsf{inr}}.  Hence, we can
instantiate a polymorphic function \ensuremath{\Varid{h}\colon\Conid{T'}} to the pointed functor
\ensuremath{\Conid{R'}} to obtain \ensuremath{h_{R'}\colon(\Conid{A}\to \Conid{R'}\;\Conid{B})\to \Conid{R'}\;\Conid{C}}, or equivalently \ensuremath{h_{R'}\colon(\Conid{A}\to ((\Conid{A}\times(\Conid{B}\to \Conid{B}))\mathbin{+}\Conid{B}))\to (\Conid{A}\times(\Conid{B}\to \Conid{C}))\mathbin{+}\Conid{C}}.

\begin{hscode}\SaveRestoreHook
\column{B}{@{}>{\hspre}l<{\hspost}@{}}%
\column{E}{@{}>{\hspre}l<{\hspost}@{}}%
\>[B]{}\varphi^\prime\colon(\forall \Conid{F}\hsforall \colon\FN{Pointed}.~(\Conid{A}\to \Conid{F}\;\Conid{B})\to \Conid{F}\;\Conid{C})\to (\Conid{A}\times(\Conid{B}\to \Conid{C}))\mathbin{+}\Conid{C}{}\<[E]%
\\
\>[B]{}\varphi^\prime\;\Varid{h}\mathrel{=}h_{R'}\;(\lambda \Varid{a}.~\textsf{inl}\;(\Varid{a},id_B)){}\<[E]%
\ColumnHook
\end{hscode}\resethooks

We can play the same game in the case where the universally quantified
functor is an applicative functor.
$$T''=\ensuremath{\forall \Conid{F}\hsforall \colon\FN{Applicative}.~(\Conid{A}\to \Conid{F}\;\Conid{B})\to \Conid{F}\;\Conid{C}}.$$
An applicative functor is % functor
% which is strong lax monoidal with respect to the cartesian
% structure. That is,
a pointed functor $F$ equipped with a multiplication operation
$\ensuremath{\star}_{X,Y} : (F X \times F Y) \to F (X\times Y)$ natural in
$X$ and $Y$, which is coherent with the
point~(a precise definition is given in Section~\ref{sec:applicative}).
%\cite{Jaskelioff:MSFP2012,mcbride08:applicative-programming}.)
%
An inhabitant of $T''$ is a function $h = \lambda g.\, r$, where $r$ can be
obtained by applying the argument $g$ to $n$ elements of $A$ to obtain an $(\ensuremath{\Conid{F}\;\Conid{B}})^n$, then joining the results with the multiplication of the
applicative functor to obtain an $\ensuremath{\Conid{F}\;(B^n)}$, and finally applying a
function \ensuremath{B^n\to \Conid{C}} which takes $n$ elements of $B$ and yields a $C$.
$$
\ensuremath{\forall \Conid{F}\hsforall \colon\FN{Applicative}.~(\Conid{A}\to \Conid{F}\;\Conid{B})\to \Conid{F}\;\Conid{C}} 
\xymatrix{~ \ar@/^/@<1ex>[r]^{\varphi''}
               \ar@{}[r]|{\cong}
& 
     \ar@/^/@<1ex>[l]^{\varphi''^{-1}}~} 
\sum_{n\in\Nat} (A^n\times(B^n \to  C)).
$$
The definition of $\varphi''^{-1}$ is the following.
\begin{hscode}\SaveRestoreHook
\column{B}{@{}>{\hspre}l<{\hspost}@{}}%
\column{9}{@{}>{\hspre}l<{\hspost}@{}}%
\column{19}{@{}>{\hspre}l<{\hspost}@{}}%
\column{37}{@{}>{\hspre}l<{\hspost}@{}}%
\column{E}{@{}>{\hspre}l<{\hspost}@{}}%
\>[B]{}\varphi^{\prime\prime-1}{}\<[9]%
\>[9]{}\colon(\sum_{n\in\Nat}\;(A^n\times(B^n\to \Conid{C}))){}\<[37]%
\>[37]{}\to \forall \Conid{F}\hsforall \colon\FN{Applicative}.~(\Conid{A}\to \Conid{F}\;\Conid{B})\to \Conid{F}\;\Conid{C}{}\<[E]%
\\
\>[B]{}\varphi^{\prime\prime-1}\;(\Varid{n},\Varid{as},\Varid{k}){}\<[19]%
\>[19]{}\mathrel{=}\lambda \Varid{g}.~\Conid{F}\;(\Varid{k})\;(\FN{collect}_n\;\Varid{g}\;\Varid{as}){}\<[E]%
\ColumnHook
\end{hscode}\resethooks

\noindent 
Here, \ensuremath{\FN{collect}_n\colon\forall \Conid{F}\hsforall \colon\FN{Applicative}.~(\Conid{A}\to \Conid{F}\;\Conid{B})\to \Conid{A}^n\to \Conid{F}\;(\Conid{B}^n)} is the function that uses the applicative multiplication to collect
all the applicative effects, i.e. 
$$
\ensuremath{\FN{collect}_n\;\Varid{h}\;(\Varid{x}_{1},\ldots,{x_n})\mathrel{=}\Varid{h}\;\Varid{x}_{1}\star\ldots\star\Varid{h}\;{x_n}}.
$$
In order to define \ensuremath{\varphi^{\prime\prime}}, notice that $\ensuremath{\Conid{R''}\;\Conid{C}} =\sum_{n\in\Nat}
(A^n\times(B^n \to C))$ is an applicative functor on \ensuremath{\Conid{C}}, with $\eta c
= (0,*,\lambda x:1.\ c)$, where $*$ is the sole inhabitant of $1$, and
the multiplication is given by
\begin{hscode}\SaveRestoreHook
\column{B}{@{}>{\hspre}l<{\hspost}@{}}%
\column{E}{@{}>{\hspre}l<{\hspost}@{}}%
\>[B]{}(\Varid{n},\Varid{as},\Varid{k})\star(\Varid{n'},\Varid{as'},\Varid{k'})\mathrel{=}(\Varid{n}\mathbin{+}\Varid{n'},\Varid{as}\plus \Varid{as'},\lambda \Varid{bs}.~(\Varid{k}\;(\FN{take}\;\Varid{n}\;\Varid{bs}),\Varid{k'}\;(\FN{drop}\;\Varid{n}\;\Varid{bs}))){}\<[E]%
\ColumnHook
\end{hscode}\resethooks
Hence, we can instantiate a polymorphic function $h : T''$ to the
applicative functor \ensuremath{\Conid{R''}} to obtain \ensuremath{h_{R''}\colon(\Conid{A}\to \Conid{R''}\;\Conid{B})\to \Conid{R''}\;\Conid{C}}, or
equivalently \ensuremath{h_{R''}\colon(\Conid{A}\to \sum_{n\in\Nat}\;(A^n\times(B^n\to \Conid{B})))\to \sum_{n\in\Nat}\;(A^n\times(B^n\to \Conid{C}))}.

\begin{hscode}\SaveRestoreHook
\column{B}{@{}>{\hspre}l<{\hspost}@{}}%
\column{E}{@{}>{\hspre}l<{\hspost}@{}}%
\>[B]{}\varphi^{\prime\prime}\colon(\forall \Conid{F}\hsforall \colon\FN{Applicative}.~(\Conid{A}\to \Conid{F}\;\Conid{B})\to \Conid{F}\;\Conid{C})\to \sum_{n\in\Nat}\;(A^n\times(B^n\to \Conid{C})){}\<[E]%
\\
\>[B]{}\varphi^{\prime\prime}\;\Varid{h}\mathrel{=}h_{R''}\;(\lambda \Varid{a}.~(\mathrm{1},\Varid{a},id_B)){}\<[E]%
\ColumnHook
\end{hscode}\resethooks

We have seen three different isomorphisms which yield concrete
representations for second-order functionals which quantify over a
certain class of functors (plain functors, pointed functors, and
applicative functors, respectively.)
The construction of each of the three isomorphisms has a similar
structure, so it is natural to ask what the common pattern is.
In order to answer this question and provide a general representation
theorem we will make good use of the power of abstraction of category
theory.

\section{Categorical preliminaries}
\label{sec:catprelim}

% In this section prove a general representation theorem for
% second-order functionals of the form:
% \[
% |forall F. (A -> F B) -> F X|
% \]
% \noindent where |F| is an endofunctor which may carry structure. That
% is, |F| maybe range over plain functors, but also over, pointed
% functors, applicative functors, monads, etc.

% \subsection{A Representation for Plain Functors}

% We find the representation for the functional above in the case of
% plain functors. For this, we make use of the Yoneda lemma.

%
% \hl{However, the results can be generalised to other
% self-enriched categories with enough structure (cartesian closed,
% complete, small coproducts).}
%
% We will denote by |X| both a set |X| and the
% identity function on |X|, although we will also denote the latter by $id_X$.
%
%
% We will write |X => Y| for the exponentiation of the set |Y| to the set |X|.
%

% Hom-Sets
A category $\cC$ is said to be \emph{locally small} when the
collection of morphisms between any two objects $X$ and $Y$ is a
proper set. 
A locally small category is said to be \emph{small} if its collection
of objects is a proper set.
%
%Hom Set
We denote by $\Hom{\cC}{X}{Y}$ the (not necessarily small) set of
morphisms between $X$ and $Y$ and extend it to a functor
$\Hom{\cC}{X}{-}$ (the covariant Hom functor). When the category is
$\Set$ (the category of sets and total functions) we will omit the
category from the notation and write $X \tto
Y$. %, conflating it with exponentiation.
% Functor categories
Given two categories $\cC$ and $\cD$ we will denote by $\FC{\cC}{\cD}$
the category which has as objects functors $F: \cC\to\cD$ and natural
transformations as morphisms. 
%
% Subcategories
A \emph{subcategory} $\cD$ of a category $\cC$ consists of a
collection of objects and morphisms of $\cC$ which is closed under the
operations domain, codomain, composition, and identity. When, for
every object $X$ and $Y$ of $\cD$ subcategory of $\cC$, we have
$\Hom{\cD}{X}{Y} = \Hom{\cC}{X}{Y}$, we say that $\cD$ is a
\emph{full} subcategory of $\cC$.

\subsection{The Yoneda lemma}

The main result of this article hinges on the following famous result:

\begin{theorem}[Yoneda lemma]
Given a locally small category $\cC$, the Yoneda Lemma establishes the
following isomorphism:
\[
  \Hom{\FC{\cC}{\Set}}{(\Hom{\cC}{B}{-})}{F} \quad\cong\quad F\,B
\]%
\noindent natural in object $B : \cC$ and functor $F: \cC \to \Set$.  

That is, the set $F\,B$ is naturally isomorphic to the set of natural
transformations between the functor $(\Hom{\cC}{B}{-})$ and the
functor $F$.

Naturality in $B$ means that given any morphism $h: B \to C$, the
following diagram commutes:
\[
\xymatrix{
(\Hom{\FC{\cC}{\Set}}{(\Hom{\cC}{B}{-})}{F}) \ar[r]^-{\cong} 
  \ar[d]_{\Hom{\FC{\cC}{\Set}}{(\Hom{\cC}{h}{-})}{F}}  &  FB \ar[d]^{Fh} \\
(\Hom{\FC{\cC}{\Set}}{(\Hom{\cC}{C}{-})}{F}) \ar[r]_-{\cong}& FC 
}
\]%
Naturality in $F$ means that given any natural transformation $\alpha:
F \to G$, the following diagram commutes:
\[
\xymatrix{
(\Hom{\FC{\cC}{\Set}}{(\Hom{\cC}{B}{-})}{F}) \ar[r]^-{\cong} 
  \ar[d]_{\Hom{\FC{\cC}{\Set}}{(\Hom{\cC}{B}{-})}{\alpha}}  &  FB \ar[d]^{\alpha_B} \\
(\Hom{\FC{\cC}{\Set}}{(\Hom{\cC}{B}{-})}{G}) \ar[r]_-{\cong}& GB 
}
\]%
The construction of the isomorphism is as follows:
\begin{itemize}
\item Given a natural transformation $\alpha: (\Hom{\cC}{B}{-}) \to F$, its
component at $B$ is a function $\alpha_B: (\Hom{\cC}{B}{B}) \to FB$. Then, the corresponding element of $F\,B$ is $\alpha_B(\ensuremath{id_B})$. 

\item For the other direction, given $x : F\,B$, we construct a natural
transformation $\alpha: (\Hom{\cC}{B}{-}) \to F$ in the following manner:
the component at each object $C$, namely $\alpha_C: (\Hom{\cC}{B}{C}) \to
FC$ is given by $\lambda f : B \to C.\, F(f)(x)$.

\end{itemize}
We leave as an exercise for the reader to check that this
construction indeed yields a natural isomorphism.
\end{theorem}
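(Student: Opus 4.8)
The plan is to discharge the four obligations that the phrase ``natural isomorphism'' bundles together: that the proposed backward map actually produces a natural transformation, that the two maps are mutually inverse, and that the resulting family of isomorphisms is natural in $B$ and in $F$. Throughout I would abbreviate the forward map as $\Phi(\alpha) = \alpha_B(id_B)$ and the backward map as $\Psi(x)$, the transformation with components $\Psi(x)_C = \lambda f.\, F(f)(x)$, exactly as constructed above.

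First I would check that $\Psi(x)$ is genuinely a natural transformation $(\Hom{\cC}{B}{-}) \to F$. Fix $g : C \to D$ and take $f : B \to C$. Going down-then-right around the naturality square yields $F(g)\bigl(F(f)(x)\bigr)$, while going right-then-down yields $F(g \circ f)(x)$; these coincide because $F$ is a functor. So $\Psi$ is well defined, and this is the only place functoriality on composites is used.

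Next, the two composites. The direction $\Phi(\Psi(x)) = F(id_B)(x) = x$ is immediate from preservation of identities. The reverse, $\Psi(\Phi(\alpha)) = \alpha$, is the heart of the lemma and the step I expect to be the real obstacle. Concretely one must show, for every object $C$ and every $f : B \to C$, that $F(f)\bigl(\alpha_B(id_B)\bigr) = \alpha_C(f)$. This is precisely the naturality square of $\alpha$ instantiated at the morphism $f$ and evaluated at the element $id_B$: the left-bottom path sends $id_B$ to $F(f)\bigl(\alpha_B(id_B)\bigr)$, and the top-right path sends it to $\alpha_C(f \circ id_B) = \alpha_C(f)$. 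Thus the round trip recovers $\alpha$ exactly because $\alpha$ was natural to begin with; that single appeal to naturality is what powers the whole isomorphism.

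Finally I would verify the two displayed naturality conditions. For naturality in $B$, given $h : B \to C$, the left vertical map is precomposition with $(\Hom{\cC}{h}{-})$; chasing $\alpha$ around the square reduces to the identity $F(h)\bigl(\alpha_B(id_B)\bigr) = \alpha_C(h)$, which is again the naturality of $\alpha$ at $h$, so no new work is required. For naturality in $F$, given a natural transformation $\beta : F \to G$, the left vertical map is postcomposition with $\beta$, and both routes send $\alpha$ to $\beta_B\bigl(\alpha_B(id_B)\bigr)$ on the nose, so that square commutes by inspection. Collecting these four facts yields the natural isomorphism, with the inverse-on-one-side computation being the only conceptually substantive step.
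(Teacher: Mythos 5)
Your proposal is correct and uses exactly the two maps the paper constructs ($\alpha \mapsto \alpha_B(id_B)$ and $x \mapsto \lambda f.\,F(f)(x)$); the paper itself stops at the construction and leaves the verification as an exercise, which your four checks (well-definedness of the backward map, the two round trips, and the two naturality squares) carry out correctly in the standard way. In particular you correctly identify that the only substantive step is $\Psi(\Phi(\alpha)) = \alpha$, which is the naturality square of $\alpha$ evaluated at $id_B$, and that the naturality-in-$B$ square reduces to that same identity.
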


In order to make the relation between the programs and the category
theory more evident, it is convenient to express the Yoneda lemma in
end form:% \footnote{Pedantically, ends should be taken over a small
  % category (not just locally-small.) However, when the set is well
  % defined anyway (as above), we will use the end notation, as it is
  % more intuitive than working with Hom sets. }
\begin{equation}
  \label{eq:yoneda_end}
\int_{X\in\cC}    (\Hom{\cC}{B}{X}) \tto F\,X  \quad\cong \quad F\,B  
\end{equation}
The intuition is that an end corresponds to a universal quantification
in a programming language~\cite{BainbridgeFSS90}, and therefore the
above isomorphism could be understood as stating an isomorphism of
types:
\[
  \ensuremath{\forall \Conid{X}\hsforall .~(\Conid{B}\to \Conid{X})\to \Conid{F}\;\Conid{X}} \quad\cong\quad FB
\]%
 Hence, functional programmers not used to
categorical ends can get the intuitive meaning just by replacing in
their minds ends by universal quantifiers. 
The complete definition of end can be found in
Appendix~\ref{sec:ends}. More details can be found in the standard
reference~\cite{macLaneS:catwm}.

% \begin{remark}\label{rem:nat_transf_end}
% Given categories $\cC, \cD$, and functors $F, G: \cC \to \cD$,
% \begin{equation}
%   \label{eq:ntasends}
%  \Hom{\FC{\cC}{\cD}}{F}{G} \quad\cong\quad \int_{X\in\cC} \Hom{\cD}{F\,X}{G\,X}   
% \end{equation}
% Hence, if the end on the right-hand side exists, then it is
% equivalent to the set of natural transformations between $F$ and
% $G$. This equivalence justifies the definition in Haskell of the type of natural
% transformations between functors |f| and |g| as
% \begin{code}
% type NatTr f g = forall x!. f x -> g x
% \end{code}
%
% Hence, by the proposition above, the set $A \tto F\,B$ is isomorphic
% to the set of natural transformations from $R$ to $F$.
%\end{remark}

A simple application of the Yoneda lemma which will be used in the next
section is the following proposition.

\begin{prop}\label{prop:yoneda1}
  Consider an endofunctor $F: \Set \to \Set$, and the functor $R : \Set\times\op\Set\times\Set\to\Set$ defined as $R\,(A,B,X) = A\times(B \tto X)$, 
$R\,(f,g,h)(a,x)= (f a, g \circ x \circ h)$, 
where we write $R_{A,B}X$ for $R\,(A,B,X)$. Then 
  \begin{equation}
    \label{eq:atofb}
A \tto F\,B ~\cong~  \Hom{\FC{\Set}{\Set}}{R_{A,B}}{F}
  \end{equation}
\end{prop}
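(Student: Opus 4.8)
The plan is to recognise $R_{A,B}$ as a \emph{copower} (an $A$-indexed coproduct) of the representable functor $(B \tto -)$ and then reduce the statement to the Yoneda lemma already established. The elementary fact driving this is that for a set $A$ one has $A \times S \cong \coprod_{a\in A} S$ in $\Set$; taking $S = (B \tto X)$ gives, naturally in $X$,
\[
  R_{A,B}\,X \;=\; A \times (B \tto X) \;\cong\; \coprod_{a\in A}\,(B \tto X).
\]
Because coproducts in the functor category $\FC{\Set}{\Set}$ are computed pointwise, this exhibits $R_{A,B} \cong \coprod_{a\in A}(B \tto -)$. I would pause here only to check that the action of $R_{A,B}$ on a morphism $h$, which by hypothesis is $R_{A,B}(h)(a,x) = (a, h\circ x)$, coincides with post-composition in each summand of the copower; it does, so the decomposition is one of functors, not merely of the underlying sets.

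With the decomposition in hand I would proceed in three steps. First, invoke the universal property of the coproduct to turn a map out of a coproduct into a product of maps,
\[
  \Hom{\FC{\Set}{\Set}}{\coprod_{a\in A}(B \tto -)}{F}
  \;\cong\;
  \prod_{a\in A}\Hom{\FC{\Set}{\Set}}{(B \tto -)}{F}.
\]
Second, apply the Yoneda lemma to each factor, $\Hom{\FC{\Set}{\Set}}{(B \tto -)}{F} \cong F\,B$. Third, reassemble the product, $\prod_{a\in A} F\,B = (A \tto F\,B)$, which is precisely the left-hand side of \eqref{eq:atofb}. Composing the three isomorphisms gives the bijection claimed.

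A more direct route, closer to the end-based reading of the paper, records $\Hom{\FC{\Set}{\Set}}{R_{A,B}}{F}$ as the end $\int_{X}\big((A\times(B\tto X)) \tto F\,X\big)$, curries the $A$ out to obtain $\int_X \big(A \tto ((B\tto X)\tto F\,X)\big)$, pulls $A \tto -$ through the end (it preserves limits, being a right adjoint to $A \times -$), and finishes with the end form \eqref{eq:yoneda_end} of Yoneda to reach $A \tto F\,B$. Either way, the isomorphism can be written explicitly and agrees with the maps of the introduction: $\phi : A \tto F\,B$ corresponds to the natural transformation with components $\alpha_X(a,k) = F(k)(\phi\,a)$, with inverse $\alpha \mapsto \big(a \mapsto \alpha_B(a,\mathrm{id}_B)\big)$.

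I do not expect any single hard step: once the copower structure is spotted, the proposition is an $A$-indexed family of instances of Yoneda. The only real care is naturality bookkeeping — verifying that the component formula $\alpha_X(a,k)=F(k)(\phi\,a)$ is natural in $X$ and that the two assignments are mutually inverse — which reduces, exactly as in the proof of the Yoneda lemma, to evaluating the naturality square of $\alpha$ at $\mathrm{id}_B$.
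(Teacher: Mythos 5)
Your proof is correct, and your primary route is genuinely different from the paper's. The paper runs a single end calculation starting from $A \tto F\,B$: it applies Yoneda in end form to rewrite $F\,B$ as $\int_{X}((B\tto X)\tto F\,X)$, pulls the end out through the Hom functor (Remark on Hom preserving ends), curries to assemble $A\times(B\tto X)$, and finally reads off the end $\int_X (R_{A,B}\,X \tto F\,X)$ as the set of natural transformations $\Hom{\FC{\Set}{\Set}}{R_{A,B}}{F}$. Your copower decomposition $R_{A,B}\cong\coprod_{a\in A}(B\tto -)$ replaces the currying-and-end manipulation by the universal property of coproducts in $\FC{\Set}{\Set}$ plus an $A$-indexed family of plain Yoneda instances; this is more elementary (no ends at all), makes the slogan ``an $A$-indexed family of instances of Yoneda'' literal, and makes naturality in $A$ transparent since it is just functoriality of the copower. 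What the paper's formulation buys is continuity with the rest of the development: the same three moves (Yoneda in end form, Hom preserves ends, currying) are reused verbatim in the proofs of Theorems~\ref{thm:representation1} and~\ref{thm:representation2}, where the copower trick would not apply as cleanly because the Hom being computed lives in a structured category $\cat{F}$ rather than in $\FC{\Set}{\Set}$. Your ``more direct route'' in the third paragraph is essentially the paper's calculation read from right to left, and your explicit witnesses $\alpha_X(a,k)=F(k)(\phi\,a)$ and $\alpha\mapsto(a\mapsto\alpha_B(a,\mathrm{id}_B))$ agree exactly with the $\alpha_F$, $\alpha_F^{-1}$ the paper records after its calculation. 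The only point worth a second glance is your check that the copower decomposition respects the functorial action; since $B$ is a contravariant argument the paper's formula $R(f,g,h)(a,x)=(fa,g\circ x\circ h)$ should be read with the composite typed as $B'\to B\to X\to X'$, and restricted to the $X$ variable it is indeed post-composition on each summand, as you say.
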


\begin{proof}
 
  \begin{calculation}
 \begin{array}[b]{cl}
   & A \tto F\,B \\
\cong & \comment{Yoneda  }\\
   &  A \tto \int_{X}   ( (B \tto X) ~\tto~ F\,X) \\
\cong & \comment{Hom functors preserve ends (Remark~\ref{remark:ends_as_limits})} \\
      &  \int_{X}  A \tto  ((B \tto X) ~\tto~ F\,X) \\
\cong & \comment{Adjoints (currying)} \\
      &  \int_{X}  A\times(B \tto X) ~\tto~ F\,X \\
\cong & \comment{Definition of \ensuremath{R_{A,B}}} \\
      &  \int_{X}  R_{A,B}\,X ~\tto~ F\,X \\
\cong & \comment{Natural transformations as ends} \\
      &  \Hom{\FC{\Set}{\Set}}{R_{A,B}}{F}
  \end{array}
\end{calculation}

More concretely, the isomorphism is witnessed by the following functions:
$$
\begin{array}{l@{\,}c@{\,}l}
\alpha_F & : &  \ensuremath{(\Conid{A}\to \Conid{F}\;\Conid{B})} \to  \Hom{\FC{\Set}{\Set}}{R_{A,B}}{F} \\
\alpha_F(f) & = & \tau\quad\text{where~} 
                    \begin{array}[t]{l@{\,}c@{\,}l}
                       \tau_X & : & A\times(B\tto X) \to F\,X \\
                       \tau_X (a,g) & = & F(g)(f(a)) \\
                     \end{array}
\\
\alpha_F^{-1} & : &   (\Hom{\FC{\Set}{\Set}}{R_{A,B}}{F}) \to \ensuremath{\Conid{A}\to \Conid{F}\;\Conid{B}} \\
\alpha_F^{-1}(h) & = & \lambda a.~ h_B\,(a,\ensuremath{\Varid{id}}_B)
\end{array}
$$%
This isomorphism is natural in $A$ and $B$.
\end{proof}

\subsection{Adjunctions}

An adjunction is a relation between two categories which is weaker
than isomorphism of categories.
\begin{defn}[Adjunction]
Given categories $\cC$ and $\cD$, functors $L: \cC \to
\cD$ and $R: \cD \to \cC$, an \emph{adjunction} is given by a tuple
$(L,R,\ladj{-},\radj{-})$, where $\ladj{-}$ and $\radj{-}$ are the
components of the following isomorphism:
\begin{equation}\label{eq:adjunction-iso}
      \ladj{-} : \Hom{\cD}{L\,C}{D} 
                 \quad \cong \quad 
                 \Hom{\cC}{C}{R\,D} : \radj{-}  
\end{equation}
\noindent which is natural in $C\in\cC$ and $D\in\cD$.
That is, for $f : L\,C \to D$ and $g : C \to R\,D$ we have
\begin{equation}
  \label{eq:adjunction-equiv}
  \ladj{f} = g \quad\Leftrightarrow\quad f = \radj{g}
\end{equation}
The components of the isomorphism $\ladj{-}$ and $\radj{-}$ are called
\emph{adjuncts}.
That the isomorphism is natural means that for any $C,C'\in\cC$;
$D,D'\in\cD$; $h : C'\to C$; $k : D \to D'$; $f : L\,C \to D$; and $g :
C \to R\,D$, the following equations hold:
\begin{eqnarray}
  \label{eq:adjunction-naturality}
R\,k \circ \ladj{f} \circ h & = & \ladj{k \circ f \circ L\,h}
\label{eq:adj-nat1}
\\
k \circ \radj{g} \circ L\,h & = & \radj{R\,k\circ g \circ h}  
\label{eq:adj-nat2}
\end{eqnarray}

We indicate the categories involved in an adjunction by writing $\cC
\rightharpoonup \cD$ (note the asymmetry in the notation), and often
leave the components of the isomorphism implicit and simply write
$L\dashv R$.

\end{defn}

% The characterisation of adjunctions in terms of unit and counit and
% the characterisation in terms of adjuncts are indeed equivalent.
The \emph{unit} $\eta$ and \emph{counit} $\varepsilon$ of the
adjunction are defined as:
\begin{equation}
  \label{eq:eta-epsilon-from-adjuncts}
  \eta = \ladj{id} \qquad\qquad \varepsilon= \radj{id};
\end{equation}
The adjuncts can be characterised in terms of the unit and counit:
\begin{equation}
  \label{eq:adjuncts-from-unit-counit}
  \ladj{f} = R\,f\circ\eta
  \qquad\qquad
  \radj{g} = \varepsilon \circ L\,g.
\end{equation}
 For more details, see~\cite{macLaneS:catwm,awodey2006ct}.

%\hl{explain why the two characterisations are equivalent}

% ----- Configure Emacs -----
%
% Local Variables: ***
% mode: latex ***
% mmm-classes: literate-haskell-lhs2TeX ***
% End: ***

%Representation for second-order functionals

\section{A representation theorem for second-order functionals}
\label{sec:representation}

%%%%%%%%%%%%%%%%%%%%%%%%%%%%%%%%%%%%%%%%%%%%%%%%%%%%%%%%%%%
Consider a small subcategory $\cat{F}$ of $\FC{\Set}{\Set}$, the
category of endofunctors on $\Set$.\footnote{ We are interested in
  functors representable in a programming language, such as realisable
  functors~\cite{BainbridgeFSS90,ReynoldsP93}. Therefore, it is
  reasonable to assume smallness.}  By Yoneda,

\newcommand{\F}{F}

  \begin{equation}
    \label{eq:hoyoneda}
   \int_{\F\in\cat{F}} (\Hom{\cat{F}}{G}{\F}) \tto H\,\F \quad\cong\quad H\,G    
  \end{equation}
%
  % This isomorphism gives us a general framework in which to prove
  % representation theorems.
  Note that $G$ is any functor in $\cat{F}$ and $H$ is any functor
  $\cat{F}\to\Set$. In particular, given a set $X$, we obtain the
  functor $(-X) : \cat{F}\to\Set$ that applies a functor in $\cat{F}$
  to $X$.
  That is, the action on objects is $F \mapsto F\,X$. The above
  equation, specialised to $(-X)$ is
\begin{equation}
  \label{eq:hoyonedax}
\forall G\in\cat{F}.\qquad
  \int_\F (\Hom{\cat{F}}{G}{\F}) \tto \F\,X \quad\cong\quad G\,X 
\end{equation}

For example, let $R_{A,B}\,X=A\times(B \tto X)$ as in
Proposition~\ref{prop:yoneda1}, and let $\cat{E}$ be a small
full sub-category of $\FC{\Set}{\Set}$ such that $R_{A,B} \in \cat{E}$.

  Then, we calculate
\begin{calculation}
%\[ 
\begin{array}{cl}
   &  \int_{\F\in\cat{E}} (A \tto \F\, B) \tto \F\, X
   \\
\cong & \comment{Equation~(\ref{eq:atofb})} \\
   &   \int_{\F\in\cat{E}} (\Hom{\cat{E}}{R_{A,B}}{\F}) \tto \F\, X
   \\
\cong & \comment{Equation (\ref{eq:hoyonedax})} \\
      &  R_{A,B}\,X
  \end{array}
%\]
\end{calculation}

That is, we have proven that 
\begin{equation}\label{eq:simpleRepresentation}
  \int_{\F} (A \tto \F\,B) \tto \F\,X ~\cong~ R_{A,B}\,X
\end{equation}
This isomorphism provides a justification for the first isomorphism of
the introduction, namely:
\[
  \ensuremath{\forall \Conid{F}\hsforall \colon\FN{Functor}.~(\Conid{A}\to \Conid{F}\;\Conid{B})\to \Conid{F}\;\Conid{C}}\quad\cong\quad \ensuremath{\Conid{A}\times(\Conid{B}\to \Conid{C})}
\]%

% \hl{Should we write out the isomorphism concretely? -- RO} 
% I don't
% think so. I was thinking of this as a preamble for the unary
% representation theorem which we do give concretely, and of which the
% example is a particular case -- MJ

\subsection{Unary representation theorem}

Let us now consider categories of endofunctors that carry some structure.
For example, a category $\cat{F}$ may be the category of monads and
monad morphisms, or the category of applicative functors and
applicative morphisms. Then we have a functor that forgets the extra
structure and yields a plain functor. For example, the forgetful
functor $U: \Mon \to \cat{E}$ maps a monad $(T,\mu,\eta)\in \Mon$ to
the endofunctor $T$, forgetting that the functor has a monad structure
given by $\mu$ and $\eta$.
It often happens that this forgetful functor has a left adjoint $(-)^*
: \cat{E}\to\cat{F}$. Such an adjoint takes an arbitrary endofunctor
$F$ and constructs the \emph{free} structure on $F$. For example, in
the monad case, $F^*$ would be the free monad on $F$. The adjunction
establishes the following natural isomorphism between morphisms in
$\cat{F}$ and $\cat{E}$:
\begin{equation}
  \label{eq:adjoint}
   \Hom{\cat{F}}{E^*}{F} \quad\cong\quad \Hom{\cat{E}}{E}{UF}
\end{equation}
In this situation we have the following representation theorem.

\begin{theorem}[Unary representation]\label{thm:representation1}
  Consider an adjunction $((-)^*,U,\ladj{-},\radj{-}) :
  \cat{E}\rightharpoonup\cat{F}$, where $\cat{F}$ is
  small and $\cat{E}$ is a full
  subcategory of $\FC{\Set}{\Set}$ such that the family of functors
  $R_{A,B}\,X= A\times (B \tto X)$ is in $\cat{E}$.  Then, we have the following isomorphism natural in $A$, $B$,
  and $X$.
\begin{equation}
  \label{eq:representation}
 \int_{\F}  (A \tto U\!\F\,B) \tto U\!\F\,X \quad\cong\quad UR_{A,B}^*\,X 
\end{equation}
\end{theorem}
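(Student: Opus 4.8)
The plan is to chain together the three isomorphisms already in hand---Proposition~\ref{prop:yoneda1}, the free/forgetful adjunction~(\ref{eq:adjoint}), and the higher-order Yoneda isomorphism~(\ref{eq:hoyoneda})---in exactly the calculational style used in the example immediately preceding the theorem. The shape of the target, $U R_{A,B}^*\,X$, is the giveaway: $R_{A,B}^*$ (the free $\cat{F}$-structure on the representing functor $R_{A,B}$) should play the role of the object $G$ consumed by higher-order Yoneda, so the whole argument reduces to rewriting the integrand of the left-hand end until it acquires the form $(\Hom{\cat{F}}{R_{A,B}^*}{\F}) \tto U\F\,X$.

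First I would rewrite the inner domain. For each $\F \in \cat{F}$ the functor $U\F$ is an ordinary endofunctor on $\Set$, so Proposition~\ref{prop:yoneda1} (equation~(\ref{eq:atofb})) gives $A \tto U\F\,B \cong \Hom{\FC{\Set}{\Set}}{R_{A,B}}{U\F}$. Since $\cat{E}$ is a \emph{full} subcategory of $\FC{\Set}{\Set}$ containing both $R_{A,B}$ and every $U\F$, this hom-set is the same as $\Hom{\cat{E}}{R_{A,B}}{U\F}$; this fullness step is precisely what lets me pass to the abstract category $\cat{E}$ of the adjunction. Instantiating the adjunction isomorphism~(\ref{eq:adjoint}) at $E = R_{A,B}$ then converts this into $\Hom{\cat{F}}{R_{A,B}^*}{\F}$. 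Composing the two, the integrand $(A \tto U\F\,B) \tto U\F\,X$ becomes $(\Hom{\cat{F}}{R_{A,B}^*}{\F}) \tto U\F\,X$.

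I would then discharge the end by higher-order Yoneda. Taking $H = (-X)\circ U : \cat{F} \to \Set$, that is the functor $\F \mapsto U\F\,X$, and $G = R_{A,B}^* \in \cat{F}$, equation~(\ref{eq:hoyoneda}) (equivalently~(\ref{eq:hoyonedax})) yields
\[
\int_{\F} (\Hom{\cat{F}}{R_{A,B}^*}{\F}) \tto U\F\,X \;\cong\; U R_{A,B}^*\,X,
\]
which is exactly the right-hand side. Naturality in $A$, $B$, and $X$ is inherited step by step: Proposition~\ref{prop:yoneda1} is natural in $A,B$, the adjunction isomorphism is natural, and evaluation-at-$X$ is functorial, so the composite isomorphism is natural in all three parameters.

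The step demanding the most care is the first rewriting, because it happens \emph{inside} the end: I must check that the composite isomorphism $A \tto U\F\,B \cong \Hom{\cat{F}}{R_{A,B}^*}{\F}$ is natural (dinatural) in $\F$, so that replacing the integrand by an isomorphic one actually preserves the end rather than merely matching it objectwise. This reduces to observing that each constituent is natural in the functor argument---Proposition~\ref{prop:yoneda1} because it is an instance of Yoneda, and~(\ref{eq:adjoint}) by hypothesis---whence their composite is natural in $\F$ and the two ends are genuinely isomorphic. I would also record the explicit witnesses obtained by composing $\alpha_{U\F}$ from Proposition~\ref{prop:yoneda1} with the adjuncts $\ladj{-},\radj{-}$, both to exhibit the isomorphism concretely and to confirm it is the one anticipated by the informal arguments of the introduction.
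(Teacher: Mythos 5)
Your proposal is correct and follows exactly the same route as the paper's proof: rewrite the integrand via Proposition~\ref{prop:yoneda1}, transpose across the adjunction~(\ref{eq:adjoint}), and collapse the end by the higher-order Yoneda isomorphism~(\ref{eq:hoyoneda}), with naturality inherited componentwise. Your extra attention to fullness and to (di)naturality in $\F$ inside the end is a welcome refinement of details the paper leaves implicit, but it does not change the argument.
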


\begin{proof}

  \begin{calculation}
    \begin{array}{cl}
      &   \int_{\F}  (A \tto U\!\F\, B) \tto U\!\F\, X 
      \\
      \cong & \comment{Equation~(\ref{eq:atofb})} \\
      &   \int_{\F} (\Hom{\cat{E}}{R_{A,B}}{U\!\F}) \tto U\!\F\, X
      \\
      \cong & \comment{ \ensuremath{(-)^*} is left adjoint to \ensuremath{U} (see Eq.~\ref{eq:adjoint})} \\
      &   \int_{\F} (\Hom{\cat{F}}{R_{A,B}^*}{\F}) \tto U\!\F\, X
      \\
      \cong & \comment{Yoneda
      } \\
      &  UR_{A,B}^*\,X
    \end{array}
  \end{calculation}

Every isomorphism in the proof is natural in $X$, the first one is
natural in $A$ and $B$, and the last two are natural in
$R_{A,B}$. Therefore, the resulting isomorphism is also natural in $A$
and $B$.
%
%\hl{Should we consider naturality in $A$ and $B$? -- MJ}
%\hl{I believe techinically we need this result for the part about parameterized
%coalgebras. -- RO}
\end{proof}

%\hl{refer to the examples in the introduction} 
Since the free pointed functor on $F$ is simply $F^* = F + Id$, and
the free applicative functor on small functors such as $R_{A,B}$
exists~\cite{Capriotti2014}, this theorem explains all the
isomorphisms in the introduction. Furthermore, it explains the
structure of the representation functor (it is the free construction
on $R_{A,B}$) and what's more, it tells us that the isomorphism is
natural. %And all of this in one fell swoop!

For the sake of concreteness, we present the functions witnessing the
isomorphism in the theorem:
\[
\begin{array}{l@{\,}c@{\,}l}
\varphi & : & (\int_{\F}  (A \tto U\!\F\, B) \tto U\!\F\, X) \to UR_{A,B}^*\,X  \\
\varphi(h) & = & h_{R_{A,B}^*}\,(\alpha^{-1}_{UR_{A,B}^*}(\eta_{R_{A,B}}))
\\
\\
\varphi^{-1} & : &  UR_{A,B}^*\,X \to \int_{\F}  (A \tto U\!\F\, B) \tto U\!\F\, X \\
\varphi^{-1}(r) & = & \tau \quad\text{where~}
\begin{array}[t]{l@{\,}c@{\,}l}
\tau_{\F} & : & (A \tto U\!\F\, B) \tto U\!\F\, X \\
\tau_{\F}(g) & = & (U\,\radj{\alpha_{UF}(g)}_X)(r)  
\end{array} \\
\end{array}
\]%
Here, $\eta$ is the unit of the
adjunction, and $\alpha$ is the isomorphism in
Proposition~\ref{prop:yoneda1}.

\subsection{Generalisation to many functional arguments}

Let us consider functionals of the form
\[
\ensuremath{\forall \Conid{F}\hsforall .~(\Conid{A}_{1}\to \Conid{F}\;\Conid{B}_{1})\to \ldots\to (A_n\to \Conid{F}\;B_n)\to \Conid{F}\;\Conid{X}}
\]%
The representation theorem, Theorem~\ref{thm:representation1}, can be
easily generalised to include the above functional.

\begin{theorem}[N-ary representation]
\label{thm:representation2}
Consider an adjunction $((-)^*,U,\ladj{-},\radj{-}) :
\cat{E}\rightharpoonup\cat{F}$, where $\cat{F}$ is small and $\cat{E}$
is a full subcategory of $\FC{\Set}{\Set}$ closed under coproducts
such that the family of functors $R_{A,B}\,X= A\times (B \tto X)$ is
in $\cat{E}$.
%  Let $\cat{F}$ be a small category of endofunctors such that there is
%  an adjunction $(-)^* \dashv U : \cat{E}\rightharpoonup\cat{F}$. 
Let
  $A_i, B_i$ be sets for $i\in\{1,\dots,n\}, n\in\Nat$. %  and let $R_{A,B}\, X
  % = A\times(B\tto X)$.
  Then, we have the following isomorphism
\begin{equation}
  \label{eq:n-ary_representation}
 \int_{\F}  (\prod_i (A_i \tto U\!\F\,B_i)) \tto U\!\F\,X 
   \quad\cong\quad 
   U(\sum_i R_{A_i,B_i})^*\,X 
\end{equation}
\noindent natural in $A_i$, $B_i$, and $X$.
\end{theorem}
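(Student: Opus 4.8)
The plan is to reduce the $n$-ary statement to the unary Theorem~\ref{thm:representation1} by folding the product of the $n$ functional arguments into a single argument via a coproduct, after which the three-step calculation used for Theorem~\ref{thm:representation1} (Proposition~\ref{prop:yoneda1}, the adjunction, and Yoneda) can be replayed almost verbatim with $R_{A,B}$ replaced by $\sum_i R_{A_i,B_i}$. Concretely, I would first rewrite the domain of the integrand while still under the end. For each index $i$, Equation~(\ref{eq:atofb}) of Proposition~\ref{prop:yoneda1} supplies an isomorphism $A_i \tto U\F\,B_i \cong \Hom{\cat{E}}{R_{A_i,B_i}}{U\F}$; since it is built from Yoneda it is natural in $U\F$, hence dinatural in $\F$, so the substitution is legitimate inside $\int_\F$. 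Forming the product over $i$ turns the domain into $\prod_i \Hom{\cat{E}}{R_{A_i,B_i}}{U\F}$.

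The key new step is to collapse this product of hom-sets into a single one. By the universal property of the coproduct, a family of arrows out of the $R_{A_i,B_i}$ is the same datum as a single arrow out of their coproduct:
\[
\prod_i \Hom{\cat{E}}{R_{A_i,B_i}}{U\F} \;\cong\; \Hom{\cat{E}}{\sum_i R_{A_i,B_i}}{U\F}.
\]
The hypothesis that $\cat{E}$ is closed under coproducts guarantees that $\sum_i R_{A_i,B_i}$ is again an object of $\cat{E}$, and fullness of $\cat{E}$ in $\FC{\Set}{\Set}$ guarantees that this (pointwise) coproduct is still a coproduct in $\cat{E}$; together these deliver the displayed bijection, naturally in $\F$.

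At this point the integrand reads $\Hom{\cat{E}}{\sum_i R_{A_i,B_i}}{U\F} \tto U\F\,X$, which is exactly the second line of the proof of Theorem~\ref{thm:representation1} with $R_{A,B}$ replaced by the object $\sum_i R_{A_i,B_i}\in\cat{E}$. I would then finish as there: apply the adjunction~(\ref{eq:adjoint}) to rewrite the integrand as $\Hom{\cat{F}}{(\sum_i R_{A_i,B_i})^*}{\F} \tto U\F\,X$, and apply the higher-order Yoneda lemma (Equation~(\ref{eq:hoyoneda}) with $H\,\F = U\F\,X$) to collapse the end to $U(\sum_i R_{A_i,B_i})^*\,X$. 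For naturality, every step is natural in $X$, the first step is natural in each $A_i$ and $B_i$, and the coproduct, adjunction, and Yoneda steps are natural in $\sum_i R_{A_i,B_i}$, so the composite isomorphism is natural in all the $A_i$, $B_i$, and $X$ as claimed.

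I expect the only genuinely delicate point to be the coproduct step: one must check that the coproduct taken in the ambient category $\FC{\Set}{\Set}$ lands in $\cat{E}$ and retains its universal property there — which is precisely why closure under coproducts and fullness are assumed — and that the resulting bijection is natural in $\F$ so that it may be carried out under the end. Everything else is a mechanical re-run of the unary argument.
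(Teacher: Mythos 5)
Your proposal is correct and follows essentially the same path as the paper's own proof: apply Proposition~\ref{prop:yoneda1} componentwise, collapse the product of hom-sets via the universal property of the coproduct $\sum_i R_{A_i,B_i}$, then finish with the adjunction and Yoneda exactly as in Theorem~\ref{thm:representation1}. Your extra attention to why the coproduct step is legitimate (closure of $\cat{E}$ under coproducts plus fullness) is a reasonable elaboration of what the paper dispatches with the single word ``Coproducts.''
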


\begin{proof}
  The proof follows the same path as the one in
  Theorem~\ref{thm:representation1}, except that now we use the
  isomorphism $\ensuremath{(\Conid{A}\to \Conid{C})\times(\Conid{B}\to \Conid{C})\;\cong\;(\Conid{A}\mathbin{+}\Conid{B})\to \Conid{C}}$ that results
  from the universal property of coproducts.
More precisely, the proof is as follows:

\begin{calculation}
  \begin{array}{cl}
    &   \int_{\F}  (\prod_i (A_i \tto U\!\F\, B_i)) \tto U\!\F\, X 
    \\
    \cong & \comment{Equation~(\ref{eq:atofb})} \\
    &   \int_{\F} (\prod_i (\Hom{\cat{E}}{R_{A_i,B_i}}{U\!\F})) \tto U\!\F\, X
    \\
    \cong & \comment{Coproducts} \\
    &   \int_{\F} (\Hom{\cat{E}}{\sum_i R_{A_i,B_i}}{U\!\F}) \tto U\!\F\, X
    \\
    \cong & \comment{ \ensuremath{(-)^*} is left adjoint to \ensuremath{U} (see Eq.~\ref{eq:adjoint})} \\
    &   \int_{\F} (\Hom{\cat{F}}{(\sum_i R_{A_i,B_i})^*}{\F}) \tto U\!\F\, X
    \\
    \cong & \comment{Yoneda } \\
    &  U(\sum_i R_{A_i,B_i})^*\,X
  \end{array}
\end{calculation}

 Naturality follows from naturality of its component isomorphisms. 
\end{proof}

%\hl{Write the two morphisms witnessing the isomorphism}

% The functions witnessing the
% isomorphism in the theorem are:
% %   &   \int_{\F}  (\prod_i (A_i \tto U\!\F\, B_i)) \tto U\!\F\, X 
% \[
% \begin{array}{l@@{\,}c@@{\,}l}
% \psi & : & (\int_{\F} \prod_i (A_i \tto U\!\F\, B_i) \tto U\!\F\, X) \to U(\sum_i R_i)^*\,X  \\
% \psi(h) & = & h_{(\sum_i R_{A_i,B_i})^*}\,(\lambda i. \ \alpha^{-1}_{U(\sum_j R_{A_j,B_j})^*}\,(\eta_{\sum_j R_{A_j,B_j}}\circ \inj_i))
% \\
% \\
% \psi^{-1} & : &  U(\sum_i R_i)^*\,X \to \int_{\F}  \prod_i (A_i \tto U\!\F\, B_i) \tto U\!\F\, X \\
% \psi^{-1}(r) & = & \tau \quad\text{where~}
% \begin{array}[t]{l@@{\,}c@@{\,}l}
% \tau_{\F} & : & \prod_i (A_i \tto U\!\F\, B_i) \tto U\!\F\, X \\
% \tau_{\F}(g) & = & (U\,\radj{\lambda(i,r_i).\ (\alpha_{UF}\,(g\, i))(r_i)}_X)(r)
% \end{array} \\
% \end{array}
% \]
% %
% Here, $\eta$ is the unit of the
% adjunction $(-)^*\dashv U$, and $\alpha$ is the isomorphism in
% Proposition~\ref{prop:yoneda1}.

% ----- Configure Emacs -----
%
% Local Variables: ***
% mode: latex ***
% End: ***

%Lifting of indexed comonads

\section{Parameterised comonads and very well-behaved lenses}
\label{sec:pcomonads_and_lenses}

The functor $R_{A,B}\,X = A\times (B\tto X)$ plays a fundamental role
in Theorems~\ref{thm:representation1} and~\ref{thm:representation2}.
Such a functor $R$ has the structure of a parameterised
comonad~\cite{paramnotions-jfp,algebras-param-monads} and is sometimes
called a parameterised store comonad.  As a first
application of the representation theorem we analyse the relation
between coalgebras for this parameterised comonad and very
well-behaved lenses~\cite{foster:2007}.

\begin{defn}[Parameterised comonad]
  Fix a category $\cP$ of parameters.  A $\cP$-parameterised comonad
  on a category $\cC$ is a triple $(C,\varepsilon,\delta)$, where:
  \begin{itemize}
  \item $C$ is a functor $\cP\times\op\cP\times\cC \to \cC$. We write
    the parameters as (usually lowercase) subindexes. That is, $C_{a,b}\, X =
    C(a,b,X)$.

  \item the \emph{counit} $\varepsilon$ is a family of morphisms
    $\varepsilon_{a,X}: C_{a,a}\,X \to X$ which is natural in $X$ and
    dinatural in $a$ (dinaturality is defined in Appendix~\ref{sec:ends},
    Definition~\ref{def:dinatural}),
  \item the \emph{comultiplication} $\delta$ is a family of morphisms
  $\delta_{a,b,c,X}: C_{a,c}\,X \to C_{a,b}\,(C_{b,c}\,X)$
  natural in $a, c$ and $X$ and dinatural in $b$.
  \end{itemize}
  These must make the following diagrams commute:

\[
\xymatrix@C+=2cm{
   & C_{a,b}\,X 
       \ar[dl]_{\delta_{a,b,b,X}}
       \ar@{=}[d]
       \ar[dr]^{\delta_{a,a,b,X}}
\\
     C_{a,b}\,(C_{b,b}\,X)
      \ar[r]_-{C_{a,b}\,\varepsilon_{b,X}}
   & C_{a,b}\,X
   & C_{a,a}\,(C_{a,b}\,X)
      \ar[l]^-{\varepsilon_{a,C_{a,b}\,X}}
}
\]%
\[
\xymatrix@C+=3cm{
C_{a,d}\,X   
        \ar[r]^-{\delta_{a,b,d,X}}
        \ar[d]_{\delta_{a,c,d,X}}
  &
 C_{a,b}\,(C_{b,d}\,X)
       \ar[d]^{C_{a,b}\,\delta_{b,c,d,X}}
 \\
 C_{a,c}\,(C_{c,d}\,X)
       \ar[r]_-{\delta_{a,b,c,C_{c,d}\,X}}
 &
 C_{a,b}\,(C_{b,c}\,(C_{c,d}\,X))
}
\]%
\end{defn}

\bigskip
\begin{defn}[Coalgebra for a parameterised comonad]
\label{defn:coalgebra_parameterised_comonad}
  Let $C$ be a $\cP$-parameterised comonad on $\cC$.  Then a
  $C$-coalgebra is a pair $(J,k)$ of a functor $J: \cP\to\cC$, and
  a family $k_{a,b} : J\,a \to C_{a,b}\,(J\,b)$, natural in
  $a$ and dinatural in $b$, such that the following diagrams
  commute:
\[
\begin{array}{ccc}
\xymatrix@C+=3cm{
  J\,a 
      \ar[r]^{k_{a,b}}
      \ar[d]_{k_{a,c}}
  &
  C_{a,b}\,(J\,b)
      \ar[d]^-{C_{a,b}\,k_{b,c}}
  \\
  C_{a,c}\,(J \,c)
      \ar[r]_-{\delta_{a,b,c,J\,c}}
  &
  C_{a,b}\,(C_{b,c}\,(J\,c))
}
&
\xymatrix{
  J\,a
    \ar[r]^-{k_{a,a}}
    \ar@{=}[dr]
  &
  C_{a,a}\,(J\,a)
    \ar[d]^{\varepsilon_{a,J\,a}}
  \\
  & J\,a
}
\\
\text{comultiplication-coalgebra law}& \text{counit-coalgebra law}
\end{array}
\]%
\end{defn}

The definitions of parameterised comonad and of coalgebra for a
parameterised comonad are dualisations of the ones for monads found
in~\cite{algebras-param-monads}.

\begin{example}
\label{ex:R}
The functor $R_{a,b}\,X = a \times (b \tto X)$ is a parameterised
comonad, with the following counit and comultiplication:
\[
\begin{array}{lcl}
\varepsilon_{a,X}        &: & R_{a,a}\,X \to X \\
\varepsilon_{a,X}\ (x,f) &= & f x \\
\\
\delta_{a,b,c,X}          &: & R_{a,c}\,X \to R_{a,b}\,(R_{b,c}\,X) \\
\delta_{a,b,c,X}\ (x,f)   &= & (x, \lambda y.\, (y, f))
\end{array}
\]%
  
\end{example}

\begin{example}
\label{example:RK}
  Given a functor $K : \cP \to Set$, define the functor
  $R^{(K)}_{a,b}\,X = Ka \times (Kb \tto  X) : \cP \times \op\cP \times Set \to
  Set$.  For every functor $K$, $R^{(K)}$ is a parameterised comonad, with the
  following counit and comultiplication:
\[
\begin{array}{lcl}
\varepsilon_{a,X}        &: & R^{(K)}_{a,a}\,X \to X \\
\varepsilon_{a,X}\ (x,f) &= & f x \\
\\
\delta_{a,b,c,X}          &: & R^{(K)}_{a,c}\,X \to R^{(K)}_{a,b}\,(R^{(K)}_{b,c}\,X) \\
\delta_{a,b,c,X}\ (x,f)   &= & (x, \lambda y.\, (y, f))
\end{array}
\]%
  
The parameterised comonad $R$ from Example~\ref{ex:R} is the same as $R^{(\Id)}$ where $\Id$ is the
identity functor.
\end{example}

The proposition below shows how the comonadic structure of $R^{(K)}$
interacts nicely with the isomorphism of
Proposition~\ref{prop:yoneda1}.

\begin{prop}\label{prop:epsilon-alpha}
\label{prop:delta-alpha}
Let $F, G : Set \to Set$, $f : a \to F b$, and $g : b \to
G c$, then the following equations hold.
\begin{enumerate}
\item[a)] $\varepsilon_{a,X} = \alpha_{\Id}(id_{K a})_X \quad: R^{(K)}_{a,a,X} \to X$
\item[b)] %\label{prop:delta-alpha}
$  (\alpha_{F}(f) \cdot \alpha_{G}(g))_X \circ \delta_{a,b,c,X} = 
   \alpha_{F \cdot G}(Fg \circ f)_X \quad : R^{(K)}_{a,c}\,X \to F(G\,X)$
\end{enumerate}
\noindent where $F \cdot G$ is functor composition and where
$\alpha \cdot \beta$ is the horizontal composition of natural
transformations. That is, given natural transformations $\alpha : F
\to G$, and $\beta : F' \to G'$, horizontal composition $\alpha \cdot
\beta : F\cdot F' \to G \cdot G'$ is given by $\alpha \cdot \beta =
G(\beta) \circ \alpha_{F'}$.
%  Diagramatically,
% $$
% \xymatrix@@C+=2cm{
%    R^{(K)}_{a,c}\,X \ar[r]^{\delta_{a,b,c,X}}
%                     \ar[d]_{(\alpha_{F\cdot G}(F\,g \circ f))_X}
%    &
%    R^{(K)}_{a,b}\,(R^{(K)}_{b,c}\,X) \ar[d]^{(\alpha_{F}(f))_{R^{(K)}_{b,c}\,X}}
%    \\
%    F\,(G\,X) &
%    \ar[l]^{F(\alpha_G(g))_X} F\,(R^{(K)}_{b,c}\,X)
% }
% $$
\end{prop}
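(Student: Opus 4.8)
The plan is to verify both equations by unfolding all the definitions and computing pointwise on elements, the only non-routine ingredient being the functoriality of $F$. Throughout I would use the explicit formula for the Yoneda isomorphism of Proposition~\ref{prop:yoneda1}, namely $\alpha_F(f)_X(x,h) = F(h)(f(x))$, read with the sets $Ka$, $Kb$, $Kc$ substituted for the parameters $A$, $B$; concretely $\alpha_F(f)$ is the natural transformation $R^{(K)}_{a,b}\to F$ determined by $f : Ka \to F(Kb)$, and $\alpha_G(g)$ is the transformation $R^{(K)}_{b,c}\to G$ determined by $g : Kb \to G(Kc)$, which is what makes the types line up.

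For part (a), I would take an arbitrary $(x,h) \in R^{(K)}_{a,a}X = Ka \times (Ka \tto X)$ and evaluate both sides. The right-hand side gives $\alpha_{\Id}(id_{Ka})_X(x,h) = \Id(h)(id_{Ka}(x)) = h(x)$, while the counit of Example~\ref{example:RK} gives $\varepsilon_{a,X}(x,h) = h(x)$. Since the two agree on every element, (a) is immediate.

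For part (b) I would chase an element $(x,h) \in R^{(K)}_{a,c}X = Ka \times (Kc \tto X)$ along the left-hand side. First, $\delta_{a,b,c,X}(x,h) = (x,\lambda y.(y,h))$. Next I would unfold the horizontal composite using the stated formula, which here reads $(\alpha_F(f)\cdot\alpha_G(g))_X = F(\alpha_G(g)_X)\circ \alpha_F(f)_{R^{(K)}_{b,c}X}$; applying $\alpha_F(f)$ at the object $R^{(K)}_{b,c}X$ yields $F(\lambda y.(y,h))(f(x))$. The key step is then to push the outer $F(\alpha_G(g)_X)$ through by functoriality, obtaining $F\bigl(\alpha_G(g)_X \circ (\lambda y.(y,h))\bigr)(f(x))$, and to simplify the inner function to $\lambda y.\, G(h)(g(y)) = G(h)\circ g$, which leaves $F(G(h)\circ g)(f(x))$. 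Finally I would compute the right-hand side directly, $\alpha_{F\cdot G}(Fg\circ f)_X(x,h) = (F\cdot G)(h)\bigl((Fg\circ f)(x)\bigr) = F(G(h))(F(g)(f(x)))$, and collapse it again by functoriality of $F$ to the same expression $F(G(h)\circ g)(f(x))$.

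There is no genuine conceptual obstacle; the proof is essentially bookkeeping. The point that most needs care is that $\alpha_F(f)$ must be evaluated at the \emph{composite} object $R^{(K)}_{b,c}X$ rather than at $X$ — this is exactly where the whiskering in the horizontal composition lives — and that the two appearances of functoriality of $F$ (one to merge the whiskered component with the reshaping $\lambda y.(y,h)$, one to collapse $F(G(h))\circ F(g)$ on the right) are precisely what make the two sides meet. I would therefore keep the whole computation elementwise so that these applications of functoriality stay transparent.
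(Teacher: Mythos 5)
Your proof is correct. The paper states this proposition without any proof at all (there is no proof environment following it), so there is nothing to compare your argument against; your elementwise verification --- unfolding $\alpha$ from Proposition~\ref{prop:yoneda1} with $Ka$, $Kb$, $Kc$ in place of $A$, $B$, the counit and comultiplication from Example~\ref{example:RK}, and the whiskering formula for horizontal composition --- is exactly the routine computation the authors leave implicit, and you correctly isolate the only delicate point, namely that $\alpha_F(f)$ must be evaluated at the composite object $R^{(K)}_{b,c}X$ before applying $F(\alpha_G(g)_X)$ and merging via functoriality of $F$.
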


\begin{example}\label{example:pi1-lens}
The pair $((\times C), k)$ is an $R$-coalgebra with

\[
\begin{array}{lcl}
k_{a,b}  & : & a \times C \to R_{a,b}(b \times C) \\
k_{a,b}\ (a,c) & = & (a, \lambda b. \, (b,c))
\end{array}
\]%
\end{example}

Coalgebras of $R^{(K)}$ play an important role in functional
programming as they are precisely the type of very well-behaved lenses,
hereafter called lenses~\cite{foster:2007}.
A lens provides access to a component $B$ inside another type $A$.
More formally a lens from $A$ to $B$ is an isomorphism $A \cong B \times C$
for some residual type $C$.
A lens from $A$ to $B$ is most easily implemented by a pair of appropriately
typed getter and setter functions
\[
\begin{array}{lcl}
get  & : & A \to B \\
set  & : & A \times B \to A \\
\end{array}
\]%
satisfying three laws\footnote{In Foster et al.~\shortcite{foster:2007},
the less well-behaved lenses do not satisfy all three laws.}
\begin{eqnarray*}
set(x,get(x)) & = & x \\
get(set(x,y)) & = &  y \\
set(set(x,y_1),y_2) & = & set(x,y_2)
\end{eqnarray*}
More generally, given two functors $J : \cP \to Set$ and $K : \cP \to Set$,
we can form a parameterised lens from $J$ to $K$ with a family of
getters and setters
\[
\begin{array}{lcl}
get_{a}  & : & Ja \to Ka \\
set_{a,b}  & : & Ja \times Kb \to Jb \\
\end{array}
\]%
\noindent{}satisfying the same three laws, and with $get$ being natural in $a$
and $set$ being natural in $b$.
By some simple algebra we see that the type of lenses is isomorphic to the type
of coalgebras of the parameterised comonad $R^{(K)}$.
\[
(Ja \tto Ka) \times (Ja \times Kb \tto Jb) \quad\cong\quad Ja \tto R^{(K)}_{a,b}(Jb)
\]%
Furthermore the coalgebra laws are satisfied if and only if the
corresponding lens laws are satisfied~\cite{oconnor:2010,colens}.
%
% \hl{Should we bother proving this? -- RO}
% \hl{Perhaps the proof should be in appendix, and here as a propositition? What do you think? -- MJ}
%
For instance, the coalgebra given in
Example~\ref{example:pi1-lens} is a parameterised lens into the first
component of a pair.

Using the representation theorem and some simple manipulations we can
define a third way to represent a parameterised lens from $J$ to $K$.
The so-called Van Laarhoven
representation~\cite{vanLaarhoven:2009c,oconnor:2011} is defined by
a family of ends
\[
\int_{\F:\cat{E}} (Ka \tto \F (Kb)) \tto Ja \tto \F (Jb)
\]%
that is natural in the sense that given two
arrows from $\cat{P}$, $p : a \to a'$ and $q : b \to b'$, and
given $f : Ka' \tto \F (Kb)$ for some $\F:\cat{E}$ then
\[
  F(Jq) \circ v_{a',b,F}(f) \circ Jp = v_{a,b',F}(F(Kq) \circ f \circ Kp).
\]%

The corresponding laws for the Van Laarhoven representation of lenses
are
\begin{itemize}
\item the linearity law 

For all $f : Ka \tto \F (Kb)$ and $g : Kb \tto G (Kc)$,
\[
  v_{a,c,F \cdot G}(\F g \circ f) = \F v_{b,c,G}(g) \circ v_{a,b,F}(f)
\]%
\item and the unity law
\[
  v_{a,a,\Id}(id_{Ka}) = id_{Ja}.
\]%
\end{itemize}

The following theorem proves that the coalgebra representation and Van
Laarhoven representation of parameterised lenses are equivalent.

\begin{theorem}[Lens representation]\label{theorem:lens-representation}
  Given $\cat{E}$, a small full subcategory of $Set^{Set}$ and
  given functors $J,K : \cP \to Set$, then the families
  $k_{a,b} : Ja \to R^{(K)}_{a,b}(Jb)$ which form $R^{(K)}$-coalgebras
  $(J, k)$ are isomorphic to the families of ends
\[
\int_{\F:\cat{E}} (Ka \tto \F (Kb)) \tto Ja \tto \F (Jb)
\]%
which satisfy the linearity and unity laws.

\end{theorem}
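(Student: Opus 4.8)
The plan is to prove the stated bijection in two stages. First I would establish an underlying one-to-one correspondence between \emph{arbitrary} families $k_{a,b} : Ja \to R^{(K)}_{a,b}(Jb)$ and \emph{arbitrary} Van Laarhoven families $v$, using only the Yoneda machinery already developed; then I would show that this correspondence restricts to a bijection between those $k$ satisfying the two coalgebra laws and those $v$ satisfying the unity and linearity laws. Proposition~\ref{prop:delta-alpha} is the bridge that makes the second stage work, since it re-expresses the comonad structure $\varepsilon,\delta$ of $R^{(K)}$ in terms of the Yoneda isomorphism $\alpha$ of Proposition~\ref{prop:yoneda1}.

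For the pointwise correspondence, fix $a,b$. Proposition~\ref{prop:yoneda1} (Equation~(\ref{eq:atofb})) rewrites the integrand's argument $Ka \tto \F(Kb)$ as $\Hom{\cat{E}}{R^{(K)}_{a,b}}{\F}$, using $R^{(K)}_{a,b} = R_{Ka,Kb}$ and fullness of $\cat{E}$ (and assuming, as in the examples following Equation~(\ref{eq:simpleRepresentation}), that $R^{(K)}_{a,b}\in\cat{E}$). Since $Ja$ is constant in $\F$, currying together with the fact that covariant Hom-functors preserve ends (Remark~\ref{remark:ends_as_limits}) lets me pull it out, so the whole expression becomes
\[
Ja \tto \int_{\F:\cat{E}} (\Hom{\cat{E}}{R^{(K)}_{a,b}}{\F}) \tto \F(Jb).
\]
Equation~(\ref{eq:hoyonedax}) with $G = R^{(K)}_{a,b}$ and $X = Jb$ then collapses the inner end to $R^{(K)}_{a,b}(Jb)$, leaving $Ja \tto R^{(K)}_{a,b}(Jb)$, which is exactly the type of the families $k_{a,b}$. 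Unwinding these isomorphisms gives the explicit form $v_{a,b,\F}(f) = \alpha_{\F}(f)_{Jb} \circ k_{a,b}$, with inverse recovering $k_{a,b}$ by instantiating $v$ at $\F = R^{(K)}_{a,b}$ on the element of $Ka \tto R^{(K)}_{a,b}(Kb)$ that corresponds under $\alpha$ to $\mathrm{id}_{R^{(K)}_{a,b}}$. Naturality of $\alpha$ and of Yoneda in $a$ and $b$ shows that the displayed naturality condition on $v$ matches the naturality-in-$a$, dinaturality-in-$b$ requirement on $k$.

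It remains to match the laws. For unity, the explicit form gives $v_{a,a,\Id}(\mathrm{id}_{Ka}) = \alpha_{\Id}(\mathrm{id}_{Ka})_{Ja} \circ k_{a,a}$, and Proposition~\ref{prop:delta-alpha}(a) identifies $\alpha_{\Id}(\mathrm{id}_{Ka})_{Ja}$ with $\varepsilon_{a,Ja}$; hence the unity law $v_{a,a,\Id}(\mathrm{id}_{Ka}) = \mathrm{id}_{Ja}$ is literally the counit-coalgebra law $\varepsilon_{a,Ja} \circ k_{a,a} = \mathrm{id}_{Ja}$. For linearity I would expand both sides of $v_{a,c,\F\cdot G}(\F g \circ f) = \F v_{b,c,G}(g) \circ v_{a,b,\F}(f)$ using the explicit form, apply Proposition~\ref{prop:delta-alpha}(b) to rewrite $\alpha_{\F\cdot G}(\F g\circ f)_{Jc}$ as $(\alpha_{\F}(f)\cdot\alpha_G(g))_{Jc} \circ \delta_{a,b,c,Jc}$, then use the comultiplication-coalgebra law $\delta_{a,b,c,Jc}\circ k_{a,c} = R^{(K)}_{a,b}(k_{b,c})\circ k_{a,b}$, and finally invoke naturality of the transformation $\alpha_{\F}(f)$ at the morphism $k_{b,c}$ to slide it past $R^{(K)}_{a,b}(k_{b,c})$; the two sides then coincide. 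Because each rewrite is reversible, linearity holds for $v$ exactly when the comultiplication law holds for $k$. The main obstacle is precisely this last equivalence: it is the only place where several facts must be combined in the right order — Proposition~\ref{prop:delta-alpha}(b), the coalgebra law, and the naturality square of $\alpha_{\F}(f)$ — and care is needed to track which component (at $Jb$, $Jc$, or $R^{(K)}_{b,c}(Jc)$) each natural transformation is evaluated at so that the types line up.
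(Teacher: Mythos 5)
Your route is the paper's own: the law-free bijection is obtained exactly as in the paper (Proposition~\ref{prop:yoneda1}, Hom functors preserving ends, Yoneda), your explicit witnesses agree with the paper's $\gamma$ and $\gamma^{-1}$, and your unity argument via Proposition~\ref{prop:epsilon-alpha}(a) is literally the paper's. The forward half of your linearity argument (comultiplication-coalgebra law implies linearity) is also what the paper proves, packaged there as Lemma~\ref{lemma:tech1} and Lemma~\ref{lemma:tech2}.

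There is, however, a genuine gap in the converse direction of the linearity equivalence. It does not follow ``because each rewrite is reversible'': the one rewrite in your chain that invokes the comultiplication-coalgebra law replaces $\delta_{a,b,c,Jc}\circ k_{a,c}$ by $R^{(K)}_{a,b}(k_{b,c})\circ k_{a,b}$ \emph{inside the context} $(\alpha_{F}(f)\cdot\alpha_{G}(g))_{Jc}\circ(-)$, and in the reverse direction that law is precisely what you are trying to derive, so the rewrite is not available. What the genuinely reversible steps give you is only that linearity is equivalent to
\[
\forall F,G,f,g.\quad (\alpha_{F}(f)\cdot\alpha_{G}(g))_{Jc}\circ\delta_{a,b,c,Jc}\circ k_{a,c} \;=\; (\alpha_{F}(f)\cdot\alpha_{G}(g))_{Jc}\circ R^{(K)}_{a,b}(k_{b,c})\circ k_{a,b},
\]
and post-composition with $(\alpha_{F}(f)\cdot\alpha_{G}(g))_{Jc}$ for fixed data is not monic, so it cannot simply be cancelled. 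The missing idea is the instantiation the paper uses to exploit the universal quantification: take $F=R^{(K)}_{a,b}$ with $f=\alpha^{-1}_{R^{(K)}_{a,b}}(id)$ and $G=R^{(K)}_{b,c}$ with $g=\alpha^{-1}_{R^{(K)}_{b,c}}(id)$, so that $\alpha_{F}(f)=id$ and $\alpha_{G}(g)=id$, hence $\alpha_{F}(f)\cdot\alpha_{G}(g)=id$ and the comultiplication-coalgebra law drops out of the displayed equation. With that one step supplied, your proof coincides with the paper's.
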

\begin{proof}
  First, we prove the isomorphism of families without regard to the laws

  \begin{calculation}
    \begin{array}{cl}
      & Ja \tto R^{(K)}_{a,b}(Jb) \\
      \cong & \comment{definition of $R^{(K)}$}\\
      & Ja \tto R_{Ka,Kb}(Jb) \\
      \cong & \comment{Equation~\ref{eq:simpleRepresentation}}\\
      & Ja \tto \int_{\F} (Ka \tto \F (Kb)) \tto \F (Jb) \\
      \cong & \comment{Hom functors preserve ends (Remark~\ref{remark:ends_as_limits})}\\
      & \int_{\F} Ja \tto (Ka \tto \F (Kb)) \tto \F (Jb) \\
      \cong & \comment{Swap argument}\\
      & \int_{\F} (Ka \tto \F (Kb)) \tto Ja \tto \F (Jb) \\
    \end{array}
  \end{calculation}

  This isomorphism is witnessed by the following functions:

\[
\begin{array}{l@{\,}c@{\,}l}
\gamma & : & (\int_{\F}  (Ka \tto \F (Kb)) \tto Ja \tto \F (Jb)) \to (Ja \tto R^{(K)}_{a,b}\,(Jb))
\\
\gamma(h) & = & h_{R^{(K)}_{a,b}}\,(\alpha^{-1}_{R^{(K)}_{a,b}}(id))
\\
\\
\gamma^{-1} & : &  (Ja \tto R^{(K)}_{a,b}\,(Jb)) \to \int_{\F} (Ka \tto \F (Kb)) \tto (Ja \tto \F (Jb))\\
\gamma^{-1}(k) & = & \tau \quad\text{where~}
\begin{array}[t]{l@{\,}c@{\,}l}
\tau_{\F} & : & (Ka \tto \F (Kb)) \tto Ja \tto \F (Jb) \\
\tau_{\F}(g) & = & \alpha_{\F}(g)_{Jb} \circ k
\end{array} \\
\end{array}
\]%

In order to prove that the laws of coalgebras for parameterised
comonads correspond to unity and linearity, we first prove two
technical lemmas.
\begin{lemma}\label{lemma:tech1}%
$$
  \gamma^{-1}(k_{a,c})_{F \cdot G}(Fg \circ f)
 = (\alpha_{F}(f) \cdot \alpha_{G}(g))_{Jc} \circ \delta_{a,b,c,Jc} \circ k_{a,c}
$$%
\end{lemma}
\begin{proof}
This follows from Proposition~\ref{prop:delta-alpha}(b).
\end{proof}
% \begin{proof}
% $$
%  \begin{array}{cl}
%   & \gamma^{-1}(k_{a,c})_{F \cdot G}(Fg \circ f) \\
% = & \comment{definition of $\gamma^{-1}$}\\
%   & \alpha_{F \cdot G}(Fg \circ f)_{Jc} \circ k_{a,c} \\
% = & \comment{Proposition~\ref{prop:delta-alpha}}\\
%   & (\alpha_{F}(f) \cdot \alpha_{G}(g))_{Jc} \circ \delta_{a,b,c,Jc} \circ k_{a,c} \\
%   \end{array}
% $$
% \end{proof}

\begin{lemma}\label{lemma:tech2}
$$
  F(\gamma^{-1}(k_{b,c})_{G}(g)) \circ \gamma^{-1}(k_{a,b})_{F}(f)
 = (\alpha_{F}(f) \cdot \alpha_{G}(g))_{Jc} \circ R^{(K)}_{a,b}(k_{b,c}) \circ k_{a,b}
$$
\end{lemma}
\begin{proof}
This follows from the definition of $\gamma^{-1}$ and properties of functors and
natural transformations.
\end{proof}

% \begin{proof}
% $$
%  \begin{array}{cl}
%   & F(\gamma^{-1}(k_{b,c})_{G}(g)) \circ \gamma^{-1}(k_{a,b})_{F}(f) \\
% = & \comment{definition of $\gamma^{-1}$}\\
%   & F(\alpha_{G}(g)\,Jc \circ k_{b,c}) \circ \alpha_{F}(f)\,Jb \circ k_{a,b} \\
% = & \comment{$F$ is a functor}\\
%   & F(\alpha_{G}(g)\,Jc) \circ F(k_{b,c}) \circ \alpha_{F}(f)\,Jb \circ k_{a,b} \\
% = & \comment{$\alpha_{F}(f)$ is a natural transformation}\\
%   & F(\alpha_{G}(g)\,Jc) \circ \alpha_{F}(f)\,Jb \circ R^{(K)}_{a,b}(k_{b,c}) \circ k_{a,b} \\
% = & \comment{Definition of horizontal composition}\\
%   & (\alpha_{F}(f) \cdot \alpha_{G}(g))\,Jc \circ R^{(K)}_{a,b}(k_{b,c}) \circ k_{a,b} \\
%   \end{array}
% $$
% \end{proof}

Generalised versions of Lemma~\ref{lemma:tech1} and Lemma~\ref{lemma:tech2}
appear with detailed proofs in
Appendix~\ref{thm:generalised-lens-representation},
Lemma~\ref{lemma:gen-tech1} and Lemma~\ref{lemma:gen-tech2}.

By the previous two lemmas, to prove that the
comultiplication-coalgebra law is equivalent to the linearity law it
suffices to prove the following:

\[
\begin{array}{rcl}
 R^{(K)}_{a,b}(k_{b,c}) \circ k_{a,b} & = & \delta_{a,b,c,Jc} \circ k_{a,c} \\
 & \iff &  \\
 \forall F, G, f, g. (\alpha_{F}(f) \cdot \alpha_{G}(g)) \circ
R^{(K)}_{a,b}(k_{b,c}) \circ k_{a,b} & = &
 (\alpha_{F}(f) \cdot \alpha_{G}(g)) \circ \delta_{a,b,c,Jc} \circ k_{a,c}
\end{array}
\]%
The forward implication is clear.  To prove the reverse implication
take $F = R^{(K)}_{a,b}$ and $f =
\alpha^{-1}_{R^{(K)}_{a,b}}(id)_{Jb}$.  Also take $G = R^{(K)}_{b,c}$
and $g = \alpha^{-1}_{R^{(K)}_{b,c}}(id)_{Jc}$.  Then $\alpha_{F}(f) =
id$ and $\alpha_{G}(g) = id$.  Therefore, $\alpha_{F}(f) \cdot
\alpha_{G}(g) = id$ and the result follows.

To prove that the counit-coalgebra law is equivalent to the unity law
it suffices to prove that $\varepsilon_{a, Ja} \circ k_{a,a} =
\gamma^{-1}(k_{a,a})_{\Id}(id)$.

\begin{calculation}
  \begin{array}[b]{cl}
    & \gamma^{-1}(k_{a,a})_{\Id}(id) \\
    = & \comment{definition of $\gamma^{-1}$}\\
    & \alpha_{\Id}(id)_{Ja} \circ k_{a,a} \\
    = & \comment{Proposition~\ref{prop:epsilon-alpha}(a)}\\
    & \varepsilon_{a, Ja} \circ k_{a,a} \\
  \end{array}
\end{calculation}
\end{proof}

The previous theorem can be generalised to the case where we have an
adjunction.

\begin{theorem}[Generalised lens representation]\label{theorem:generalized-lens-representation}
Let $\cat{E}$ and $\cat{F}$ be two small categories of $\Set$-endofunctors, such that $\cat{E}$ and $\cat{F}$ are (strict)
monoidal with respect to the identity functor $\Id$ and functor
composition $-\cdot-$, and $\cat{E}$ is a full sub-category.  Let $(-)^* \dashv U : \cat{E}\rightharpoonup \cat{F}$,
be an adjunction between them, such that $U$ is strict monoidal.
Then
\begin{enumerate}
\item $UR^{(K)*}$ is a parameterised comonad.
\item 
  Given functors $J,K : \cP \to Set$, then the family
  $k_{a,b} : Ja \tto UR^{(K)*}_{a,b}(Jb)$ which form the $UR^{(K)*}$-coalgebras
  $(J, k)$ are isomorphic to the family of ends
\[
\int_{\F:\cat{F}} (Ka \tto U\F (Kb)) \tto Ja \tto U\F (Jb)
\]%
which satisfy the linearity and unity laws.
\end{enumerate}
\end{theorem}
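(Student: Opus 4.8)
The plan is to handle the two claims separately, in each case bootstrapping from the corresponding fact already proved for $R^{(K)}$ inside $\cat{E}$ (Theorem~\ref{theorem:lens-representation}) and letting the adjunction $(-)^*\dashv U$ transport the structure between $\cat{E}$ and $\cat{F}$. Throughout I use that $R^{(K)}_{a,b}=R_{Ka,Kb}$ and that $\cat{E}$, being monoidal under $\Id$ and $-\cdot-$ and full, already contains the counit $\varepsilon_a:R^{(K)}_{a,a}\to\Id$ and the comultiplication $\delta_{a,b,c}:R^{(K)}_{a,c}\to R^{(K)}_{a,b}\cdot R^{(K)}_{b,c}$ as morphisms, so that $R^{(K)}$ is a parameterised comonad internal to $\cat{E}$.

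For claim (1), the key observation is that, because $(-)^*$ is left adjoint to the strict monoidal functor $U$, it carries canonical comparison maps $(\Id)^*\to\Id$ and $(F\cdot G)^*\to F^*\cdot G^*$, namely the adjuncts $\radj{id}$ and $\radj{\eta\cdot\eta}$ of the adjunction unit $\eta$, making $(-)^*$ oplax monoidal. Using these I would set the counit of $R^{(K)*}$ to be $\radj{\varepsilon_a}:R^{(K)*}_{a,a}\to\Id$ and its comultiplication to be $\radj{(\eta\cdot\eta)\circ\delta_{a,b,c}}$, and then verify the comonad coherence diagrams by transporting those of $R^{(K)}$ across the adjunction, using naturality of the adjuncts, Equations~(\ref{eq:adj-nat1}) and~(\ref{eq:adj-nat2}), together with the triangle identities. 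Since $U$ is strict monoidal it sends $\Id_{\cat{F}}$ to $\Id$ and composites to composites on the nose, hence preserves this structure, yielding (1) for $UR^{(K)*}$; dinaturality in the parameters is inherited from functoriality of $(-)^*$ and $U$.

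For claim (2), I would first establish the isomorphism of families ignoring the laws, following the shape of the calculation in Theorem~\ref{theorem:lens-representation} but replacing Equation~(\ref{eq:simpleRepresentation}) by the unary representation theorem in the form~(\ref{eq:representation}):
\begin{calculation}
\begin{array}{cl}
& Ja \tto UR^{(K)*}_{a,b}(Jb) \\
\cong & \comment{$R^{(K)}_{a,b}=R_{Ka,Kb}$}\\
& Ja \tto UR_{Ka,Kb}^{*}(Jb) \\
\cong & \comment{Equation~(\ref{eq:representation})}\\
& Ja \tto \int_{\F} (Ka \tto U\F (Kb)) \tto U\F (Jb) \\
\cong & \comment{Hom functors preserve ends}\\
& \int_{\F} (Ka \tto U\F (Kb)) \tto Ja \tto U\F (Jb) \\
\end{array}
\end{calculation}
The witnessing maps $\gamma,\gamma^{-1}$ are built from the $\varphi,\varphi^{-1}$ of Theorem~\ref{thm:representation1} in exactly the way the un-generalised $\gamma,\gamma^{-1}$ were built from~(\ref{eq:simpleRepresentation}), with each occurrence of $\alpha_{\F}(g)$ replaced by $U\radj{\alpha_{U\F}(g)}$.

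It then remains to match the coalgebra laws with the Van Laarhoven laws under this isomorphism. For this I would prove generalised analogues of Lemma~\ref{lemma:tech1} and Lemma~\ref{lemma:tech2}, in which the horizontal composite $\alpha_F(f)\cdot\alpha_G(g)$ of Proposition~\ref{prop:delta-alpha} is replaced by its transported form and the comultiplication is the $\radj{(\eta\cdot\eta)\circ\delta_{a,b,c}}$ defined above, the counit case again reducing to Proposition~\ref{prop:epsilon-alpha}(a) composed with the oplax unit. Granting those lemmas, the comultiplication-coalgebra $\Leftrightarrow$ linearity and counit-coalgebra $\Leftrightarrow$ unity equivalences follow exactly as in Theorem~\ref{theorem:lens-representation}: the forward directions are immediate, and the reverse directions use the same instantiation trick, now choosing $\F$ to be the free functors $R^{(K)*}_{a,b}$ and $R^{(K)*}_{b,c}$ so that the relevant transported $\alpha$-images collapse to identities. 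I expect the main obstacle to be precisely these generalised technical lemmas: one must check that the comonad structure manufactured abstractly through $(-)^*\dashv U$ still interacts with the generalised Yoneda isomorphism exactly as Proposition~\ref{prop:delta-alpha} demands, which is where the oplax comparison maps and the strictness of $U$ have to be balanced carefully against the adjuncts.
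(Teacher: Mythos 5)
Your proposal follows essentially the same route as the paper's proof (given in Appendix~\ref{thm:generalised-lens-representation}, Proposition~\ref{prop:glr}): the paper also defines the counit and comultiplication of the lifted comonad as $\radj{\varepsilon_a}$ and $\radj{(\eta\cdot\eta)\circ\delta_{a,b,c}}$, transports the comonad laws across the adjunction using Equations~(\ref{eq:adj-nat1})--(\ref{eq:adj-nat2}) and strict monoidality of $U$, obtains the isomorphism from the unary representation theorem with $\gamma^{-1}(k)_{\F}(g)=U\radj{\alpha_{U\F}(g)}_{Jb}\circ k$, and proves the same two generalised technical lemmas followed by the same instantiation at the free functors to collapse the transported $\alpha$-images to identities. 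Your framing of the structure maps via the oplax monoidal structure on $(-)^*$ is a mild conceptual repackaging, but the formulas and the argument are the same.
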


\begin{proof}
See Appendix~\ref{thm:generalised-lens-representation},
Proposition~\ref{prop:glr}.
\end{proof}

By considering the identity adjunction between $\cat{E}$ and itself,
Theorem~\ref{theorem:lens-representation} can be recovered from this
generalised version.

\subsection{Implementing lenses in Haskell}\label{sec:lenses_in_Haskell}

The Lens representation theorem demonstrates that the coalgebra
representation of lenses and the Van Laarhoven representation are
isomorphic.  Both representations can be implemented in Haskell.

\begin{hscode}\SaveRestoreHook
\column{B}{@{}>{\hspre}l<{\hspost}@{}}%
\column{3}{@{}>{\hspre}l<{\hspost}@{}}%
\column{E}{@{}>{\hspre}l<{\hspost}@{}}%
\>[3]{}\mbox{\onelinecomment  Parameterised store comonad}{}\<[E]%
\\
\>[3]{}\mathbf{data}\;\Conid{PStore}\;\Varid{a}\;\Varid{b}\;\Varid{x}\mathrel{=}\Conid{PStore}\;(\Varid{b}\to \Varid{x})\;\Varid{a}{}\<[E]%
\\[\blanklineskip]%
\>[3]{}\mbox{\onelinecomment  Coalgebra representation of lenses}{}\<[E]%
\\
\>[3]{}\mathbf{newtype}\;\Conid{KLens}\;\Varid{ja}\;\Varid{jb}\;\Varid{ka}\;\Varid{kb}\mathrel{=}\Conid{KLens}\;(\Varid{ja}\to \Conid{PStore}\;\Varid{ka}\;\Varid{kb}\;\Varid{jb}){}\<[E]%
\\[\blanklineskip]%
\>[3]{}\mbox{\onelinecomment  Van Laarhoven representation of lenses}{}\<[E]%
\\
\>[3]{}\mathbf{type}\;\Conid{VLens}\;\Varid{ja}\;\Varid{jb}\;\Varid{ka}\;\Varid{kb}\mathrel{=}\forall \Varid{f}\hsforall .~\FN{Functor}\;\Varid{f}\Rightarrow (\Varid{ka}\to \Varid{f}\;\Varid{kb})\to \Varid{ja}\to \Varid{f}\;\Varid{jb}{}\<[E]%
\ColumnHook
\end{hscode}\resethooks

There are a few observations to make about this Haskell code.
Firstly, neither the coalgebra laws nor the linearity and unity laws
of the Van Laarhoven representation can be enforced by Haskell's type
system, as it often happens when implementing algebraic structures
such as monoids or monads.
%  We
% require that the programmer establish these invariants by hand.
We have accordingly omitted writing out the parameterised comonad operations of
\ensuremath{\Conid{PStore}}.  Secondly, rather than taking $J$ and $K$ as parameters, we
take source and target types for each functor. By not
explicitly using functors as parameters, we avoid
\ensuremath{\mathbf{newtype}} wrapping and unwrapping functions that would otherwise be
needed.  Consider the example of building a lens to access the first component
of a pair.

\begin{hscode}\SaveRestoreHook
\column{B}{@{}>{\hspre}l<{\hspost}@{}}%
\column{3}{@{}>{\hspre}l<{\hspost}@{}}%
\column{21}{@{}>{\hspre}l<{\hspost}@{}}%
\column{E}{@{}>{\hspre}l<{\hspost}@{}}%
\>[3]{}\Varid{fstLens}{}\<[21]%
\>[21]{}\mathbin{::}\Conid{VLens}\;\Varid{a}\;\Varid{b}\;(\Varid{a},\Varid{y})\;(\Varid{b},\Varid{y}){}\<[E]%
\\
\>[3]{}\Varid{fstLens}\;\Varid{f}\;(\Varid{a},\Varid{y}){}\<[21]%
\>[21]{}\mathrel{=}(\lambda \Varid{b}\to (\Varid{b},\Varid{y}))\mathbin{`\FN{fmap}`}(\Varid{f}\;\Varid{a}){}\<[E]%
\ColumnHook
\end{hscode}\resethooks

Above we are constructing a \ensuremath{\Conid{VLens}} value but the argument applies
equally well to a \ensuremath{\Conid{KLens}} value.  The pair type is functorial
in two arguments.  For \ensuremath{\Varid{fstLens}}, we care about pairs being
functorial with respect to the first position.  If we were required to
pass a $J$ functor explicitly to \ensuremath{\Conid{VLens}}, we would need to add a
wrapper around \ensuremath{(\Varid{a},\Varid{b})} to make it explicitly a functor of the
first position.  Furthermore, we are implicitly using the identity
functor for the $K$ functor.  If we were required to pass a $K$
functor explicitly to \ensuremath{\Conid{VLens}} we would have to wrap and unwrap
the \ensuremath{\Conid{Identity}} functor in Haskell in order to use the lens.
Fortunately, all lens functionality can be implemented without
explicitly mentioning the functor parameters.

The third thing to note about the \ensuremath{\Conid{VLens}} formulation is that we use a
type alias rather than a \ensuremath{\mathbf{newtype}}.  This allows us to compose a lens
of type \ensuremath{\Conid{VLens}\;\Varid{ja}\;\Varid{jb}\;\Varid{ka}\;\Varid{kb}} and another lens of type \ensuremath{\Conid{VLens}\;\Varid{ka}\;\Varid{kb}\;\Varid{la}\;\Varid{lb}} by simply using the standard function composition operator.
There is another advantage that the type alias gives us, which we will see
later.

The isomorphism between the two representations can be written out
explicitly in Haskell.

\begin{hscode}\SaveRestoreHook
\column{B}{@{}>{\hspre}l<{\hspost}@{}}%
\column{3}{@{}>{\hspre}l<{\hspost}@{}}%
\column{5}{@{}>{\hspre}l<{\hspost}@{}}%
\column{20}{@{}>{\hspre}l<{\hspost}@{}}%
\column{E}{@{}>{\hspre}l<{\hspost}@{}}%
\>[3]{}\mathbf{instance}\;\FN{Functor}\;(\Conid{PStore}\;\Varid{i}\;\Varid{j})\;\mathbf{where}{}\<[E]%
\\
\>[3]{}\hsindent{2}{}\<[5]%
\>[5]{}\FN{fmap}\;\Varid{f}\;(\Conid{PStore}\;\Varid{h}\;\Varid{x})\mathrel{=}\Conid{PStore}\;(\Varid{f}\circ\Varid{h})\;\Varid{x}{}\<[E]%
\\[\blanklineskip]%
\>[3]{}\Varid{kLens2VLens}{}\<[20]%
\>[20]{}\mathbin{::}\Conid{KLens}\;\Varid{ja}\;\Varid{jb}\;\Varid{ka}\;\Varid{kb}\to \Conid{VLens}\;\Varid{ja}\;\Varid{jb}\;\Varid{ka}\;\Varid{kb}{}\<[E]%
\\
\>[3]{}\Varid{kLens2VLens}\;\Varid{k}\;\Varid{f}{}\<[20]%
\>[20]{}\mathrel{=}(\lambda (\Conid{PStore}\;\Varid{h}\;\Varid{x})\to \Varid{h}\mathbin{`\FN{fmap}`}\Varid{f}\;\Varid{x})\circ\Varid{k}{}\<[E]%
\\[\blanklineskip]%
\>[3]{}\Varid{vLens2KLens}{}\<[20]%
\>[20]{}\mathbin{::}\Conid{VLens}\;\Varid{ja}\;\Varid{jb}\;\Varid{ka}\;\Varid{kb}\to \Conid{KLens}\;\Varid{ja}\;\Varid{jb}\;\Varid{ka}\;\Varid{kb}{}\<[E]%
\\
\>[3]{}\Varid{vLens2KLens}\;\Varid{v}{}\<[20]%
\>[20]{}\mathrel{=}\Varid{v}\;(\Conid{PStore}\;\Varid{id}){}\<[E]%
\ColumnHook
\end{hscode}\resethooks

The generalised lens representation theorem gives us pairs of
representations of various lens derivatives.  Using pointed functors,
i.e. using the free pointed functor generated by \ensuremath{\Conid{PStore}} in the case of the
coalgebra representation, or quantifying over pointed functors in the
case of the Van Laarhoven representation, gives us the notion of a
partial lens~\cite{oconnorr:2013}, also known as an affine
traversal~\cite{ekmett:2013}.\footnote{An affine traversal from \ensuremath{\Conid{A}} to \ensuremath{\Conid{B}} is
so called because it specifies an isomorphism between \ensuremath{\Conid{A}} and \ensuremath{\Conid{F}\;\Conid{B}} for some
affine container \ensuremath{\Conid{F}}, i.e. for some functor \ensuremath{\Conid{F}} where $\ensuremath{\Conid{F}\;\Conid{X}} \cong \ensuremath{\Conid{C}}_1 \times \ensuremath{\Conid{X}} +
\ensuremath{\Conid{C}}_2$.}

\begin{hscode}\SaveRestoreHook
\column{B}{@{}>{\hspre}l<{\hspost}@{}}%
\column{3}{@{}>{\hspre}l<{\hspost}@{}}%
\column{5}{@{}>{\hspre}l<{\hspost}@{}}%
\column{11}{@{}>{\hspre}l<{\hspost}@{}}%
\column{36}{@{}>{\hspre}c<{\hspost}@{}}%
\column{36E}{@{}l@{}}%
\column{39}{@{}>{\hspre}l<{\hspost}@{}}%
\column{E}{@{}>{\hspre}l<{\hspost}@{}}%
\>[3]{}\mathbf{data}\;{}\<[11]%
\>[11]{}\Conid{FreePointedPStore}\;\Varid{a}\;\Varid{b}\;\Varid{x}{}\<[36]%
\>[36]{}\mathrel{=}{}\<[36E]%
\>[39]{}\Conid{Unit}\;\Varid{x}{}\<[E]%
\\
\>[36]{}\mid {}\<[36E]%
\>[39]{}\Conid{FreePointedPStore}\;(\Varid{b}\to \Varid{x})\;\Varid{a}{}\<[E]%
\\[\blanklineskip]%
\>[3]{}\mbox{\onelinecomment  coalgebra representation of partial lenses}{}\<[E]%
\\
\>[3]{}\mathbf{newtype}\;\Conid{KPartialLens}\;\Varid{ja}\;\Varid{jb}\;\Varid{ka}\;\Varid{kb}\mathrel{=}\Conid{KPartialLens}\;(\Varid{ja}\to \Conid{FreePointedPStore}\;\Varid{ka}\;\Varid{kb}\;\Varid{jb}){}\<[E]%
\\[\blanklineskip]%
\>[3]{}\mathbf{class}\;\FN{Functor}\;\Varid{f}\Rightarrow \FN{Pointed}\;\Varid{f}\;\mathbf{where}{}\<[E]%
\\
\>[3]{}\hsindent{2}{}\<[5]%
\>[5]{}\Varid{point}\mathbin{::}\Varid{a}\to \Varid{f}\;\Varid{a}{}\<[E]%
\\[\blanklineskip]%
\>[3]{}\mbox{\onelinecomment  Van Laarhoven representation of partial lenses}{}\<[E]%
\\
\>[3]{}\mathbf{type}\;\Conid{VPartialLens}\;\Varid{ja}\;\Varid{jb}\;\Varid{ka}\;\Varid{kb}\mathrel{=}\forall \Varid{f}\hsforall .~\FN{Pointed}\;\Varid{f}\Rightarrow (\Varid{ka}\to \Varid{f}\;\Varid{kb})\to \Varid{ja}\to \Varid{f}\;\Varid{jb}{}\<[E]%
\ColumnHook
\end{hscode}\resethooks

A partial lens provides a reference to 0 or 1 occurrences of $K$
within $J$.  If we instead use applicative functors (Section~\ref{sec:applicative}), we get a reference to a
sequence of 0 or more occurrences of $K$ within $J$.  This lens
derivative is called a traversal.

\begin{hscode}\SaveRestoreHook
\column{B}{@{}>{\hspre}l<{\hspost}@{}}%
\column{3}{@{}>{\hspre}l<{\hspost}@{}}%
\column{19}{@{}>{\hspre}c<{\hspost}@{}}%
\column{19E}{@{}l@{}}%
\column{22}{@{}>{\hspre}l<{\hspost}@{}}%
\column{37}{@{}>{\hspre}c<{\hspost}@{}}%
\column{37E}{@{}l@{}}%
\column{E}{@{}>{\hspre}l<{\hspost}@{}}%
\>[3]{}\mathbf{data}\;\Conid{FreeApplicativePStore}\;\Varid{a}\;\Varid{b}\;\Varid{x}{}\<[37]%
\>[37]{}\mathrel{=}{}\<[37E]%
\\
\>[3]{}\hsindent{19}{}\<[22]%
\>[22]{}\Conid{Unit}\;\Varid{x}{}\<[E]%
\\
\>[3]{}\hsindent{16}{}\<[19]%
\>[19]{}\mid {}\<[19E]%
\>[22]{}\Conid{FreeApplicativePStore}\;(\Conid{FreeApplicativePStore}\;\Varid{a}\;\Varid{b}\;(\Varid{b}\to \Varid{x}))\;\Varid{a}{}\<[E]%
\\[\blanklineskip]%
\>[3]{}\mbox{\onelinecomment  coalgebra representation of traversals}{}\<[E]%
\\
\>[3]{}\mathbf{newtype}\;\Conid{KTraversal}\;\Varid{ja}\;\Varid{jb}\;\Varid{ka}\;\Varid{kb}\mathrel{=}\Conid{KTraversal}\;(\Varid{ja}\to \Conid{FreeApplicativePStore}\;\Varid{ka}\;\Varid{kb}\;\Varid{jb}){}\<[E]%
\\[\blanklineskip]%
\>[3]{}\mbox{\onelinecomment  Van Laarhoven representation of traversals}{}\<[E]%
\\
\>[3]{}\mathbf{type}\;\Conid{VTraversal}\;\Varid{ja}\;\Varid{jb}\;\Varid{ka}\;\Varid{kb}\mathrel{=}\forall \Varid{f}\hsforall .~\FN{Applicative}\;\Varid{f}\Rightarrow (\Varid{ka}\to \Varid{f}\;\Varid{kb})\to \Varid{ja}\to \Varid{f}\;\Varid{jb}{}\<[E]%
\ColumnHook
\end{hscode}\resethooks

The Haskell implementation of the isomorphism between
\ensuremath{\Conid{KPartialLens}} and \ensuremath{\Conid{VPartialLens}} and the isomorphism between
\ensuremath{\Conid{KTraversal}} and \ensuremath{\Conid{VTraversal}} is left as an exercise to the interested
reader.

The second advantage of using a type synonym for the Van Laarhoven
representation is that values of type \ensuremath{\Conid{VLens}} are values of type
\ensuremath{\Conid{VPartialLens}} and \ensuremath{\Conid{VTraversal}}, while the values of type \ensuremath{\Conid{KLens}} need
to be explicitly converted to \ensuremath{\Conid{KPartialLens}} and \ensuremath{\Conid{KTraversal}}.  If
Haskell's standard library were modified such that \ensuremath{\FN{Pointed}} was a
super-class of \ensuremath{\FN{Applicative}}, then values of type \ensuremath{\Conid{VPartialLens}} would
be of type \ensuremath{\Conid{VTraversal}} as well.

% ----- Configure Emacs -----
%
% Local Variables: ***
% mode: latex ***
% mmm-classes: literate-haskell-lhs2TeX ***
% End: ***

% Traversable functors are Indexed Kleene Stores

\section{The finiteness of traversals}
\label{sec:traversals}

In this section we show another application of the representation
theorem. We show that traversable functors are exactly the finitary
containers. We first introduce the relevant definitions and then
provide the proof.

%% Preliminaries

%
%
\subsection{Applicative functors}
\label{sec:applicative}

%class of applicative functors and laws.
% Applicative morphisms
% Identity and composition
The cartesian product gives the category \ensuremath{\Set} a monoidal structure
$(\ensuremath{\Set,\times,\mathbf{1},\alpha ,\lambda ,\rho })$, where
 $\ensuremath{\alpha }_{X,Y,Z}: \ensuremath{\Conid{X}\;\times\;(\Conid{Y}\;\times\;\Conid{Z})\to (\Conid{X}\;\times\;\Conid{Y})\;\times\;\Conid{Z}}$, 
 $\ensuremath{\lambda }_X : \ensuremath{\mathbf{1}\;\times\;\Conid{X}\to \Conid{X}}$, and
 $\ensuremath{\rho }_X : \ensuremath{\Conid{X}\;\times\;\mathbf{1}\to \Conid{X}}$ 
 are natural isomorphisms expressing associativity of the product,
 left unit and right unit, respectively.

\begin{defn}[Applicative functor]
 An \emph{applicative functor} is a functor \ensuremath{\Conid{F}\colon\Set\to \Set} which is
 strong lax monoidal with respect to this monoidal structure. That is,
 it is equipped with a map and a natural transformation:
\[
\begin{array}{lcl@{\qquad\qquad}l}
\ensuremath{\Varid{u}} &:&  \ensuremath{\mathbf{1}\to \Conid{F}\;\mathbf{1}} 
               &\text{(monoidal unit)}\\
\ensuremath{\star}_{X,Y} &:& \ensuremath{\Conid{F}\;\Conid{X}\;\times\;\Conid{F}\;\Conid{Y}\to \Conid{F}\;(\Conid{X}\;\times\;\Conid{Y})}
               &\text{(monoidal action)}
\end{array}
\]%
such that
\[
\xymatrix{
  \ensuremath{\mathbf{1}\;\times\;\Conid{F}\;\Conid{X}} \ar[d]_{\ensuremath{\Varid{u}\;\times\;\Conid{F}\;\Conid{X}}} 
                \ar[r]^-{\ensuremath{\lambda }}
  &
  \ensuremath{\Conid{F}\;\Conid{X}} 
        \ar@{=}[dd]
 &
  \ensuremath{\Conid{F}\;\Conid{X}\;\times\;\mathbf{1}} \ar[d]^{\ensuremath{\Conid{F}\;\Conid{X}\;\times\;\Varid{u}}}
                 \ar[l]_-{\ensuremath{\rho }}
\\
 \ensuremath{\Conid{F}\;\mathbf{1}\;\times\;\Conid{F}\;\Conid{X}} \ar[d]_{\ensuremath{\star}}
 & &
 \ensuremath{\Conid{F}\;\Conid{X}\;\times\;\Conid{F}\;\mathbf{1}} \ar[d]^{\ensuremath{\star}}
\\
 \ensuremath{\Conid{F}\;(\mathrm{1}\;\times\;\Conid{X})} \ar[r]_-{\ensuremath{\Conid{F}\;\lambda }} 
 &
 \ensuremath{\Conid{F}\;\Conid{X}}
&
 \ensuremath{\Conid{F}\;(\Conid{X}\;\times\;\mathbf{1})} \ar[l]^-{\ensuremath{\Conid{F}\;\rho }}
}
\]%
\[
\xymatrix@C+=2cm{
 \ensuremath{\Conid{F}\;\Conid{X}\;\times\;(\Conid{F}\;\Conid{Y}\;\times\;\Conid{F}\;\Conid{Z})} \ar[d]_{\alpha} 
                             \ar[r]^{\ensuremath{\Conid{F}\;\Conid{X}\;\times\star}}
 &
 \ensuremath{\Conid{F}\;\Conid{X}\;\times\;\Conid{F}\;(\Conid{Y}\;\times\;\Conid{Z})}  \ar[r]^{\ensuremath{\star}}
 &
 \ensuremath{\Conid{F}\;(\Conid{X}\;\times\;(\Conid{Y}\;\times\;\Conid{Z}))}  \ar[d]^{\ensuremath{\Conid{F}\;\alpha }}
 \\
 \ensuremath{(\Conid{F}\;\Conid{X}\;\times\;\Conid{F}\;\Conid{Y})\;\times\;\Conid{F}\;\Conid{Z}} \ar[r]_{\ensuremath{\star\times\;\Conid{F}\;\Conid{Z}}}
 &
 \ensuremath{\Conid{F}\;(\Conid{X}\;\times\;\Conid{Y})\;\times\;\Conid{F}\;\Conid{Z}} \ar[r]_{\ensuremath{\star}}
 &
 \ensuremath{\Conid{F}\;((\Conid{X}\;\times\;\Conid{Y})\;\times\;\Conid{Z})} 
}
\]%

All \ensuremath{\Set} functors are strong, but the strength \ensuremath{\tau \colon\Conid{F}\;\Conid{X}\;\times\;\Conid{Y}\to \Conid{F}\;(\Conid{X}\;\times\;\Conid{Y})} of an applicative functor \ensuremath{\Conid{F}} is required to be
coherent with the monoidal action, i.e. the following diagram
commutes.
\[
\xymatrix@C+=2cm{
     \ensuremath{(\Conid{F}\;\Conid{X}\;\times\;\Conid{F}\;\Conid{Y})\;\times\;\Conid{Z}} \ar[r]^{\alpha^{-1}}\ar[d]_{\ensuremath{\star\times\;\Conid{Z}}} &
         \ensuremath{\Conid{F}\;\Conid{X}\;\times\;(\Conid{F}\;\Conid{Y}\;\times\;\Conid{Z})}\ar[r]^{\ensuremath{\Conid{F}\;\Conid{X}\;\times\;\tau }} & 
         \ensuremath{\Conid{F}\;\Conid{X}\;\times\;\Conid{F}\;(\Conid{Y}\;\times\;\Conid{Z})}\ar[d]^{\ensuremath{\star}}\\
  \ensuremath{\Conid{F}\;(\Conid{X}\;\times\;\Conid{Y})\;\times\;\Conid{Z}} \ar[r]^{\ensuremath{\tau }} &
 \ensuremath{\Conid{F}\;((\Conid{X}\;\times\;\Conid{Y})\;\times\;\Conid{Z})} \ar[r]^{\ensuremath{\Conid{F}\;\alpha }^{-1}} & \ensuremath{\Conid{F}\;(\Conid{X}\;\times\;(\Conid{Y}\;\times\;\Conid{Z}))}}
\]%
\end{defn}

Applicative functors may alternatively be given as a mapping of
objects $\ensuremath{\Conid{F}\colon|\Set|\to |\Set|}$ equipped with two natural transformations
$\ensuremath{\Varid{pure}}_{\ensuremath{\Conid{X}}} : \ensuremath{\Conid{X}\to \Conid{F}\;\Conid{X}}$ and $\ensuremath{\circledast}_{X,Y} : F\,(X \tto Y) \times
F\,X \to F\, Y$, together with some equations
(see~\cite{mcbride08:applicative-programming} for details). This
presentation is more useful for programming and therefore is the one
chosen in Haskell. However, for our purposes, the presentation of
applicative functors as monoidal functors is more convenient. This situation where one presentation is more apt for programming, and another presentation is better for formal reasoning also occurs with monads, where \ensuremath{\Varid{bind}} \ensuremath{(\bind )} is preferred for programming and the multiplication \ensuremath{(\Varid{join})}  is preferred for formal reasoning.  

% \begin{spec}
% pure     :  X -> F X
% pure x   =  F (k_x) (u u1)

% <**>        : F (X => Y) times F X -> F Y
% mf <**> mx  = F ev (mf <*> mx)
% \end{spec}

\begin{defn}[Applicative morphism]
  Let \ensuremath{\Conid{F}} and \ensuremath{\Conid{G}} be applicative functors. An \emph{applicative
    morphism} is a natural transformation \ensuremath{\tau \colon\Conid{F}\to \Conid{G}} that
  respects the unit and multiplication. That is, a natural
  transformation \ensuremath{\tau } such that the following diagrams commute.
\[
\xymatrix{
& \ensuremath{\mathbf{1}} \ar[dl]_{\ensuremath{\Varid{u}}^\ensuremath{\Conid{F}}} \ar[dr]^{\ensuremath{\Varid{u}}^\ensuremath{\Conid{G}}}& \\
\ensuremath{\Conid{F}\;\mathbf{1}} \ar[rr]_{\ensuremath{\tau }_\ensuremath{\mathbf{1}}} & & \ensuremath{\Conid{G}\;\mathbf{1}}
}
\qquad
\xymatrix@C+=2cm{
\ensuremath{\Conid{F}\;\Conid{X}\;\times\;\Conid{F}\;\Conid{Y}}
    \ar[r]^{\ensuremath{\star}^\ensuremath{\Conid{F}}_{\ensuremath{\Conid{X},\Conid{Y}}}}
    \ar[d]_{\ensuremath{\tau }_\ensuremath{\Conid{X}} \ensuremath{\times\;\tau }_\ensuremath{\Conid{Y}}}
 &
\ensuremath{\Conid{F}\;(\Conid{X}\;\times\;\Conid{Y})}
    \ar[d]^{\ensuremath{\tau }_{\ensuremath{\Conid{X}\;\times\;\Conid{Y}}}}
\\
\ensuremath{\Conid{G}\;\Conid{X}\;\times\;\Conid{G}\;\Conid{Y}} 
  \ar[r]_{\ensuremath{\star}^\ensuremath{\Conid{G}}_{\ensuremath{\Conid{X},\Conid{Y}}}}
&
\ensuremath{\Conid{G}\;(\Conid{X}\;\times\;\Conid{Y})}
}
\]%  
\end{defn}

Applicative functors and applicative morphisms form a strict monoidal
category $\App$. 
%
% Make explicit definitions of identity and composition.
%
The identity functor is an applicative functor, and the composition of
applicative functors is an applicative functor. Hence, $\cA$ has the
structure of a strict monoidal category.
% 
% \hl{Add Free applicative functors}
% 
% 
% %distributive laws
% 
% 
\subsection{Traversable functors}

McBride and Paterson~\shortcite{mcbride08:applicative-programming}
characterise \emph{traversable functors} as those equipped with a
family of morphisms $\ensuremath{\traverse}_{F,A,B} : (A \tto F B) \times T A \to F
(T B)$, natural in an applicative functor $F$, and sets $A$ and $B$ (cf. the type synonym \ensuremath{\Conid{VTraversable}} from Section~\ref{sec:lenses_in_Haskell}.)
% \emph{distribute over} all applicative functors. There, distributivity
% of a traversable $T$ over applicative $F$'s meant only the existence
% of natural transformations of type $\Hom{\FC{Set}{Set}}{T F}{F T}$. 
However, without further constraints this characterisation is too
coarse. Hence, Jaskelioff and Ryp\'a\v{c}ek~\shortcite{Jaskelioff:MSFP2012}
proposed the following notion:
%\hl{What about Jeremy Gibbons and Bruno Oliveira's work? -- RO}
%\hl{We should talk about them in the related work -- MJ}

%
% \begin{defn}[Traversable Functor via Distributive Law]
% A functor $T : Set \to Set$  is said to be \emph{traversable}
% if there is a family of natural transformations
% $\delta_{F} : T F \to F T$ natural in $F$ that
% respects the monoidal structure of applicative functor
% composition. Explicitly, for all applicative $F, G : Set \to Set$ and
% applicative morphisms $\alpha : F \to G$, the following diagrams
% of natural transformations commute:
% \[
% \begin{array}{ccc}
% \xymatrix{ 
%    |T F| \ar[r]^{|delta|_|F|}\ar[d]_{|T alpha|} &
%    |F T| \ar[d]^{|alpha|_|T|}
%    \\ |T G| \ar[r]_{|delta|_{|G|}} &
%   |G T|}
% &  
% \xymatrix{
%   & |F T G| \ar[dr]^{|F delta|_{|G|}} \\
% |T F G| \ar[rr]_{|delta|_{|F G|}} \ar[ur]^{|delta|_|F G|}& & |F G T|
%  }
% &
% \xymatrix{ 
%   |T Id| \ar@@/_1em/[r]_{|delta|_|Id|}\ar@@/^1em/[r]^{|id|_|T|} &
%   |Id T| }\\
% \text{naturality}& \text{linearity}&\text{unity}
% \end{array}
% \]%

% \end{defn}

% The notion above can also be expressed in the following manner

\begin{defn}[Traversable functor]
A functor $T : Set \to Set$  is said to be \emph{traversable}
if there is a family of functions
\[ \traverse_{F,A,B} : (A \tto F B) \times T A \to F (T B) \]%
natural in $F$, $A$,
and $B$ that respects the monoidal structure of applicative functor
composition.
 More concretely, for all applicative functors $F, G : Set \to Set$ and
applicative morphisms $\alpha : F \to G$, the following diagrams
should commute:
\[
\begin{array}{c@{\qquad\qquad}c}
\xymatrix@C+=24mm{ 
   \ensuremath{\Conid{T}\;\Conid{A}} \ar[r]^{\ensuremath{\traverse}_{F,A,B}~\ensuremath{(\Varid{f})}}
    \ar[rd]_{\ensuremath{\traverse}_{G,A,B} (\ensuremath{\alpha }_\ensuremath{\Conid{B}} \ensuremath{\circ\Varid{f}})\quad~~}
   &
   \ensuremath{\Conid{F}\;(\Conid{T}\;\Conid{B})} \ar[d]^{\ensuremath{\alpha }_{\ensuremath{\Conid{T}\;\Conid{B}}}}
   \\ &
  \ensuremath{\Conid{G}\;(\Conid{T}\;\Conid{B})}
}
&  
\xymatrix@C+=10mm{
    & \ensuremath{\Conid{F}\;(\Conid{T}\;(\Conid{G}\;\Conid{B}))} \ar[dr]^{~~~\ensuremath{\Conid{F}\;(\traverse}_{\ensuremath{\Conid{G},\Conid{B},\Conid{C}}}\ensuremath{(\Varid{g})})} \\
   \ensuremath{\Conid{T}\;\Conid{A}} \ar[rr]_{\ensuremath{\traverse}_{\ensuremath{\Conid{FG},\Conid{A},\Conid{C}}}(\ensuremath{\Conid{F}\;\Varid{g}}\circ f)} 
   \ar[ur]^{\ensuremath{\traverse}_{\ensuremath{\Conid{F},\Conid{A},\Conid{GB}}} \ensuremath{(\Varid{f})}~}
    & & \ensuremath{\Conid{F}\;(\Conid{G}\;(\Conid{T}\;\Conid{C}))}
}
\\
  \text{naturality}
& \text{linearity}
\end{array}
\]\[
\begin{array}{c}
\xymatrix{ 
  \ensuremath{\Conid{T}\;(\Conid{Id}\;\Conid{A})} \ar@/_1em/[r]_{\ensuremath{\traverse}_{Id,A,A} (\ensuremath{\Varid{id}}_A)}
           \ar@/^1em/[r]^{\ensuremath{\Varid{id}}_\ensuremath{\Conid{TA}}} &
  \ensuremath{\Conid{Id}\;(\Conid{T}\;\Conid{A})} 
}
\\
\text{unity}
\end{array}
\]%
\end{defn}

%We sometimes call the family $|delta|$ a \emph{traversal of $|T|$}.

%    %Vectors are traversable
% Vectors are traversable functors.
% \begin{spec}
% collectVect n F X      :  ton (F X) -> F (ton X)
% collectVect n F X Xvect    =  x0 <*> x1 <*> dots <*> xn

% traverseVect n F X Y          :  (X => F Y) ** ton X -> F (ton Y) 
% traverseVect n F X Y (f, Xvect)  =  (f x0 <*> f x1 <*> dots <*> f xn) 
% \end{spec}

% ----- Configure Emacs -----
%
% Local Variables: ***
% mode: latex ***
% mmm-classes: literate-haskell-lhs2TeX ***
% End: ***

\subsection{Characterising traversable functors}

Let $\App$ be the category of applicative functors and applicative
morphisms.  In order to prove that traversable functors are finitary
containers, we first note that the forgetful functor $U$ from the
category of applicative functors $\cA$ into the category of endofunctors has
a left adjoint $(-)^*$~\cite{Capriotti2014} and therefore we can apply
Theorem~\ref{theorem:generalized-lens-representation} to any traversal
which satisfies the linearity and unity laws.  Hence for every
traversal on $T$
\[
\traverse_{A,B}: \int_{\F:\App} (A \tto U \F B) \tto T A \tto U\F (T B)
\]%
there is a corresponding coalgebra
\[
    t_{A,B} : T\, A \to U R^*_{A,B}(T\,B)
\]%
where $R^*_{A,B}$ is the free applicative functor for $R_{A,B}$.
%
%
% which may be written as 
% \[ |forall a b : Set!. T a -> forall F : Applicative!. (a -> F b) ->  F (T b)|\]
% By Theorem~\ref{thm:representation1} this type is isomorphic to a function
% \[ |forall a b : Set!. T a -> freeR (T b)|\]
%
The following proposition tells us what this free applicative functor
looks like.

\begin{prop}
 The free applicative functor on $R_{A,B}$ is 
$$R^*_{A,B}\, X = \Sigma\, n : \ensuremath{\mathbb{N}}.\ A^n \times (B^n \to X)$$
with action on morphisms  $R^*_{A,B}(h)\,(n,as,f) =  (n,as,h \circ f)$,  
and applicative structure:
\[
\begin{array}{rcl}
  u & : & R^*_{A,B}\,1 \\
  u & = & (0,\ensuremath{\ast}, \lambda bs. \ensuremath{\ast})\\
\\
\ensuremath{\star}_{X,Y} & :&  R^*_{A,B}\, X\times R^*_{A,B}\, Y \to R^*_{A,B} (X \times Y) \\
(n,as,f) \ensuremath{\star} (m,as',g) & = &  (n+m,\, \ensuremath{\Varid{as}\plus \Varid{as'}},\, \lambda bs. \ensuremath{(\Varid{f}\;(\FN{take}\;\Varid{n}\;\Varid{bs}),\Varid{g}\;(\FN{drop}\;\Varid{n}\;\Varid{bs}))})
\end{array}
\]%
where we write \ensuremath{\Conid{X}^n} for vectors of length $n$, i.e. the  
$n$-fold product $\overbrace{X \times \cdots \times X}^{n~\text{times}}$,
% The $n$-fold product gives rise to a functor |exp (-) n|, whose action
% on morphisms is given by
% %
% \begin{spec}
% ton f Xvect  =  (f x0, f x1, dots, f xn)
% \end{spec}
\ensuremath{\plus } for vector append, and \ensuremath{\FN{take}\;\Varid{n}} and \ensuremath{\FN{drop}\;\Varid{n}} for the
functions that given a vector of size $n+m$ return the first $n$
elements and the last $m$ elements respectively.
\end{prop}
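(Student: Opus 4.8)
The plan is to prove the statement by verifying the universal property that characterises the free applicative functor: I would show that $R^*_{A,B}$ as defined is a genuine applicative functor and that it is the value at $R_{A,B}$ of the left adjoint $(-)^*$ to the forgetful functor $U : \App \to \cat{E}$. Concretely, I would establish a bijection, natural in the applicative functor $F$,
\[
\Hom{\App}{R^*_{A,B}}{F} \;\cong\; \ensuremath{\Conid{A}\to \Conid{F}\;\Conid{B}},
\]
which, composed with the isomorphism $\ensuremath{\Conid{A}\to \Conid{F}\;\Conid{B}} \cong \Hom{\FC{\Set}{\Set}}{R_{A,B}}{U\!F}$ of Proposition~\ref{prop:yoneda1} (Equation~\ref{eq:atofb}), is precisely the adjunction isomorphism~\ref{eq:adjoint} that identifies $R^*_{A,B}$ as the free applicative functor on $R_{A,B}$.

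First I would discharge the preliminary obligations. Functoriality of $R^*_{A,B}$ is immediate, since $R^*_{A,B}(h)$ merely postcomposes the continuation $f$ with $h$, so identities and composites are preserved. I would then check that $u$ and $\star$ are natural and satisfy the unit and associativity coherence diagrams of an applicative functor; the definition of $\star$ via vector append together with $\FN{take}$ and $\FN{drop}$ is tailored exactly so that these laws hold on the nose, with the length $n$ recording where a combined vector of $B$'s must be split. Strength coherence is automatic in $\Set$. I would also fix the unit of the construction, the natural transformation $\eta : R_{A,B} \to U\!R^*_{A,B}$ given by $\eta_X(a,f) = (1, a, f)$, viewing $a \in A = A^1$ and $f : B = B^1 \to X$.

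For the universal property, given a function $\bar g : \ensuremath{\Conid{A}\to \Conid{F}\;\Conid{B}}$ (equivalently, by Proposition~\ref{prop:yoneda1}, a natural transformation $g : R_{A,B} \to U\!F$) I would define the candidate extension $\hat g : R^*_{A,B} \to F$ by
\[
\hat g_X(n, as, f) \;=\; F(f)\,(\FN{collect}_n\,\bar g\,as),
\]
where $\FN{collect}_n\,\bar g\,(a_1,\dots,a_n) = \bar g\,a_1 \star \cdots \star \bar g\,a_n$ joins the $n$ effects with the multiplication of $F$ (and for $n=0$ returns $u$, since $B^0 = \mathbf{1}$). I would then show that $\hat g$ is the unique applicative morphism with $U\hat g \circ \eta = g$. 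Uniqueness is the easy half: every element $(n,as,f)$ factors as $R^*_{A,B}(f)$ applied to $\eta(a_1,id_B) \star \cdots \star \eta(a_n,id_B)$, so any applicative morphism agreeing with $g$ on $\eta$-images is forced to equal $\hat g$.

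The main obstacle is the existence half: verifying that $\hat g$ genuinely preserves the applicative structure, in particular that $\hat g((n,as,f)\star(m,as',f')) = \hat g(n,as,f) \star \hat g(m,as',f')$. Unfolding both sides, the left applies $F$ of the paired continuation $\lambda bs.\,(f\,(\FN{take}\,n\,bs),\, f'\,(\FN{drop}\,n\,bs))$ to the $(n+m)$-fold join of effects, while the right joins the separate $F(f)$- and $F(f')$-images; matching them is exactly an instance of the associativity and naturality coherence of $\star$ in $F$, with the $\FN{take}$/$\FN{drop}$ split mirroring the reassociation $B^{n+m} \cong B^n \times B^m$. This bookkeeping, together with the preservation of $u$ (the $n=m=0$ case) and the routine check that $\hat g \mapsto U\hat g\circ\eta$ and $g \mapsto \hat g$ are mutually inverse and natural in $F$, completes the identification of $R^*_{A,B}$ as the free applicative functor on $R_{A,B}$.
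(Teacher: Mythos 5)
Your argument is correct in outline, but note that the paper does not actually prove this proposition: it is stated without proof, the existence of the free applicative functor on a small functor such as $R_{A,B}$ having been delegated to the cited work of Capriotti and Kaposi, from whose general construction the explicit closed form can be read off. Your proof is therefore a genuinely different and more self-contained route: you verify the universal property $\Hom{\App}{R^*_{A,B}}{F}\cong\Hom{\FC{\Set}{\Set}}{R_{A,B}}{U\!F}$ directly, exhibiting the unit $\eta_X(a,f)=(1,a,f)$ and the extension $\hat g_X(n,as,f)=F(f)(\FN{collect}_n\,\bar g\,as)$, with uniqueness forced by the factorisation $(n,as,f)=R^*_{A,B}(f)\bigl(\eta(a_1,id_B)\star\cdots\star\eta(a_n,id_B)\bigr)$. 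The two nontrivial obligations are exactly the ones you flag: that $R^*_{A,B}$ satisfies the monoidal coherence laws (which hold because $\FN{take}$ and $\FN{drop}$ implement $B^{n+m}\cong B^n\times B^m$ compatibly with associativity of addition), and that $\hat g$ preserves $\star$, which reduces to naturality of $\star^F$ in the form $F(f)(x)\star F(f')(y)=F(f\times f')(x\star y)$ together with $\FN{collect}_n\,\bar g\,as\,\star\,\FN{collect}_m\,\bar g\,as'=\FN{collect}_{n+m}\,\bar g\,(as\plus as')$ up to the associativity isomorphism, an instance of the coherence pentagon for $F$. What the paper's route buys is brevity and generality; what yours buys is an actual derivation of the specific formula claimed in the statement, and of the maps $\eta$ and $\hat g$ that the paper later uses (as $\FN{wrap}$ and the reconstruction of $\traverse$ from a coalgebra) without justification.
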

The datatype \ensuremath{\Conid{FreeApplicativePStore}} given in
Section~\ref{sec:lenses_in_Haskell} is a Haskell implementation of the
free applicative functor on $R_{A,B}$, namely $R^*_{A,B}$.

% \noindent Where does the finiteness come from?  In order to answer
% this question, consider the following simpler type for fixed sets |X,Y|,
% and |Z|.
% \begin{equation}
% \label{ftype}
% |forall F : Applicative. (X -> F Y) -> F Z|   
% \end{equation}

% What do its inhabitants look like? 
% %
% The function must return an |F Z|, for some unknown applicative functor |F|.
% %
% Because the function must be polymorphic in the applicative functor |F|,
% the applicative effect in the result must be either
% \begin{itemize}
% \item pure, the function can ignore its argument and return |pure z|,
%   for a |z : Z|; \\ or
% %format ai = "a_i"
% %format ai = "a_i"
% %format xi = "x_i"

% \item obtained from the application of its argument function $f$ to some
%   |x : X|. In general, we can use the vector traversal function on some $X^n$
%   and obtain $F(Y^n)$.
% %
%   Finally, given a function $g : Y^n \to Z$ and applying $F g$ one can
%   obtain the return type.
% \end{itemize}
% \[ rep :
% \xymatrix@@C+=2cm{
%            X^n \ar[r]^{|traverseVect F n X Y f|} 
%         & F(Y^n) \ar[r]^{F~g} 
%         &  FZ
% }
% \]

% Hence, it seems that every function of type (\ref{ftype}) is given by
% a (dependent) triple $(n,i,g)$, where $n : \Nat$, $i : X^n$, and $g :
% Y^n \to Z$. This motivates the following definition:

% \begin{code}
% Rep X Y Z = Sigma n : Nat !. ton X times (ton Y -> Z)     
% \end{code}
% %
% The idea is that |Rep X Y Z| is a representation of the functions of
% type (\ref{ftype}). Note that finiteness is inherent to |Rep X Y Z|:

Hence $R^*_{A,B}\,X$ consists of
\begin{enumerate}
\item a natural number, which we call the \emph{dimension},
\item a finite vector, which we call the \emph{position},
\item a function from a finite vector, which allows us to \emph{peek} into new positions.
\end{enumerate}
In order to make it easier to talk about the different components we
define projections: let $r = (n,i,g) : R^*_{A,B}\,X$, then $\ensuremath{\FN{dim}\;\Varid{r}\mathrel{=}\Varid{n}}$,
$\ensuremath{\FN{pos}\;\Varid{r}\mathrel{=}\Varid{i}}$, and $\ensuremath{\FN{peek}\;\Varid{r}\mathrel{=}\Varid{g}}$.
% \begin{code}
% dim           :  freeR X -> Nat
% dim (n,_,_)   =  n

% pos           :  forall (r: freeR X)!. exp A(dim r)
% pos (_,i,_)   =  i

% peek          :  forall (r: freeR X)!. exp B (dim r) -> X
% peek (_,_,g)  =  g
% \end{code}

Theorem~\ref{theorem:generalized-lens-representation} tells us that
$UR^*$ is a parameterised comonad with the following counit and
comultiplication operations.
\[
\begin{array}{lcl}
\ensuremath{\varepsilon }_{A,X}   & : & UR^*_{A,A}\, X \to X\\
\ensuremath{\varepsilon }_{A,X} (n,as,f) & = & \ensuremath{\Varid{f}\;\Varid{as}}
\\
\\
\ensuremath{\delta }_{A,B,C,X}   & : & UR^*_{A,C}\, X \to UR^*_{A,B}(UR^*_{B,C}\, X)\\
\ensuremath{\delta }_{A,B,C,X} (n,as,f) & = & (n,\, as,\, \lambda bs. (n, bs, f))
\end{array}
\]%
Furthermore, given a traversal of $T$, a coalgebra for 
$UR^*$, $(T,t)$ is given by
$t_{A,B} =\traverse_{A,B}\, \ensuremath{\FN{wrap}}_{A,B}$, where
\[
\begin{array}{lcl}
\ensuremath{\FN{wrap}}_{A,B}   & : & A \to UR^*_{A,B}\, B\\
\ensuremath{\FN{wrap}}_{A,B}\ \ensuremath{\Varid{a}} & = &(1,a,id_b)   
\end{array}
\]%
In the other direction, given a coalgebra for $UR^*$, $(T,t)$, we
obtain a traversal for $T$:
\[
\ensuremath{\traverse}_{A,B}\ \ensuremath{\Varid{f}\;\Varid{x}} = \ensuremath{\mathbf{let}\;(\Varid{n},\Varid{as},\Varid{g})\mathrel{=}\Varid{t}\;\Varid{x}\;\mathbf{in}}\ F(g)\ \ensuremath{(\FN{collect}_n\;\Varid{f}\;\Varid{as})}  
\]%
where $\ensuremath{\FN{collect}_n\;\Varid{f}}\ (x_1,\dots,x_n) =  f(x_1)\star\dots \star f(x_n)$.

\subsection{Finitary containers}

A \emph{finitary container}~\cite{alti:fossacs03} is given by a set of shapes \ensuremath{\Conid{S}}, and an
arity function \ensuremath{\Varid{ar}\colon\Conid{S}\to \mathbb{N}}.
 The \emph{extension} of a finitary container \ensuremath{(\Conid{S},\Varid{ar})} is a functor 
  \ensuremath{\llbracket\Conid{S},\Varid{ar}\rrbracket\colon\Set\to \Set} defined as follows.
\[
 \ensuremath{\llbracket\Conid{S},\Varid{ar}\rrbracket\;\Conid{X}\mathrel{=}\Sigma\;\Varid{s}\colon\Conid{S}.~\Conid{X}^{(\Varid{ar}\;\Varid{s})}}  
\]%
Given an element of an extension of a finitary container \ensuremath{\Varid{c}\mathrel{=}(\Varid{s},\Varid{xs})\colon\Sigma\;\Varid{s}\colon\Conid{S}.~\Conid{X}^{(\Varid{ar}\;\Varid{s})}}, we define projections \ensuremath{\FN{shape}\;\Varid{c}\mathrel{=}\Varid{s}},
and \ensuremath{\FN{contents}\;\Varid{c}\mathrel{=}\Varid{xs}}.

As an example, lists are given by the finitary container $(\Nat,id_{\Nat})$, where the set of shapes indicates the length of the list. Therefore its extension is
\[
\ensuremath{\llbracket\mathbb{N},\Varid{id}\rrbracket\;\Conid{X}\mathrel{=}\Sigma\;\Varid{n}\colon\mathbb{N}.~\Conid{X}^{\Varid{n}}}.
\]%
Vectors of length $n$ are given by the finitary container $(1,\lambda x.n)$. They have only one shape and have a fixed arity.
Streams are containers~\cite{alti:fossacs03} with exactly one shape, but are not finitary.

\begin{lemma}[Finitary containers are traversable]
  The extension of any finitary container $(S,ar)$ is traversable with
  a canonical traversal given by:
\[
\begin{array}{lcl}
\traverse_{F,X,Y}  & :  &(X \tto F\,Y) \times \ensuremath{\llbracket\Conid{S},\Varid{ar}\rrbracket}\,X \ \to\  F\,\ensuremath{\llbracket\Conid{S},\Varid{ar}\rrbracket}\,Y \\ 
\traverse_{F,X,Y}\,(f,(s,xs)) & = & F (\lambda c.\,(s,c)) (\ensuremath{\FN{collect}}_{ar(s)}\ f\ xs)
\end{array}
\]%
% \noindent where $\inj^s_y : y^{ar(s)} \to  |ext (S,ar)|\,y$ injects a vector  into a finitary container of shape $s$.
\end{lemma}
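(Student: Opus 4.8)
The plan is to avoid checking the three traversable laws directly and instead route the proof through the machinery already assembled in this subsection. Write $T = \llbracket S, ar\rrbracket$. The preceding discussion has already set up the applicative instance of Theorem~\ref{theorem:generalized-lens-representation}: taking $\cP = \Set$, $J = T$, $K = \Id$ (so that $R^{(\Id)} = R$), and $\cat{F} = \App$ — whose forgetful functor $U$ into endofunctors carries the free-applicative left adjoint $(-)^*$ by~\cite{Capriotti2014} — the end
\[
\int_{\F:\App} (X \tto U\F Y) \tto T X \tto U\F (T Y)
\]
is exactly the type of a traversal on $T$ natural in the applicative functor, and the linearity and unity laws of Theorem~\ref{theorem:generalized-lens-representation} are precisely the linearity and unity laws required of a traversable functor. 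Hence it suffices to exhibit the corresponding $UR^*$-coalgebra and verify that it is lawful.

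First I would identify the coalgebra. Comparing the canonical traversal in the statement with the coalgebra-to-traversal formula $\traverse_{A,B}\,f\,x = \mathbf{let}\ (n,as,g)=t\,x\ \mathbf{in}\ F(g)\,(\FN{collect}_n\,f\,as)$ displayed just above the lemma, the stated traversal is the one induced by the family
\[
t_{X,Y} : T X \to UR^*_{X,Y}(T Y), \qquad t_{X,Y}(s,xs) = (ar(s),\, xs,\, \lambda c.\,(s,c)).
\]
I would then check that $t$ is natural in $X$ and dinatural in $Y$; both reduce to the functorial action of $T$ and of $R^*$ in their respective parameters and are routine. The substance lies in the two coalgebra laws of Definition~\ref{defn:coalgebra_parameterised_comonad}, verified against the explicit counit and comultiplication for $UR^*$ displayed above the lemma. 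For the counit-coalgebra law, $\varepsilon_{X,TX}(t_{X,X}(s,xs)) = (\lambda c.(s,c))(xs) = (s,xs)$, so $\varepsilon \circ t = id_{TX}$. For the comultiplication-coalgebra law, both $\delta_{X,Y,Z,TZ} \circ t_{X,Z}$ and $UR^*_{X,Y}(t_{Y,Z}) \circ t_{X,Y}$ compute on $(s,xs)$ to $(ar(s),\, xs,\, \lambda w.\,(ar(s),\, w,\, \lambda c.\,(s,c)))$.

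The one place that needs care — and the only genuine content of the argument — is this last equality. The map $\delta$ duplicates the dimension, producing an inner copy of the number $ar(s)$, whereas the right-hand side produces its inner dimension by re-applying $t$ at the reshaped value $(s,w)$; the two agree precisely because in a finitary container the dimension $ar(s)$ depends only on the shape $s$, which is preserved by the retagging map $\lambda c.(s,c)$. This is exactly the structural fact that characterises finitary containers, and it is what makes the coalgebra lawful. With both coalgebra laws established, Theorem~\ref{theorem:generalized-lens-representation} delivers a lawful traversal, which by construction is the canonical one in the statement. As an alternative I could prove the three laws directly, reducing naturality, linearity, and unity of $\traverse$ to the corresponding properties of $\FN{collect}_n$; but that route forces the monoidal-coherence bookkeeping for the iterated $\star$ that the representation theorem here lets us sidestep.
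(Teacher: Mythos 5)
The paper states this lemma without giving any proof at all, so there is nothing to match your argument against; judged on its own, your proof is correct and is the natural one given the machinery assembled in Section~5.3. Your identification of the coalgebra $t_{X,Y}(s,xs) = (ar(s),\,xs,\,\lambda c.(s,c))$ is exactly what the paper's coalgebra-to-traversal formula forces, the counit computation $\varepsilon(t(s,xs)) = (\lambda c.(s,c))(xs) = (s,xs)$ is right, and your comultiplication check is right: $\delta$ duplicates the outer dimension $ar(s)$, while $UR^*_{X,Y}(t_{Y,Z})$ post-composes $\lambda c.(s,c)$ with $t_{Y,Z}$, and the two agree precisely because $t_{Y,Z}(s,w)$ has dimension $ar(s)$ independently of $w$ --- you correctly isolate this shape-only dependence of the arity as the one substantive point. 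Two small things worth making explicit if you write this up: first, an element of the end $\int_{\F:\App}(X \tto U\F Y) \tto TX \tto U\F(TY)$ is by definition dinatural in $\F$, which is what delivers the \emph{naturality} law of traversable functors (the one law not named in Theorem~\ref{theorem:generalized-lens-representation}, whose statement only mentions linearity and unity); second, the identification of $\gamma^{-1}(t)$ with the displayed formula $F(g)(\FN{collect}_n\,f\,as)$ rests on how the adjunct $\radj{\alpha_{U\F}(f)}$ acts on the free applicative, which the paper asserts but does not derive --- you inherit that reliance, which is fair. The alternative direct verification you mention at the end would indeed work but buys nothing except the monoidal bookkeeping for iterated $\star$ that this route avoids.
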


\subsection{Finitary containers from coalgebras}

For the first part of our proof we already showed that every traversal
is isomorphic to an $UR^*$-coalgebra. For the second part, we show
that if $(T,t)$ is a $UR^*$-coalgebra then $T$ is a finitary
container.

\begin{theorem}\label{thm:coalgebra-finitary}
  Let \ensuremath{\Conid{X}\colon\Set} and let $(T,t)$ be a coalgebra for
  $UR^*$. That is, \ensuremath{\Conid{T}\colon\Set\to \Set} is a functor and $t_{A,B} : \ensuremath{\Conid{T}\;\Conid{A}\to UR^*_{a,b}\;(\Conid{T}\;\Conid{B})}$ is a family natural in $A$ and dinatural in $B$ such
  that certain laws hold (see
  Definition~\ref{defn:coalgebra_parameterised_comonad}).  Then \ensuremath{\Conid{T}\;\Conid{X}}
  is isomorphic to the extension of the finitary container \ensuremath{\llbracket\Conid{T1},\lambda \Varid{s}.~\FN{dim}\;(\Varid{t}\;\Varid{s})\rrbracket\;\Conid{X}}.
\end{theorem}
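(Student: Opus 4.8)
The plan is to build the isomorphism $TX \cong \Sigma\, s : T1.\ X^{\FN{dim}(t\,s)}$ explicitly from the coalgebra structure, and then verify the two round-trips using the counit- and comultiplication-coalgebra laws of Definition~\ref{defn:coalgebra_parameterised_comonad}. Write $!_X : X \to 1$ for the unique map into the terminal set. First I would check that the arity $s \mapsto \FN{dim}(t\,s)$ is well defined: unfolding the comultiplication-coalgebra law $\delta_{a,b,c,Tc}\circ t_{a,c} = UR^*_{a,b}(t_{b,c})\circ t_{a,b}$ against the concrete formulas for $\delta$ and $UR^*$ shows component-wise that $\FN{dim}(t_{a,b}\,x)$ and $\FN{pos}(t_{a,b}\,x)$ do not depend on $b$, while naturality of $t$ in the first argument instantiated at $!_X$ gives $\FN{dim}(t_{X,X}\,x) = \FN{dim}(t_{1,X}(T(!_X)\,x))$, so the dimension depends only on the shape $T(!_X)\,x \in T1$. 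I would then define the forward map $\varphi(x) = (T(!_X)\,x,\ \FN{pos}(t_{X,X}\,x))$ and the backward map $\psi(s,xs) = \FN{peek}(t_{1,X}\,s)(xs)$; the typing of these is exactly the dimension computation just mentioned.

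The workhorse of the proof is a single consequence of the comultiplication-coalgebra law: for all $a,b,c$ and $x:Ta$, writing $t_{a,b}\,x = (n,as,g)$ and $t_{a,c}\,x = (n,as,g')$, one has $t_{b,c}(g\,bs) = (n,bs,g')$ for every $bs \in b^n$. I obtain this simply by reading off the three components of the two sides of the law. Everything else follows by specialising $a,b,c$ and combining with the counit-coalgebra law $\varepsilon_{a,Ta}\circ t_{a,a} = \mathrm{id}$, which under $\varepsilon_{A,X}(n,as,f)=f\,as$ says precisely that $\FN{peek}(t_{a,a}\,x)(\FN{pos}(t_{a,a}\,x)) = x$.

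For $\psi\circ\varphi = \mathrm{id}$ I would take $x:TX$, set $t_{X,X}\,x = (n,as,g)$, and use naturality in the first argument at $!_X$ to get $t_{1,X}(T(!_X)\,x) = (n,\ast,g)$ (positions collapse to the point, peek is unchanged); hence $\psi(\varphi\,x) = g\,as$, which equals $x$ by the counit law at $a=X$. For $\varphi\circ\psi = \mathrm{id}$ the contents are recovered immediately: with $t_{1,X}\,s = (n,\ast,g)$ and $y = g\,xs$, the workhorse identity at $(a,b,c)=(1,X,X)$ gives $t_{X,X}\,y = (n,xs,g)$, so $\FN{pos}(t_{X,X}\,y) = xs$. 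The remaining equation $T(!_X)\,y = s$ is the one genuinely delicate point, since $t_{1,X}$ need not be injective and so recovering the shape from a single peek is not immediate. Here I would instantiate the workhorse at $(a,b,c)=(1,X,1)$ to obtain $t_{1,1}\,s = (n,\ast,g'')$ and $t_{X,1}\,y = (n,xs,g'')$ with the \emph{same} $g''$, then apply naturality at $!_X$ to get $t_{1,1}(T(!_X)\,y) = (n,\ast,g'') = t_{1,1}\,s$, and finally apply the counit law at $a=1$ to both $s$ and $T(!_X)\,y$, yielding $s = g''(\ast) = T(!_X)\,y$. Naturality of the resulting isomorphism in $X$ then follows from naturality of $t$ and of $T(!_X)$.
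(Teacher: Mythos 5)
Your proof is correct, and it reaches the same isomorphism by the same basic strategy (explicit maps in both directions, verified via the comultiplication- and counit-coalgebra laws), but the details differ from the paper in a way worth noting. Your backward map coincides with the paper's $\Psi$, but your forward map computes the shape as $T(!_X)\,x$, whereas the paper computes it as $\FN{peek}(t\,x)(\ast^n)$, i.e.\ by peeking into $t_{X,1}\,x$ at the trivial vector. The two agree (essentially by the argument you give at the end: naturality at $!_X$ plus the fact that the counit law makes $t_{1,1}$ split monic), but the choice changes the proof obligations. The paper's round trips are short, purely equational calculations that use only the two coalgebra laws, the concrete formulas for $\delta$ and $\varepsilon$, and the observation that $1^n$ is a singleton; naturality of $t$ in its first parameter is never invoked. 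Your version must additionally appeal to that naturality (at $!_X$) and to injectivity of $t_{1,1}$ in order to recover the shape in the $\varphi\circ\psi$ direction --- this is exactly the ``genuinely delicate point'' you flag, and your resolution of it is sound. Your ``workhorse'' lemma is the componentwise unfolding of the comultiplication-coalgebra law, which the paper performs inline rather than isolating. What each approach buys: your forward map makes the container structure and the naturality of the isomorphism in $X$ transparent (the shape is literally the image under $T(!_X)$, as one expects of a container), at the cost of a slightly longer shape-recovery argument; the paper's peek-based forward map keeps both round trips as direct $\delta$/$\varepsilon$ computations but leaves the identification of $\FN{peek}(t\,x)(\ast^n)$ with $T(!_X)\,x$ implicit.
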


\begin{proof}
 We define an isomorphism between \ensuremath{\Conid{T}\;\Conid{X}} and \ensuremath{\Sigma\;\Varid{s}\colon\Conid{T1}.~\Conid{X}^{(\FN{dim}\;(\Varid{t}\;\Varid{s}))}}.

  Given a value \ensuremath{\Varid{x}\colon\Conid{T}\;\Conid{X}}, the contents of the resulting container are
  simply the position of \ensuremath{(\Varid{t}\;\Varid{x})}.  The shape of the resulting
  container is obtained by peeking into \ensuremath{(\Varid{t}\;\Varid{x})} at the trivial vector
  \ensuremath{\ast^{\Varid{n}}\colon\mathrm{1}^n} where \ensuremath{\Varid{n}} is the dimension of \ensuremath{(\Varid{t}\;\Varid{x})}.
  More formally we define one direction of the isomorphism as
\begin{hscode}\SaveRestoreHook
\column{B}{@{}>{\hspre}l<{\hspost}@{}}%
\column{9}{@{}>{\hspre}c<{\hspost}@{}}%
\column{9E}{@{}l@{}}%
\column{12}{@{}>{\hspre}l<{\hspost}@{}}%
\column{E}{@{}>{\hspre}l<{\hspost}@{}}%
\>[B]{}\Phi {}\<[9]%
\>[9]{}\colon{}\<[9E]%
\>[12]{}\Conid{T}\;\Conid{X}\to \Sigma\;\Varid{s}\colon\Conid{T1}.~\Conid{X}^{(\FN{dim}\;(\Varid{t}\;\Varid{s}))}{}\<[E]%
\\
\>[B]{}\Phi \;\Varid{x}{}\<[9]%
\>[9]{}\mathrel{=}{}\<[9E]%
\>[12]{}\mathbf{let}\;(\Varid{n},\Varid{i},\Varid{g})\mathrel{=}\Varid{t}\;\Varid{x}\;\mathbf{in}\;(\Varid{g}\;(\ast^{\Varid{n}}),\Varid{i}){}\<[E]%
\ColumnHook
\end{hscode}\resethooks

Given a value \ensuremath{(\Varid{s},\Varid{v})\colon\Sigma\;\Varid{s}\colon\Conid{T1}.~\Conid{X}^{(\FN{dim}\;(\Varid{t}\;\Varid{s}))}} we can
create a \ensuremath{\Conid{T}\;\Conid{X}} by peaking into \ensuremath{(\Varid{t}\;\Varid{s})} at $v$.  More formally, the other
direction of the isomorphism is defined as
\begin{hscode}\SaveRestoreHook
\column{B}{@{}>{\hspre}l<{\hspost}@{}}%
\column{12}{@{}>{\hspre}c<{\hspost}@{}}%
\column{12E}{@{}l@{}}%
\column{15}{@{}>{\hspre}l<{\hspost}@{}}%
\column{E}{@{}>{\hspre}l<{\hspost}@{}}%
\>[B]{}\Psi {}\<[12]%
\>[12]{}\colon{}\<[12E]%
\>[15]{}\Sigma\;\Varid{s}\colon\Conid{T1}.~\Conid{X}^{(\FN{dim}\;(\Varid{t}\;\Varid{s}))}\to \Conid{T}\;\Conid{X}{}\<[E]%
\\
\>[B]{}\Psi \;(\Varid{s},\Varid{v}){}\<[12]%
\>[12]{}\mathrel{=}{}\<[12E]%
\>[15]{}\FN{peek}\;(\Varid{t}\;\Varid{s})\;\Varid{v}{}\<[E]%
\ColumnHook
\end{hscode}\resethooks

First we prove that \ensuremath{\Psi \;(\Phi \;\Varid{x})\mathrel{=}\Varid{x}}.
\begin{calculation}
  \begin{array}{cl}
    & \ensuremath{\Psi \;(\Phi \;\Varid{x})} \\ 
    =  &  \comment{definition of \ensuremath{\Psi }, \ensuremath{\Phi }}\\
    & \ensuremath{\mathbf{let}\;(\Varid{n},\Varid{i},\Varid{g})\mathrel{=}\Varid{t}\;\Varid{x}\;\mathbf{in}\;\FN{peek}\;(\Varid{t}\;(\Varid{g}\;(\ast^{\Varid{n}})))\;\Varid{i}}\\
    =  & \comment{map on morphisms of \ensuremath{UR^*_{a,b}}} \\
    & \ensuremath{\mathbf{let}\;(\Varid{n},\Varid{i},\Varid{h})\mathrel{=}UR^*_{a,b}\;(\Varid{t})\;(\Varid{t}\;\Varid{x})\;\mathbf{in}\;\FN{peek}\;(\Varid{h}\;(\ast^{\Varid{n}}))\;\Varid{i}}\\
    =  &  \comment{comultiplication-coalgebra law}\\
    & \ensuremath{\mathbf{let}\;(\Varid{n},\Varid{i},\Varid{h})\mathrel{=}\delta \;(\Varid{t}\;\Varid{x})\;\mathbf{in}\;\FN{peek}\;(\Varid{h}\;(\ast^{\Varid{n}}))\;\Varid{i}}\\
    =  &  \comment{definition of $\delta$ and \ensuremath{\FN{peek}}}\\
    & \ensuremath{\mathbf{let}\;(\Varid{n},\Varid{i},\Varid{g})\mathrel{=}(\Varid{t}\;\Varid{x})\;\mathbf{in}\;\Varid{g}\;\Varid{i}}\\
    =  &  \comment{definition of $\epsilon$}\\
    & \ensuremath{\varepsilon \;(\Varid{t}\;\Varid{x})}\\
    =  &  \comment{counit-coalgebra law}\\
    & x
  \end{array}
\end{calculation}

% \begin{code}
%    Psi (Phi x) 
% =  sp  {definition of Psi, Phi, and map}
%    let (n,i,h) = map t (t x) in peek (h (replicate n single)) i
% =      {coalgebra law for duplicate}
%    let (n,i,h) = duplicate (t x) in peek (h (replicate n single)) i
% =      {definition of duplicate and peek}
%    let (n,i,g) = (t x) in g i
% =      {definition of extract}
%    extract (t x)
% =      {coalgebroid law for extract}
%    x
% \end{code}

Last we prove that \ensuremath{\Phi \;(\Psi \;(\Varid{s},\Varid{v}))\mathrel{=}(\Varid{s},\Varid{v})}.

\begin{calculation}
  \begin{array}[b]{cl}
    & \ensuremath{\Phi \;(\Psi \;(\Varid{s},\Varid{v}))} \\
    =  & \comment{definition of \ensuremath{\Psi }, \ensuremath{\Phi }, and map on morphisms of \ensuremath{UR^*_{a,b}}}\\
    & \ensuremath{\mathbf{let}\;\{\mskip1.5mu (\anonymous ,\anonymous ,\Varid{h})\mathrel{=}UR^*_{a,b}\;\Varid{t}\;(\Varid{t}\;\Varid{s});(\Varid{n},\Varid{i},\Varid{g})\mathrel{=}\Varid{h}\;\Varid{v}\mskip1.5mu\}\;\mathbf{in}\;(\Varid{g}\;(\ast^{\Varid{n}}),\Varid{i})}\\
    =  & \comment{comultiplication-coalgebra law}\\
    & \ensuremath{\mathbf{let}\;\{\mskip1.5mu (\anonymous ,\anonymous ,\Varid{h})\mathrel{=}\delta \;(\Varid{t}\;\Varid{s});(\Varid{n},\Varid{i},\Varid{g})\mathrel{=}\Varid{h}\;\Varid{v}\mskip1.5mu\}\;\mathbf{in}\;(\Varid{g}\;(\ast^{\Varid{n}}),\Varid{i})}\\
    =  & \comment{definition of $\delta$}\\
    & \ensuremath{\mathbf{let}\;(\Varid{n},\Varid{j},\Varid{g})\mathrel{=}\Varid{t}\;\Varid{s}\;\mathbf{in}\;(\Varid{g}\;(\ast^{\Varid{n}}),\Varid{v})}\\
    =  & \comment {\ensuremath{\Varid{j}\mathrel{=}(\ast^{\Varid{n}})} because $1^n$ has a unique element}\\
    & \ensuremath{\mathbf{let}\;(\Varid{n},\Varid{j},\Varid{g})\mathrel{=}\Varid{t}\;\Varid{s}\;\mathbf{in}\;(\Varid{g}\;\Varid{j},\Varid{v})}\\
    =  &\comment{definition of $\epsilon$}\\
    & \ensuremath{(\varepsilon \;(\Varid{t}\;\Varid{s}),\Varid{v})}\\
    =  & \comment{counit-coalgebra law}\\
    & (s,v)
  \end{array}
\end{calculation}
% \begin{code}
%    Phi (Psi (s,v)) 
% =  sp  {definition of Psi, Phi, and map}
%    let {(_,_,h) = map t (t s); (n,i,g) = h v} in (g (replicate n single), i)
% =      {coalgebra law for duplicate}
%    let {(_,_,h) = duplicate (t s); (n,i,g) = h v} in (g (replicate n single), i)
% =      {definition of duplicate}
%    let (n,j,g) = t s in (g (replicate n single), v)
% =      {j = (replicate n single) because {-"1^n "-} has a unique element}
%    let (n,j,g) = t s in (g j, v)
% =      {definition of extract}
%    (extract (t s),v)
% =      {coalgebra law for extract}
%    (s,v)
% \end{code}
\end{proof}

\bigskip
\begin{corollary}
Let \ensuremath{\Conid{X}\colon\Set} and \ensuremath{\Conid{T}\colon\Set\to \Set} be a traversable functor.
Then \ensuremath{\Conid{T}\;\Conid{X}} is isomorphic to the finitary container \ensuremath{\llbracket\Conid{T1},\lambda \Varid{s}.~\FN{dim}\;(\traverse\;\FN{wrap}\;\Varid{s})\rrbracket\;\Conid{X}}.
\end{corollary}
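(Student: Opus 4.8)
The plan is to chain together the two results already established in this subsection, so that the corollary follows with essentially no new computation. First I would record that the hypotheses needed to pass from a traversal to a coalgebra are in place: as noted at the start of this subsection, the forgetful functor $U$ from the category $\App$ of applicative functors to the category of $\Set$-endofunctors has a left adjoint $(-)^*$, so Theorem~\ref{theorem:generalized-lens-representation} applies with $\cat{F}=\App$ and $\cat{E}$ a small full subcategory containing the functors $R_{A,B}$, taking $K=\Id$ so that $R^{(K)}=R$ and $UR^{(K)*}=UR^*$. A traversable functor $T$ is by definition equipped with a family
\[
\traverse_{A,B} : \int_{\F:\App} (A \tto U\F\,B) \tto T\,A \tto U\F\,(T\,B)
\]
satisfying the linearity and unity laws, and by Theorem~\ref{theorem:generalized-lens-representation} this family corresponds to a genuine $UR^*$-coalgebra $(T,t)$ whose structure map, as spelled out just before Theorem~\ref{thm:coalgebra-finitary}, is $t_{A,B} = \traverse_{A,B}\,\FN{wrap}_{A,B}$. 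The content I am borrowing here is precisely that the linearity and unity laws of the traversal are \emph{equivalent} to the comultiplication- and counit-coalgebra laws, so $(T,t)$ is a bona fide coalgebra and not merely a natural and dinatural family.

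Second, I would apply Theorem~\ref{thm:coalgebra-finitary} verbatim to the coalgebra $(T,t)$. That theorem supplies the mutually inverse maps $\Phi$ and $\Psi$ witnessing the isomorphism
\[
T\,X \ \cong\ \Sigma\,s:T1.\ X^{(\FN{dim}\,(t\,s))} \ =\ \llbracket T1,\ \lambda s.\,\FN{dim}\,(t\,s)\rrbracket\,X .
\]
Substituting the concrete structure map $t=\traverse\,\FN{wrap}$ into the arity function rewrites the right-hand side as $\llbracket T1,\ \lambda s.\,\FN{dim}\,(\traverse\,\FN{wrap}\,s)\rrbracket\,X$, which is exactly the finitary container claimed in the statement.

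I expect no genuine obstacle, since both ingredients have already been discharged: the coalgebra laws by the generalised lens representation, and the finiteness isomorphism by the explicit $\Phi/\Psi$ argument of Theorem~\ref{thm:coalgebra-finitary}. The only care required is bookkeeping, namely checking that the abstract monoidal adjunction of Theorem~\ref{theorem:generalized-lens-representation} instantiates correctly to the applicative case---$\App$ is strict monoidal under functor composition and $U$ is strict monoidal---and that with $K=\Id$ the counit $\varepsilon$ and comultiplication $\delta$ figuring in Theorem~\ref{thm:coalgebra-finitary} are exactly those of the parameterised store comonad $UR^*$ computed above. With this identification in place the corollary is immediate.
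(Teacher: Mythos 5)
Your proposal is correct and matches the paper's own argument, which simply applies Theorem~\ref{thm:coalgebra-finitary} to the $UR^*$-coalgebra $t=\traverse\,\FN{wrap}$, relying (as you do) on the earlier observation that Theorem~\ref{theorem:generalized-lens-representation} turns a lawful traversal into a lawful coalgebra. Your version just makes the bookkeeping of that instantiation explicit.
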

\begin{proof}
Apply Theorem~\ref{thm:coalgebra-finitary} with the $UR^*$-coalgebra
$t = \traverse\, \ensuremath{\FN{wrap}}$.
\end{proof}

All that remains to show is that this isomorphism maps the traversal
of \ensuremath{\Conid{T}} to the canonical traversal of the finitary container.

\begin{theorem}
  Let \ensuremath{\Conid{T}\colon\Set\to \Set} be a traversable functor and let \ensuremath{\Phi \colon\Conid{T}\;\Conid{X}\to \llbracket\Conid{T1},\lambda \Varid{s}.~\FN{dim}\;(\traverse\;\FN{wrap}\;\Varid{s})\rrbracket\;\Conid{X}} be the isomorphism
  defined above.  Let \ensuremath{\Conid{F}} be an arbitrary applicative functor and let
  \ensuremath{\Varid{f}\colon\Conid{A}\to \Conid{F}\;\Conid{B}} and \ensuremath{\Varid{x}\colon\Conid{T}\;\Conid{A}}.  Then, \ensuremath{\Conid{F}\;(\Phi )\;(\traverse\;\Varid{f}\;\Varid{x})\mathrel{=}\traverse\;\Varid{f}\;(\Phi \;\Varid{x})}.
\end{theorem}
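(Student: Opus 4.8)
The plan is to unfold both sides with the formulas already established, reduce everything to a single pointwise identity about the peek function of the coalgebra $t = \traverse\,\FN{wrap}$, and discharge that identity with the comultiplication-coalgebra law, exactly as in the calculations of Theorem~\ref{thm:coalgebra-finitary}. Write $t_{A,B} : T\,A \to UR^*_{A,B}(T\,B)$ for this $UR^*$-coalgebra. Since the coalgebra/traversal correspondence of Theorem~\ref{theorem:generalized-lens-representation} is a (mutually inverse) isomorphism, the traversal of $T$ is recovered from $t$ by $\traverse\,f\,x = F(g)\,(\FN{collect}_n\,f\,i)$ where $(n,i,g) = t_{A,B}\,x$; and $\Phi$ at a set $A$ is built from $t_{A,1}$, namely $\Phi_A\,y = (g_0(\ast^m),i_0)$ where $(m,i_0,g_0) = t_{A,1}\,y$.

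First I would reconcile the two different coalgebra instances that appear on the two sides: the left uses $t_{A,B}\,x = (n,i,g)$ with $g : B^n \to T\,B$, whereas $\Phi_A$ uses $t_{A,1}\,x = (n_0,i_0,g_0)$ with $g_0 : 1^{n_0}\to T1$. Applying dinaturality of $t$ in its second index (Definition~\ref{def:dinatural}) to the unique map ${!}_B : B \to 1$ gives $UR^*_{A,B}(T{!}_B)(t_{A,B}\,x) = UR^*_{A,{!}_B}(T1)(t_{A,1}\,x)$; reading off the dimension and position components forces $n_0 = n$ and $i_0 = i$. Thus both sides carry the same dimension $n$ and the same position $i : A^n$, and the shape manufactured by $\Phi_A$ is $s = g_0(\ast^n) : T1$.

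The core of the argument is the identity $\Phi_B \circ g = \lambda c.\,(s,c)$, i.e. $\Phi_B(g(c)) = (g_0(\ast^n),c)$ for every $c : B^n$. I would obtain it from the comultiplication-coalgebra law (Definition~\ref{defn:coalgebra_parameterised_comonad}) with indices $A,B,1$, namely $UR^*_{A,B}(t_{B,1}) \circ t_{A,B} = \delta_{A,B,1}\circ t_{A,1}$: evaluating at $x$ and expanding $UR^*$ on morphisms and $\delta$ turns the left side into $(n,i,t_{B,1}\circ g)$ and the right side into $(n,i,\lambda bs.\,(n,bs,g_0))$, so $t_{B,1}(g(c)) = (n,c,g_0)$, and unfolding $\Phi_B$ yields $\Phi_B(g(c)) = (g_0(\ast^n),c)$. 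The same law with indices $A,1,1$ gives $t_{1,1}(g_0(\ast^n)) = (n,\ast^n,g_0)$, hence $\FN{dim}(t_{1,1}(g_0(\ast^n))) = n$; this is exactly the statement that the arity of the shape $s$ equals $n$, which certifies that $(s,i)$ is a legitimate element of the container extension and that the $\FN{collect}$ appearing in the canonical container traversal is $\FN{collect}_n$.

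Finally I would assemble the two sides. The left unfolds to $F(\Phi_B)(F(g)(\FN{collect}_n\,f\,i))$, which by functoriality of $F$ equals $F(\Phi_B \circ g)(\FN{collect}_n\,f\,i)$. The right, using $\Phi_A\,x = (s,i)$ together with the arity computation $\FN{dim}(t_{1,1}\,s) = n$, unfolds through the canonical traversal of the finitary container to $F(\lambda c.\,(s,c))(\FN{collect}_n\,f\,i)$. These coincide precisely because $\Phi_B \circ g = \lambda c.\,(s,c)$, established above. I expect the main obstacle to be purely the bookkeeping of the second comonad index: one must notice that $\Phi$ is assembled from the coalgebra instances $t_{\cdot,1}$ while the traversal is recovered from $t_{\cdot,B}$, and invoke dinaturality to line them up. Once that observation is made, the two applications of the comultiplication-coalgebra law do all the genuine work, mirroring the proof of Theorem~\ref{thm:coalgebra-finitary}.
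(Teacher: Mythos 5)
Your proposal is correct and follows essentially the same route as the paper: your identity $\Phi_B \circ g = \lambda c.\,(s,c)$, derived from the comultiplication-coalgebra law at indices $A,B,1$, is exactly the paper's second auxiliary lemma $\Phi(\FN{peek}(\traverse\,\FN{wrap}\,x)\,w) = (\FN{shape}(\Phi\,x),w)$, your reconciliation of dimension and position across the two coalgebra instances is the paper's first lemma $\FN{pos}(\traverse\,\FN{wrap}\,x)=\FN{contents}(\Phi\,x)$, and the final assembly via functoriality of $F$ is identical. Your explicit dinaturality/index bookkeeping is merely a more careful rendering of steps the paper leaves implicit.
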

\begin{proof}
Before beginning we prove two small lemmas.
First that \ensuremath{\FN{pos}\;(\traverse\;\FN{wrap}\;\Varid{x})\mathrel{=}\FN{contents}\;(\Phi \;\Varid{x})}.

\begin{calculation}
  \begin{array}{cl}
    & \ensuremath{\FN{pos}\;(\traverse\;\FN{wrap}\;\Varid{x})}\\
    = & \comment {definition of \ensuremath{\FN{pos}}}\\
    & \ensuremath{\mathbf{let}\;(\anonymous ,\Varid{i},\anonymous )\mathrel{=}\traverse\;\FN{wrap}\;\Varid{x}\;\mathbf{in}\;\Varid{i}}\\
    = & \comment{definition of \ensuremath{\Phi }}\\
    & \ensuremath{\FN{contents}\;(\Phi \;\Varid{x})}  
  \end{array}
\end{calculation}
% \begin{code}
%     pos (traverse wrap x)
% =   sp  {definition of pos}
%     let (_,i,_) = traverse wrap x in i
% =       {definition of Phi}
%     contents (Phi x)
% \end{code}
%

\noindent Second, we prove that 
\ensuremath{\Phi \;(\FN{peek}\;(\traverse\;\FN{wrap}\;\Varid{x})\;\Varid{w})\mathrel{=}(\FN{shape}\;(\Phi \;\Varid{x}),\Varid{w})}
\begin{calculation}
  \begin{array}{cl}
    &  \ensuremath{\Phi \;(\FN{peek}\;(\traverse\;\FN{wrap}\;\Varid{x})\;\Varid{w})}\\
    = &  \comment{definition of \ensuremath{\FN{peek}}}\\
    &  \ensuremath{\mathbf{let}\;(\anonymous ,\anonymous ,\Varid{g})\mathrel{=}\traverse\;\FN{wrap}\;\Varid{x}\;\mathbf{in}\;\Phi \;(\Varid{g}\;\Varid{w})}\\
    = &  \comment{definition of \ensuremath{\Phi }}\\
    &  \ensuremath{\mathbf{let}\;\{\mskip1.5mu (\anonymous ,\anonymous ,\Varid{g})\mathrel{=}\traverse\;\FN{wrap}\;\Varid{x};(\Varid{n},\Varid{i},\Varid{h})\mathrel{=}\traverse\;\FN{wrap}\;(\Varid{g}\;\Varid{w})\mskip1.5mu\}\;\mathbf{in}\;(\Varid{h}\;(\ast^{\Varid{n}}),\Varid{i})}\\
    = &  \comment{definition of \ensuremath{UR^*_{a,b}}}\\
    &  \ensuremath{\mathbf{let}\;\{\mskip1.5mu (\anonymous ,\anonymous ,\Varid{g})\mathrel{=}UR^*_{a,b}\;(\traverse\;\FN{wrap})\;(\traverse\;\FN{wrap}\;\Varid{x});(\Varid{n},\Varid{i},\Varid{h})\mathrel{=}\Varid{g}\;\Varid{w}\mskip1.5mu\}\;\mathbf{in}\;(\Varid{h}\;(\ast^{\Varid{n}}),\Varid{i})}\\
    = &  \comment{coalgebra law for $\delta$}\\
    &  \ensuremath{\mathbf{let}\;\{\mskip1.5mu (\anonymous ,\anonymous ,\Varid{g})\mathrel{=}\delta \;(\traverse\;\FN{wrap}\;\Varid{x});(\Varid{n},\Varid{i},\Varid{h})\mathrel{=}\Varid{g}\;\Varid{w}\mskip1.5mu\}\;\mathbf{in}\;(\Varid{h}\;(\ast^{\Varid{n}}),\Varid{i})}\\
    = &  \comment{definition of $\delta$}\\
    &  \ensuremath{\mathbf{let}\;(\anonymous ,\anonymous ,\Varid{g})\mathrel{=}\traverse\;\FN{wrap}\;\Varid{x}\;\mathbf{in}\;(\Varid{g}\;(\ast^{\Varid{n}}),\Varid{w})}\\
    = &  \comment{definition of \ensuremath{\Phi }}\\
    &  \ensuremath{(\FN{shape}\;(\Phi \;\Varid{x}),\Varid{w})}
  \end{array}
\end{calculation}

\noindent Lastly, we prove our main result.

\begin{calculation}
  \begin{array}[b]{cl}
    &  \ensuremath{\Conid{F}\;(\Phi )\;(\traverse\;\Varid{f}\;\Varid{x})}\\
    = &  \comment{isomorphism in Theorem~\ref{theorem:generalized-lens-representation}}\\
    % &  |F(Phi) (rep f (traverse wrap x))|\\
    % = &  \comment{definition of \hl{rep}}\\
    &  \ensuremath{\mathbf{let}\;(\Varid{n},\Varid{i},\Varid{g})\mathrel{=}\traverse\;\FN{wrap}\;\Varid{x}\;\mathbf{in}\;\Conid{F}\;(\Phi )\;(\Conid{F}\;(\Varid{g})\;(\FN{collect}_n\;\Varid{f}\;\Varid{i}))}\\
    = &  \comment{functors respect composition}\\
    &  \ensuremath{\mathbf{let}\;(\Varid{n},\Varid{i},\Varid{g})\mathrel{=}\traverse\;\FN{wrap}\;\Varid{x}\;\mathbf{in}\;\Conid{F}\;(\Phi \circ\Varid{g})\;(\FN{collect}_n\;\Varid{f}\;\Varid{i})}\\
    = &  \comment{application of above two lemmas}\\
    &  \ensuremath{\mathbf{let}\;(\Varid{s},\Varid{v})\mathrel{=}\Phi \;\Varid{x}\;\mathbf{in}\;\Conid{F}\;(\lambda \Varid{c}.~(\Varid{s},\Varid{c}))\;(\FN{collect}_n\;\Varid{f}\;\Varid{v})}\\
    = &  \comment{definition of canonical traverse for finitary containers}\\
    &  \ensuremath{\traverse\;\Varid{f}\;(\Phi \;\Varid{x})}
  \end{array}
\end{calculation}
% \begin{code}
%     map Phi (traverse f x)
% =   sp  {rep-wrap isomorphism}
%     map Phi (rep f (traverse wrap x))
% =       {definition of rep}
%     let (n,i,g) = (traverse wrap x) in map Phi (map g (mapA f i))
% =       {map composition law}
%     let (n,i,g) = (traverse wrap x) in map (Phi . g) (mapA f i)
% =       {application of above two lemmas}
%     let (s,v) = Phi x in map (s,) (mapA f v)
% =       {definition of canonical traverse for finitary containers}
%     traverse f (Phi x)
% \end{code}
\end{proof}

The isomorphism between \ensuremath{\Conid{T}} and \ensuremath{\llbracket\Conid{T1},\lambda \Varid{s}.~\FN{dim}\;(\traverse\;\FN{wrap}\;\Varid{s})\rrbracket}
must be natural by construction.  However, naturality is also an
immediate consequence of the preceding theorem because traversing
with the identity functor $\Id$ is equivalent to the mapping on
morphisms of a traversable functor.

% ----- Configure Emacs -----
%
% Local Variables: ***
% mode: latex ***
% mmm-classes: literate-haskell-lhs2TeX ***
% End: ***

% Representing algebraic theories

\section{Implementing algebraic theories}
\label{sec:algebraic_theories}

As a last application of the representation theorem, we take a look at
the case where we consider $\cat{M}$, the category of monads with monad homomorphisms. In
this situation, the functor $(-)^* : \cat{E}\to\cat{M}$, maps any
functor $F:\cat{E}$ to $F^*$, the free monad on $F$, while the functor
$U: \cat{M} \to \cat{E}$ forgets the monad structure. The
representation theorem then states that
\begin{equation}\label{iso-monad}
 \int_{M \in \cat{M}}  (A \tto U\!M\,B) \tto U\!M\,X \quad\cong\quad UR_{A,B}^*\,X   
\end{equation}
where, $R_{A,B}\,X = A \times (B \tto X)$ is the parameterised store
comonad.

In Haskell, we can write the isomorphism~(\ref{iso-monad}) as
\[
\ensuremath{\forall \Varid{m}\hsforall .~\FN{Monad}\;\Varid{m}\Rightarrow (\Varid{a}\to \Varid{m}\;\Varid{b})\to \Varid{m}\;\Varid{x}}
  \quad\cong\quad
\ensuremath{\Conid{Free}\;(\Conid{PStore}\;\Varid{a}\;\Varid{b})\;\Varid{x}} 
\]%
where \ensuremath{\Conid{PStore}} (as given in Section~\ref{sec:lenses_in_Haskell}) and the free monad construction are as follows:
\begin{hscode}\SaveRestoreHook
\column{B}{@{}>{\hspre}l<{\hspost}@{}}%
\column{3}{@{}>{\hspre}l<{\hspost}@{}}%
\column{20}{@{}>{\hspre}l<{\hspost}@{}}%
\column{E}{@{}>{\hspre}l<{\hspost}@{}}%
\>[B]{}\mathbf{newtype}\;\Conid{PStore}\;\Varid{a}\;\Varid{b}\;\Varid{x}\mathrel{=}\Conid{PStore}\;(\Varid{b}\to \Varid{x})\;\Varid{a}{}\<[E]%
\\[\blanklineskip]%
\>[B]{}\mathbf{data}\;\Conid{Free}\;\Varid{f}\;\Varid{x}\mathrel{=}\Conid{Unit}\;\Varid{x}\mid \Conid{Branch}\;(\Varid{f}\;(\Conid{Free}\;\Varid{f}\;\Varid{x})){}\<[E]%
\\[\blanklineskip]%
\>[B]{}\mathbf{instance}\;\FN{Functor}\;\Varid{f}\Rightarrow \FN{Monad}\;(\Conid{Free}\;\Varid{f})\;\mathbf{where}{}\<[E]%
\\
\>[B]{}\hsindent{3}{}\<[3]%
\>[3]{}\Varid{return}{}\<[20]%
\>[20]{}\mathrel{=}\Conid{Unit}{}\<[E]%
\\
\>[B]{}\hsindent{3}{}\<[3]%
\>[3]{}\Conid{Pure}\;\Varid{x}\bind \Varid{f}{}\<[20]%
\>[20]{}\mathrel{=}\Varid{f}\;\Varid{x}{}\<[E]%
\\
\>[B]{}\hsindent{3}{}\<[3]%
\>[3]{}\Conid{Branch}\;\Varid{xs}\bind \Varid{f}{}\<[20]%
\>[20]{}\mathrel{=}\Conid{Branch}\;(\FN{fmap}\;(\bind \Varid{f})\;\Varid{xs}){}\<[E]%
\ColumnHook
\end{hscode}\resethooks

This way of constructing a free monad from an arbitrary functor
requires a recursive datatype.  The
isomorphism~(\ref{iso-monad}), on the other hand, shows a
non-recursive way of describing the free monad on functors of the form
\ensuremath{\Conid{PStore}\;\Varid{a}\;\Varid{b}}.

While this result seems to be of limited applicability, we note that
every signature of an algebraic operation with parameter \ensuremath{\Varid{a}} and arity
\ensuremath{\Varid{b}} determines a functor of this form. Hence, the theorem tells us how
to construct the free monad on a given signature of a single algebraic
operation.  Intuitively the type
\[
\ensuremath{\forall \Varid{m}\hsforall .~\FN{Monad}\;\Varid{m}\Rightarrow (\Varid{a}\to \Varid{m}\;\Varid{b})\to \Varid{m}\;\Varid{x}}
\]%
describes a monadic computation \ensuremath{\Varid{m}\;\Varid{x}} in which the only source of
impurity is the operation of type \ensuremath{\Varid{a}\to \Varid{m}\;\Varid{b}} in the argument. This
type can be implemented in Haskell in the following manner, where we
have abstracted over the types of the argument operation.
\begin{hscode}\SaveRestoreHook
\column{B}{@{}>{\hspre}l<{\hspost}@{}}%
\column{3}{@{}>{\hspre}l<{\hspost}@{}}%
\column{13}{@{}>{\hspre}l<{\hspost}@{}}%
\column{E}{@{}>{\hspre}l<{\hspost}@{}}%
\>[B]{}\mathbf{newtype}\;\Conid{FreeOp}\;\Varid{primOp}\;\Varid{x}\mathrel{=}\Conid{FreeOp}\;\{\mskip1.5mu \Varid{runOp}\mathbin{::}\forall \Varid{m}\hsforall .~\FN{Monad}\;\Varid{m}\Rightarrow \Varid{primOp}\;\Varid{m}\to \Varid{m}\;\Varid{x}\mskip1.5mu\}{}\<[E]%
\\[\blanklineskip]%
\>[B]{}\mathbf{instance}\;\FN{Monad}\;(\Conid{FreeOp}\;\Varid{primOp})\;\mathbf{where}{}\<[E]%
\\
\>[B]{}\hsindent{3}{}\<[3]%
\>[3]{}\Varid{return}\;\Varid{x}{}\<[13]%
\>[13]{}\mathrel{=}\Conid{FreeOp}\;(\Varid{const}\;(\Varid{return}\;\Varid{x})){}\<[E]%
\\
\>[B]{}\hsindent{3}{}\<[3]%
\>[3]{}\Varid{x}\bind \Varid{f}{}\<[13]%
\>[13]{}\mathrel{=}\Conid{FreeOp}\;(\lambda \Varid{op}\to \Varid{runOp}\;\Varid{x}\;\Varid{op}\bind \lambda \Varid{a}\to \Varid{runOp}\;(\Varid{f}\;\Varid{a})\;\Varid{op}){}\<[E]%
\ColumnHook
\end{hscode}\resethooks

Notice that the bind operation for \ensuremath{\Conid{FreeOp}} is not recursive, but
is implemented in terms of the bind operation for an arbitrary
abstract monad.

For example, exceptions in a type \ensuremath{\Varid{e}} can be given by a nullary operation \ensuremath{\Varid{throw}}
with parameter \ensuremath{\Varid{e}}.
\footnote{%
  In order to avoid clutter, we sometimes use a type synonym where a
  real implementation would require a newtype, with its associated
  constructor and destructor.}
\begin{hscode}\SaveRestoreHook
\column{B}{@{}>{\hspre}l<{\hspost}@{}}%
\column{E}{@{}>{\hspre}l<{\hspost}@{}}%
\>[B]{}\mathbf{type}\;\Conid{Exc}\;\Varid{e}\;\Varid{m}\mathrel{=}\Varid{e}\to \Varid{m}\;\emptyset{}\<[E]%
\ColumnHook
\end{hscode}\resethooks
where \ensuremath{\emptyset} is the empty type, and hence \ensuremath{\Conid{FreeOp}\;(\Conid{Exc}\;\Varid{e})} is the type
of monadic computations which can throw an exception using the
following operation:
\begin{hscode}\SaveRestoreHook
\column{B}{@{}>{\hspre}l<{\hspost}@{}}%
\column{10}{@{}>{\hspre}l<{\hspost}@{}}%
\column{E}{@{}>{\hspre}l<{\hspost}@{}}%
\>[B]{}\Varid{throw}{}\<[10]%
\>[10]{}\mathbin{::}\Varid{e}\to \Conid{FreeOp}\;(\Conid{Exc}\;\Varid{e})\;\emptyset{}\<[E]%
\\
\>[B]{}\Varid{throw}\;\Varid{e}{}\<[10]%
\>[10]{}\mathrel{=}\Conid{FreeOp}\;(\lambda \Varid{\char95 throw}\to \Varid{\char95 throw}\;\Varid{e}){}\<[E]%
\ColumnHook
\end{hscode}\resethooks

We may model environments in \ensuremath{\Varid{r}} by an operation \ensuremath{\Varid{ask}} with parameter
\ensuremath{()} and arity \ensuremath{\Varid{r}}.
\begin{hscode}\SaveRestoreHook
\column{B}{@{}>{\hspre}l<{\hspost}@{}}%
\column{E}{@{}>{\hspre}l<{\hspost}@{}}%
\>[B]{}\mathbf{type}\;\Conid{Env}\;\Varid{r}\;\Varid{m}\mathrel{=}()\to \Varid{m}\;\Varid{r}{}\<[E]%
\ColumnHook
\end{hscode}\resethooks
Hence, \ensuremath{\Conid{FreeOp}\;(\Conid{Env}\;\Varid{r})} is the type of monadic computation which can
read an environment using the following operation:
\begin{hscode}\SaveRestoreHook
\column{B}{@{}>{\hspre}l<{\hspost}@{}}%
\column{8}{@{}>{\hspre}l<{\hspost}@{}}%
\column{E}{@{}>{\hspre}l<{\hspost}@{}}%
\>[B]{}\Varid{ask}{}\<[8]%
\>[8]{}\mathbin{::}\Conid{FreeOp}\;(\Conid{Env}\;\Varid{r})\;\Varid{r}{}\<[E]%
\\
\>[B]{}\Varid{ask}{}\<[8]%
\>[8]{}\mathrel{=}\Conid{FreeOp}\;(\lambda \Varid{\char95 ask}\to \Varid{\char95 ask}\;()){}\<[E]%
\ColumnHook
\end{hscode}\resethooks

%%%%%%%%%%%%%%%%%%%%%%%%%%%%%%%%%
More generally, we may want to consider algebraic theories with more
than one operation.  Following the same argument as before, but
considering the N-ary representation theorem, we can construct the
free monad on any signature of algebraic operations and express it by
its \emph{generic effects}~\cite{algOpers2003} by means of a
polymorphic type.

%\subsubsection{Example Application: Free Monads for Interactive Commands}

% A techique for testing I/O in Haskell is to abstract out a set of
% primitive I/O commands and form the free monad generated by a suitable
% functor that captures those primitives.  This free monad can then be
% interpreted with the usual I/O primatives, or it can be intepreted in
% other ways to give traces for debugging, or to give a pure interface
% suitable for testing purposes.

For example, a simple teletype interface can be represented by
the following functor~\cite{swierstra2008:la-carte}:

\begin{hscode}\SaveRestoreHook
\column{B}{@{}>{\hspre}l<{\hspost}@{}}%
\column{18}{@{}>{\hspre}c<{\hspost}@{}}%
\column{18E}{@{}l@{}}%
\column{21}{@{}>{\hspre}l<{\hspost}@{}}%
\column{E}{@{}>{\hspre}l<{\hspost}@{}}%
\>[B]{}\mathbf{data}\;\Conid{Teletype}\;\Varid{x}{}\<[18]%
\>[18]{}\mathrel{=}{}\<[18E]%
\>[21]{}\Conid{GetChar}\;(\Conid{Char}\to \Varid{x}){}\<[E]%
\\
\>[18]{}\mid {}\<[18E]%
\>[21]{}\Conid{PutChar}\;\Conid{Char}\;\Varid{x}{}\<[E]%
\ColumnHook
\end{hscode}\resethooks

The free monad generated by this \ensuremath{\Conid{Teletype}} functor produces a tree
representing all the interactions with a teletype machine a user can
have.
%
% Although there are several other way of implementing a free monad for a
% general functor, in the specific case of the |Teletype| functor, and for
% other similar cases where one is sequencing a set of primitive interactive
% commands, we can apply
% the N-ary representation theorem.  This is because t
The \ensuremath{\Conid{Teletype}} functor
is isomorphic to a sum of instances of $R$
\[
  \ensuremath{\Conid{Teletype}\;\Varid{x}} 
     \quad\cong\quad 
  \ensuremath{((),\Conid{Char}\to \Varid{x})\mathbin{+}(\Conid{Char},()\to \Varid{x})}
     \quad\cong\quad 
  \ensuremath{(\Conid{R}\;()\;\Conid{Char}\mathbin{+}\Conid{R}\;\Conid{Char}\;())\;\Varid{x}}
\]%
By the N-ary representation theorem, the free monad generated by \ensuremath{\Conid{Teletype}}
is isomorphic to
\[
\ensuremath{\forall \Varid{m}\hsforall .~\FN{Monad}\;\Varid{m}\Rightarrow (()\to \Varid{m}\;\Conid{Char})\to (\Conid{Char}\to \Varid{m}\;())\to \Varid{m}\;\Varid{x}}
\]%

We define a type for representing teletype operations. In order to
reuse our previous definition of \ensuremath{\Conid{FreeOp}} and to get names for each
argument, we define the type as a record in which each field
corresponds to an operation.

\begin{hscode}\SaveRestoreHook
\column{B}{@{}>{\hspre}l<{\hspost}@{}}%
\column{21}{@{}>{\hspre}c<{\hspost}@{}}%
\column{21E}{@{}l@{}}%
\column{24}{@{}>{\hspre}l<{\hspost}@{}}%
\column{36}{@{}>{\hspre}l<{\hspost}@{}}%
\column{E}{@{}>{\hspre}l<{\hspost}@{}}%
\>[B]{}\mathbf{data}\;\Conid{TTOp}\;\Varid{m}\mathrel{=}\Conid{TTOp}\;{}\<[21]%
\>[21]{}\{\mskip1.5mu {}\<[21E]%
\>[24]{}\Varid{\char95 ttGetChar}{}\<[36]%
\>[36]{}\mathbin{::}\Varid{m}\;\Conid{Char}{}\<[E]%
\\
\>[21]{},{}\<[21E]%
\>[24]{}\Varid{\char95 ttPutChar}{}\<[36]%
\>[36]{}\mathbin{::}\Conid{Char}\to \Varid{m}\;(){}\<[E]%
\\
\>[21]{}\mskip1.5mu\}{}\<[21E]%
\ColumnHook
\end{hscode}\resethooks

We obtain the free monad for \ensuremath{\Conid{TTOp}} and define
operations on it that basically choose the corresponding field from
the record.

\begin{hscode}\SaveRestoreHook
\column{B}{@{}>{\hspre}l<{\hspost}@{}}%
\column{12}{@{}>{\hspre}c<{\hspost}@{}}%
\column{12E}{@{}l@{}}%
\column{14}{@{}>{\hspre}c<{\hspost}@{}}%
\column{14E}{@{}l@{}}%
\column{16}{@{}>{\hspre}l<{\hspost}@{}}%
\column{18}{@{}>{\hspre}l<{\hspost}@{}}%
\column{E}{@{}>{\hspre}l<{\hspost}@{}}%
\>[B]{}\mathbf{type}\;\Conid{FreeTT}\mathrel{=}\Conid{FreeOp}\;\Conid{TTOp}{}\<[E]%
\\[\blanklineskip]%
\>[B]{}\Varid{ttGetChar}{}\<[12]%
\>[12]{}\mathbin{::}{}\<[12E]%
\>[16]{}\Conid{FreeTT}\;\Conid{Char}{}\<[E]%
\\
\>[B]{}\Varid{ttGetChar}{}\<[12]%
\>[12]{}\mathrel{=}{}\<[12E]%
\>[16]{}\Conid{FreeOp}\;\Varid{\char95 ttGetChar}{}\<[E]%
\\[\blanklineskip]%
\>[B]{}\Varid{ttPutChar}{}\<[14]%
\>[14]{}\mathbin{::}{}\<[14E]%
\>[18]{}\Conid{Char}\to \Conid{FreeTT}\;(){}\<[E]%
\\
\>[B]{}\Varid{ttPutChar}\;\Varid{c}{}\<[14]%
\>[14]{}\mathrel{=}{}\<[14E]%
\>[18]{}\Conid{FreeOp}\;(\lambda \Varid{po}\to \Varid{\char95 ttPutChar}\;\Varid{po}\;\Varid{c}){}\<[E]%
\ColumnHook
\end{hscode}\resethooks

Values of type \ensuremath{\Conid{FreeTT}} can easily be interpreted in \ensuremath{\Conid{IO}}, by
providing operations of the appropriate type.

\begin{hscode}\SaveRestoreHook
\column{B}{@{}>{\hspre}l<{\hspost}@{}}%
\column{3}{@{}>{\hspre}l<{\hspost}@{}}%
\column{11}{@{}>{\hspre}l<{\hspost}@{}}%
\column{19}{@{}>{\hspre}c<{\hspost}@{}}%
\column{19E}{@{}l@{}}%
\column{23}{@{}>{\hspre}l<{\hspost}@{}}%
\column{30}{@{}>{\hspre}c<{\hspost}@{}}%
\column{30E}{@{}l@{}}%
\column{33}{@{}>{\hspre}l<{\hspost}@{}}%
\column{45}{@{}>{\hspre}l<{\hspost}@{}}%
\column{E}{@{}>{\hspre}l<{\hspost}@{}}%
\>[B]{}\Varid{runTTIO}\mathbin{::}\Conid{FreeTT}\;\Varid{a}\to \Conid{IO}\;\Varid{a}{}\<[E]%
\\
\>[B]{}\Varid{runTTIO}\mathrel{=}\Varid{runOp}\;\Varid{ttOpIO}{}\<[E]%
\\
\>[B]{}\hsindent{3}{}\<[3]%
\>[3]{}\mathbf{where}\;{}\<[11]%
\>[11]{}\Varid{ttOpIO}{}\<[19]%
\>[19]{}\mathbin{::}{}\<[19E]%
\>[23]{}\Conid{TTOp}\;\Conid{IO}{}\<[E]%
\\
\>[11]{}\Varid{ttOpIO}{}\<[19]%
\>[19]{}\mathrel{=}{}\<[19E]%
\>[23]{}\Conid{TTOp}\;{}\<[30]%
\>[30]{}\{\mskip1.5mu {}\<[30E]%
\>[33]{}\Varid{\char95 ttGetChar}{}\<[45]%
\>[45]{}\mathrel{=}\Varid{getChar}{}\<[E]%
\\
\>[30]{},{}\<[30E]%
\>[33]{}\Varid{\char95 ttPutChar}{}\<[45]%
\>[45]{}\mathrel{=}\Varid{putChar}{}\<[E]%
\\
\>[30]{}\mskip1.5mu\}{}\<[30E]%
\ColumnHook
\end{hscode}\resethooks

Of course, the larger purpose is that \ensuremath{\Conid{FreeTT}} values can be
interpreted in other ways, for example, by logging input, or for use in
automated tests by replaying previously logged input.  Furthermore, a
\ensuremath{\Conid{FreeOp}} monad can easily be embedded into another \ensuremath{\Conid{FreeOp}} monad with
a larger set of primitive commands, or interpreted into another
\ensuremath{\Conid{FreeOp}} monad with a smaller, more primitive set of commands,
providing a simple way of implementing handlers of algebraic
effects~\cite{Pretnar2009}. Hence, Theorem~\ref{thm:representation2}
might provide the basis for a simple implementation of an
algebraic-effects library.

% ----- Configure Emacs -----
%
% Local Variables: ***
% mode: latex ***
% End: ***

%Conclusion and related work

\section{Related work}
\label{sec:related}

Traversable functors were introduced by McBride and
Paterson~\shortcite{mcbride08:applicative-programming}, generalising a
notion of traversal by Moggi et al.~\shortcite{MoggiBJ99}. The notion
proposed was too coarse and Gibbons and
Oliveira~\shortcite{Gibbons.Oliveira.Iterator} analysed several
properties that should hold for all traversals. Based on some of these
properties, Jaskelioff and
Ryp\'a\v{c}ek~\shortcite{Jaskelioff:MSFP2012} proposed a
characterisation of traversable functors, and conjectured that they
were isomorphic to finitary containers~\cite{alti:fossacs03}. The
conjecture was proven correct by Bird et al.~\shortcite{BirdGMVS13} by
a means of a change of representation. The proof of this same fact
presented in Section~\ref{sec:traversals} uses a similar change of
representation and was found independently.

The representation of the free applicative functor on the
parameterised store comonad, $R$, is a dependently typed version of
Van Laarhoven's \ensuremath{\Conid{FunList}} data type~\cite{vanLaarhoven:2009a}.  Van
Laarhoven's applicative and parameterised comonad instances for this
type have been translated to work on the dependently typed
implementation.  A particular case of the representation theorem has
been conjectured by Van Laarhoven~\shortcite{vanLaarhoven:2009b}, and
proved by O'Connor~\shortcite{oconnor:2011}.  The proof of
representation theorem for functors via the Yoneda lemma was discovered
independently by Bartosz Milewski~\shortcite{Milewski:2013}.

The representation theorems applied to the case where the structured
functors are monads (as in Section~\ref{sec:algebraic_theories})
yields isomorphisms analogous to the ones presented by Bauer et
al.~\shortcite{Bauer2013}. However, our proof is based on a categorical
model, while theirs is based on a parametric model. Also, as opposed
to us, they do not explore the connection with algebraic effects.

Bernardy et al.~\cite{testing_polymorphic} use a representation
theorem to transform polymorphic properties of a certain shape into
monomorphic properties, which are easier and more efficient to
test. This suggests that another application for the representation
theorems in this article is to facilitate the testing of polymorphic
properties.

% ----- Configure Emacs -----
%
% Local Variables: ***
% mode: latex ***
% End: ***

% Acknowledgements

\section*{Acknowledgements}
Jaskelioff is funded by ANPCyT PICT 2009-15.
Many thanks go to Edward Kmett who assisted the authors with the
isomorphism between \ensuremath{\Conid{KLens}} and \ensuremath{\Conid{VLens}}, and to Exequiel Rivas, Jeremy
Gibbons, and the anonymous referees for helping us improve the
presentation of the paper.  We also thank Shachaf Ben-Kiki for
explaining why affine traversals are called so, and Gabor Greif for
finding some typos. 
%

% ----- Configure Emacs -----
%
% Local Variables: ***
% mode: latex ***
% End: ***

\bibliographystyle{jfp}
\bibliography{traversable}

\providecommand{\noopsort}[1]{}
\begin{thebibliography}{}

\bibitem[\protect\citename{Abbott {\em et~al.}\relax, }2003]{alti:fossacs03}
Abbott, Michael, Altenkirch, Thorsten, \& Ghani, Neil. (2003).
\newblock Categories of containers.
\newblock {\em Pages  23--38 of:} {\em Proceedings of {F}oundations of
  {S}oftware {S}cience and {C}omputation {S}tructures}.

\bibitem[\protect\citename{Atkey, }2009a]{algebras-param-monads}
Atkey, Robert. (2009a).
\newblock Algebras for parameterised monads.
\newblock {\em Pages  3--17 of:} Kurz, Alexander, Lenisa, Marina, \& Tarlecki,
  Andrzej (eds), {\em {A}lgebra and {C}oalgebra in {C}omputer {S}cience,
  {T}hird {I}nternational {C}onference, {CALCO} 2009, {U}dine, {I}taly,
  {S}eptember 7-10, 2009. {P}roceedings}.
\newblock Lecture Notes in Computer Science, vol. 5728.
\newblock Springer.

\bibitem[\protect\citename{Atkey, }2009b]{paramnotions-jfp}
Atkey, Robert. (2009b).
\newblock Parameterised notions of computation.
\newblock {\em Journal of functional programming}, {\bf 19}(3 \& 4), 335--376.

\bibitem[\protect\citename{Awodey, }2006]{awodey2006ct}
Awodey, Steve. (2006).
\newblock {\em {Category theory}}.
\newblock Oxford University Press, USA.

\bibitem[\protect\citename{Bainbridge {\em et~al.}\relax,
  }1990]{BainbridgeFSS90}
Bainbridge, Edwin~S., Freyd, Peter~J., Scedrov, Andre, \& Scott, Philip~J.
  (1990).
\newblock Functorial polymorphism.
\newblock {\em Theoretical computer science}, {\bf 70}(1), 35--64.

\bibitem[\protect\citename{Bauer {\em et~al.}\relax, }2013]{Bauer2013}
Bauer, Andrej, Hofmann, Martin, \& Karbyshev, Aleksandr. (2013).
\newblock On monadic parametricity of second-order functionals.
\newblock {\em Pages  225--240 of:} Pfenning, Frank (ed), {\em Foundations of
  {S}oftware {S}cience and {C}omputation {S}tructures}.
\newblock Lecture Notes in Computer Science, vol. 7794.
\newblock Springer Berlin Heidelberg.

\bibitem[\protect\citename{Bernardy {\em et~al.}\relax,
  }2010]{testing_polymorphic}
Bernardy, Jean-Philippe, Jansson, Patrik, \& Claessen, Koen. (2010).
\newblock Testing polymorphic properties.
\newblock {\em Pages  125--144 of:} Gordon, Andrew~D. (ed), {\em Programming
  languages and systems}.
\newblock Lecture Notes in Computer Science, vol. 6012.
\newblock Springer Berlin Heidelberg.

\bibitem[\protect\citename{Bird \& de~Moor, }1997]{BdM97}
Bird, Richard, \& de~Moor, Oege. (1997).
\newblock {\em Algebra of programming}.
\newblock Upper Saddle River, NJ, USA: Prentice-Hall, Inc.

\bibitem[\protect\citename{Bird {\em et~al.}\relax, }2013]{BirdGMVS13}
Bird, Richard, Gibbons, Jeremy, Mehner, Stefan, Voigtl\"{a}nder, Janis, \&
  Schrijvers, Tom. (2013).
\newblock Understanding idiomatic traversals backwards and forwards.
\newblock {\em Pages  25--36 of:} {\em Proceedings of the 2013 {ACM SIGPLAN}
  {S}ymposium on {H}askell}.
\newblock Haskell '13.
\newblock New York, NY, USA: ACM.

\bibitem[\protect\citename{Capriotti \& Kaposi, }2014]{Capriotti2014}
Capriotti, Paolo, \& Kaposi, Ambros. 2014 (April).
\newblock Free applicative functors.
\newblock  {\em Proceedings of the fifth {W}orkshop on {M}athematically
  {S}tructured {F}unctional {P}rogramming}.
\newblock MSFP '14.

\bibitem[\protect\citename{Danielsson {\em et~al.}\relax,
  }2006]{Danielsson:2006}
Danielsson, Nils~Anders, Hughes, John, Jansson, Patrik, \& Gibbons, Jeremy.
  (2006).
\newblock Fast and loose reasoning is morally correct.
\newblock {\em Sigplan not.}, {\bf 41}(1), 206--217.

\bibitem[\protect\citename{Foster {\em et~al.}\relax, }2007]{foster:2007}
Foster, J.~Nathan, Greenwald, Michael~B., Moore, Jonathan~T., Pierce,
  Benjamin~C., \& Schmitt, Alan. (2007).
\newblock Combinators for bidirectional tree transformations: A linguistic
  approach to the view-update problem.
\newblock {\em Acm trans. program. lang. syst.}, {\bf 29}(3).

\bibitem[\protect\citename{Gibbons \& Johnson, }2012]{colens}
Gibbons, Jeremy, \& Johnson, Mike. (2012).
\newblock Relating algebraic and coalgebraic descriptions of lenses.
\newblock  vol. 49 (Bidirectional Transformations 2012).

\bibitem[\protect\citename{Gibbons \& Oliveira,
  }2009]{Gibbons.Oliveira.Iterator}
Gibbons, Jeremy, \& Oliveira, Bruno c. d.~s. (2009).
\newblock The essence of the iterator pattern.
\newblock {\em Journal of {F}unctional {P}rogramming}, {\bf 19}(July),
  377--402.

\bibitem[\protect\citename{Jaskelioff \& Rypacek, }2012]{Jaskelioff:MSFP2012}
Jaskelioff, Mauro, \& Rypacek, Ondrej. (2012).
\newblock An investigation of the laws of traversals.
\newblock {\em Pages  40--49 of:} Chapman, James, \& Levy, Paul~Blain (eds),
  {\em Proceedings of the {F}ourth {W}orkshop on {M}athematically {S}tructured
  {F}unctional {P}rogramming}.
\newblock EPTCS, vol. 76.

\bibitem[\protect\citename{Kmett, }2013]{ekmett:2013}
Kmett, Edward. 2013 (Oct.).
\newblock {\em lens-4.0: Lenses, folds and traversals}.
\newblock \url{http://ekmett.github.io/lens/Control-Lens-Traversal.html}.

\bibitem[\protect\citename{{\noopsort{Laarhoven}}Van~Laarhoven,
  }2009a]{vanLaarhoven:2009c}
{\noopsort{Laarhoven}}Van~Laarhoven, Twan. 2009a (Aug.).
\newblock {\em {CPS} based functional references}.
\newblock \url{http://twanvl.nl/blog/haskell/cps-functional-references}.

\bibitem[\protect\citename{{\noopsort{Laarhoven}}Van~Laarhoven,
  }2009b]{vanLaarhoven:2009a}
{\noopsort{Laarhoven}}Van~Laarhoven, Twan. 2009b (Apr.).
\newblock {\em A non-regular data type challenge}.
\newblock \url{http://twanvl.nl/blog/haskell/non-regular1}.

\bibitem[\protect\citename{{\noopsort{Laarhoven}}Van~Laarhoven,
  }2009c]{vanLaarhoven:2009b}
{\noopsort{Laarhoven}}Van~Laarhoven, Twan. 2009c (Apr.).
\newblock {\em Where do {I} get my non-regular types?}
\newblock \url{http://twanvl.nl/blog/haskell/non-regular2}.

\bibitem[\protect\citename{Mac~Lane, }1971]{macLaneS:catwm}
Mac~Lane, Saunders. (1971).
\newblock {\em Categories for the working mathematician}.
\newblock Graduate Texts in Mathematics, no. ~5.
\newblock Springer-Verlag.
\newblock Second edition, 1998.

\bibitem[\protect\citename{McBride \& Paterson,
  }2008]{mcbride08:applicative-programming}
McBride, Conor, \& Paterson, Ross. (2008).
\newblock Applicative programming with effects.
\newblock {\em Journal of functional programming}, {\bf 18}(01), 1--13.

\bibitem[\protect\citename{Milewski, }2013]{Milewski:2013}
Milewski, Bartosz. 2013 (Oct.).
\newblock {\em Lenses, stores, and yoneda}.
\newblock \url{http://bartoszmilewski.com/2013/10/08/lenses-stores-and-yoneda}.

\bibitem[\protect\citename{Moggi {\em et~al.}\relax, }1999]{MoggiBJ99}
Moggi, Eugenio, Bell{\`e}, Giana, \& Jay, C.~Barry. (1999).
\newblock Monads, shapely functors and traversals.
\newblock {\em Electronic notes in theoretical computer science}, {\bf 29}, 187
  -- 208.
\newblock CTCS '99, Conference on Category Theory and Computer Science.

\bibitem[\protect\citename{O'Connor, }2010]{oconnor:2010}
O'Connor, Russell. 2010 (Nov.).
\newblock {\em Lenses are exactly the coalgebras for the store comonad}.
\newblock \url{http://r6research.livejournal.com/23705.html}.

\bibitem[\protect\citename{O'Connor, }2011]{oconnor:2011}
O'Connor, Russell. (2011).
\newblock Functor is to lens as applicative is to biplate: Introducing
  multiplate.
\newblock {\em Corr}, {\bf abs/1103.2841v1}.

\bibitem[\protect\citename{O'Connor {\em et~al.}\relax, }2013]{oconnorr:2013}
O'Connor, Russell, Kmett, Edward~A., \& Morris, Tony. 2013 (Oct.).
\newblock {\em data-lens-2.10.4: Haskell 98 lenses}.
\newblock
  \url{http://hackage.haskell.org/package/data-lens-2.10.4/docs/Data-Lens-Partial-Common.html}.

\bibitem[\protect\citename{Plotkin \& Power, }2003]{algOpers2003}
Plotkin, Gordon, \& Power, John. (2003).
\newblock Algebraic operations and generic effects.
\newblock {\em Applied categorical structures}, {\bf 11}(1), 69--94.

\bibitem[\protect\citename{Plotkin \& Pretnar, }2009]{Pretnar2009}
Plotkin, Gordon, \& Pretnar, Matija. (2009).
\newblock Handlers of algebraic effects.
\newblock {\em Pages  80--94 of:} Castagna, Giuseppe (ed), {\em Programming
  languages and systems}.
\newblock Lecture Notes in Computer Science, vol. 5502.
\newblock Springer Berlin Heidelberg.

\bibitem[\protect\citename{Reynolds \& Plotkin, }1993]{ReynoldsP93}
Reynolds, John~C., \& Plotkin, Gordon~D. (1993).
\newblock On functors expressible in the polymorphic typed lambda calculus.
\newblock {\em Information and computation}, {\bf 105}(1), 1--29.

\bibitem[\protect\citename{Swierstra, }2008]{swierstra2008:la-carte}
Swierstra, Wouter. (2008).
\newblock Data types \`a la carte.
\newblock {\em Journal of functional programming}, {\bf 18}(4), 423--436.

\end{thebibliography}

\appendix

\allowdisplaybreaks

\section{Ends}
\label{sec:ends}

Ends are a special type of limit. The limit for a functor $F : \cC \to
\cD$ is a universal natural transformation $K_D\to F$ (the universal
cone to $F$) from the functor which is constantly $D$, for a
$D\in\cD$, into the functor $F$. The end for a functor $F : \op{\cC}
\times \cC \to \cD$ arises as a \emph{dinatural} transformation
$K_D\to F$ (the universal wedge).

\begin{defn}\label{def:dinatural}
  A \emph{dinatural transformation} $\alpha : F \to G$ between
  functors $F, G : \op{\cC} \times \cC \to \cD$ is a family of
  morphisms of the form $\alpha_C : F(C,C) \to G(C,C)$, such that for
  every morphism $f : C \to C'$ the following diagram commutes.
  \[
  \xymatrix@C+=10mm{
    & F(C,C) \ar[r]^{\alpha_C}      & G(C,C) \ar[rd]^{G(id,f)}   & \\
    F(C',C) \ar[ru]^{F(f,id)} \ar[rd]_{F(id,f)} &                            &                           & G(C,C') \\
    & F(C',C') \ar[r]_{\alpha_{C'}}  & G(C',C') \ar[ru]_{G(f,id)}  &
  }
  \]%
\end{defn}

%A natural transformation can be considered as a dinatural transformation which is dummy in
%its first argument.

Differently from natural transformations, dinatural transformations
are not closed under composition.

% One can obtain notions analogous to cone/cocone, and limit/colimit by
% considering dinatural transformations instead of natural
% transformations.  We start with the notion of \emph{wedge}, which is
% analogous to the notion of cone.
%% A wedge for $F : \op{\cC} \times \cC \to \cD$
%% is a dinatural transformation from a constant functor to $F$.

\begin{defn}
  A \emph{wedge} from an object $V \in \cD$ to a functor $F : \op{\cC} \times
  \cC \to \cD$ is a dinatural transformation from the constant functor
  $K_V : \op{\cC} \times \cC \to \cD$ to $F$. Explicitly, an object
  $V$ together with a family of morphisms $\alpha_X : V \to F(X,X)$
  such that for each $f : C \to C'$ the following diagram commutes.
  \[
  \xymatrix{
                                  & F(C,C) \ar[dr]^{F(id,f)}    & \\
    V \ar[ur]^{\alpha_C} \ar[dr]_{\alpha_{C'}}  &                             & F(C,C') \\
                                  & F(C',C') \ar[ur]_{F(f,id)}  &
  }
  \]%

\end{defn}

Whereas a limit is a final cone, an \emph{end} is a final wedge.

\begin{defn}
  The \emph{end} of a functor $F : \op{\cC} \times \cC \to \cD$ is a
  final wedge for $F$.  Explicitly, it is an object $\int_A
  F(A,A)\in\cD$  together with a family of morphisms
  $\omega_C : \int_A F(A,A) \to F(C,C)$ such that the diagram
%  for each morphism $f : C \to C'$ the diagram
  \[
  \xymatrix{
                                  & F(C,C) \ar[dr]^{F(id,f)}    & \\
    \int_A F(A,A) \ar[ur]^{\omega_C} \ar[dr]_{\omega_{C'}}  &                             & F(C,C') \\
                                  & F(C',C') \ar[ur]_{F(f,id)}  &
  }
  \]%
  commutes for each $f : C \to C'$, and such that for every wedge from
  $V\in\cD$, given by a family of morphisms $\gamma_c : V \to F(C,C)$
  such that $F(id,f) \circ \gamma_c = F(f,id) \circ \gamma_c'$ for
  every $f : C \to C'$, there exists a unique morphism $! : V \to \int_A F(A,A)$ such that the following diagram commutes.
  \[
  \xymatrix{
                                               & &                                  & F(C,C) \ar[dr]^{F(id,f)}    & \\
    V \ar[urrr]^{\gamma_C} \ar[drrr]_{\gamma_{C'}} \ar@{-->}[rr]^{!} & & \int_A F(A,A) \ar[ur]_{\omega_C} \ar[dr]^{\omega_{C'}}  &                             & F(C,C') \\
                                               & &                                  & F(C',C') \ar[ur]_{F(f,id)}  &
  }
  \]%
\end{defn}

\begin{remark}\label{remark:ends_as_limits}
  When $\cC$ is small and $\cD$ is small-complete, an end over a
  functor $\cC\times\op{\cC}\to\cD$ can be reduced to an ordinary
  limit~\cite{macLaneS:catwm}. As a consequence, the Hom functor
  preserves ends: for every $D\in\cD$,
\[
\Hom{\cD}{D}{\int_A F(A,A)} 
   \quad=\quad 
\int_A \Hom{\cD}{D}{F(A,A)}.
\]% 
\end{remark}

% Ends can be seen as a generalised product, but cut down by a relation
% of dinaturality.  Following this view, a morphism to an end is defined
% by a dinatural family of morphisms:

% \[ \frac{\phi_X : Y \to F(X,X), \mbox{ dinatural in }X}{\langle \phi \rangle : Y \to \int_A F(A,A)} \]%

% \begin{prop}
%   By the universal property of ends, $\langle \phi \rangle$ is the
%   unique morphism such that $\omega_X \circ \langle \phi \rangle =
%   \phi_X$.
% \end{prop}

% Given a dinatural transformation $\alpha : \Delta_Y \to F$, and a
% morphism $h : Z \to Y$, the family of morphisms defined by $(\alpha
% \circ h)_C = \alpha_C \circ h$ is dinatural in $C$. Using the
% universal property of ends, we obtain the following proposition:

% \begin{prop}
%   Let $\phi_X : Y \to F(X,X)$ be a  family of morphisms dinatural in $X$, and
%   a let $h : Z \to Y$. Then $\langle \phi \circ h \rangle = \langle \phi \rangle \circ h$.
% \end{prop}

% When defining a family of morphisms, abstracting over the varying
% object comes in handy. We will use $\Lambda$ as a binder
% for objects variables. For example, $(\alpha \circ h)_C = \alpha_C
% \circ h$ can be defined directly as $\alpha \circ h = \Lambda C
% .\, \alpha_C \circ h$,

% ----- Configure Emacs -----
%
% Local Variables: ***
% mode: latex ***
% mmm-classes: literate-haskell-lhs2TeX ***
% End: ***

\section{Generalised lens representation theorem}
\label{thm:generalised-lens-representation}

% Suppose $L \dashv R : \cat{E} \rightharpoonup \cat{F}$, is an
% adjunction between two monoidal categories $\cat{E}$ and
% $\cat{F}$. Then $L$ is a oplax monoidal functor if and only if $R$ is
% a lax monoidal functor~\cite{DoctrinalAdjunction}.  If $L$ is an oplax
% monoidal functor and $W$ is a comonoid in $\cat{E}$ then $L\,W$ is a
% comonoid in $\cat{F}$.  If $R$ is an oplax monoidal functor and $W'$
% is a comonoid in $\cat{F}$ then $R\,W'$ is a comonoid in $\cat{E}$.
% If $R$ is a strong monoidal functor, then both $L$ and $R$ are oplax
% monoidal functors.  In this case, if $W$ is a comonoid in $\cat{E}$
% then $RL\,W$ is also a comonoid in $\cat{E}$.  If $\cat{E}$ and
% $\cat{F}$ are monoidal categories of endofunctors where $R$ is a
% strong monoidal functor, and $W$ is a comonad in $\cat{E}$ then $L\,W$
% is a comonad in $\cat{F}$ and $RL\,W$ is a comonad in $\cat{E}$
% because a comonad is a comonoid in the category of endofunctors.  This
% theorem holds even when $W$ is a parameterised comonad.  However, for
% concreteness, we will give an explicit proof.

For all the propositions below assume we have two small monoidal
categories of endofunctors, $(\cat{E},\Id, \cdot, \alpha, \lambda,
\rho)$ and $(\cat{F}, \Id, \cdot, \alpha', \lambda', \rho')$,
$\cat{E}$ is a subcategory of endofunctors over a base category $\cC$,
and $\cat{F}$ is a subcategory of endofunctors over a base category
$\cD$, and where the monoidal operation is composition of endofunctors
(written $F \cdot G$) and with the identity functor, $\Id$, as the
identity.  Also assume we have an adjunction $(-)^* \dashv U :
\cat{E}\rightharpoonup \cat{F}$, such that $U$ is strict
monoidal\footnote{ These propositions still hold under the assumption
  that $U$ is a strong monoidal functor.  In order to avoid excessive
  notation we use the simplifying assumption that $U$ is strict.}
(i.e., $U\,\Id = \Id$, $U(F \cdot G)=UF \cdot UG$, $U \lambda'_X =
\lambda_{U X}$, etc.).

To reduce notational clutter, in this section we work directly with natural
transformations.  Rather that writing the counit of a parameterised comonad
as a family of arrows $\varepsilon_{a,X} : C_{a,a}X \to X$ as we did in
Section~\ref{sec:pcomonads_and_lenses}, we will write it as a family of natural
transformations, $\varepsilon_{a} : C_{a,a} \to \Id$.  Similarly, instead of
writing comultiplication as $\delta_{a,b,c,X} : C_{a,c}X \to C_{a,b}(C_{b,c}X)$
we will write $\delta_{a,b,c} : C_{a,c} \to C_{a,b} \cdot C_{b,c}$, and so
forth.

 \begin{prop}
   Let $(C, \varepsilon^C, \delta^C)$ be a
   $\cP$-parameterised comonad on $\cC$, such that for every $a, b
   : \cP$, we have an endofunctor $C_{a,b} : \cat{E}$.
   Then $(C^*,\varepsilon^{C^*},\delta^{C^*})$ is a $\cP$-parameterised comonad
   on $\cD$ where
 $$
 \begin{array}{lcl}
 \varepsilon^{C^*}_{a}  & : & C^*_{a,a} \to \Id \\
 \varepsilon^{C^*}_{a} & = & \radj{\varepsilon^{C}_a} \\
 \\
 \delta^{C^*}_{a,b,c} & : &  C^*_{a,c} \to C^*_{a,b} \cdot C^*_{b,c} \\
 \delta^{C^*}_{a,b,c} & = & \radj{(\eta_{C_{a,b}}\cdot\eta_{C_{b,c}}) \circ \delta^{C}_{a,b,c}}
 \end{array}
 $$
 The tensor $\cdot$ in the term corresponds to horizontal composition
 of natural transformations.
 \end{prop}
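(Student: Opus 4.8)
The plan is to check directly that $(C^*,\varepsilon^{C^*},\delta^{C^*})$ meets every clause in the definition of a $\cP$-parameterised comonad on $\cD$: functoriality of $C^*$, the prescribed naturality and dinaturality of $\varepsilon^{C^*}$ and $\delta^{C^*}$, and the two counit laws together with the coassociativity law. Functoriality is immediate, since $C^*=(-)^*\circ C$ is a composite of the functors $\cP\times\op{\cP}\to\cat{E}\xrightarrow{(-)^*}\cat{F}$. For naturality in $a,c$ and dinaturality in $b$ of $\delta^{C^*}$ (and dinaturality in $a$ of $\varepsilon^{C^*}$), I would observe that the defining formulas are obtained from $\delta^C$ and $\varepsilon^C$ by postcomposition with the natural data $\eta\cdot\eta$ and then with the natural bijection $\radj{-}$; hence the naturality and dinaturality of $\delta^C$, $\varepsilon^C$ transfer, using the naturality equations~(\ref{eq:adj-nat1})--(\ref{eq:adj-nat2}) of the adjunction and naturality of the unit $\eta$.

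The organising observation for the comonad laws is that, because $U$ is strict monoidal and $(-)^*\dashv U$, the left adjoint $(-)^*$ carries a canonical \emph{oplax monoidal} structure (doctrinal adjunction), with comparison maps
\[
  \psi_{E_1,E_2}=\radj{\eta_{E_1}\cdot\eta_{E_2}} : (E_1\cdot E_2)^*\to E_1^*\cdot E_2^*,
  \qquad
  \psi_0=\radj{\mathit{id}_{\Id}} : \Id^*\to\Id,
\]
where I have used $U(E_1^*\cdot E_2^*)=UE_1^*\cdot UE_2^*$ and $U\Id=\Id$. Applying the characterisation $\radj{g}=\radj{\mathit{id}}\circ(-)^*g$ from~(\ref{eq:adjuncts-from-unit-counit}) to $\varepsilon^{C^*}$, and the naturality equation~(\ref{eq:adj-nat2}) with $k=\mathit{id}$ to $\delta^{C^*}$, I would first rewrite the two given definitions as
\[
  \varepsilon^{C^*}_a=\psi_0\circ(\varepsilon^C_a)^*,
  \qquad
  \delta^{C^*}_{a,b,c}=\psi_{C_{a,b},C_{b,c}}\circ(\delta^C_{a,b,c})^*.
\]
This exhibits the counit and comultiplication of $C^*$ as the images under the oplax monoidal functor $(-)^*$ of those of $C$.

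With this reformulation, each law for $C^*$ reduces to the corresponding law for $C$. For a given law I would take the commuting diagram for $C$, apply the functor $(-)^*$ (preserving commutativity), and paste in the coherence data of $(\psi,\psi_0)$: the naturality of $\psi$, needed to commute the whiskered comultiplications $C^*_{a,b}\cdot\delta^{C^*}_{b,c,d}$ and $\delta^{C^*}_{a,b,c}\cdot C^*_{c,d}$ past $(-)^*$, and the oplax coherence identities, which in the strict setting read $(\psi_0\cdot E^*)\circ\psi_{\Id,E}=\mathit{id}$, $(E^*\cdot\psi_0)\circ\psi_{E,\Id}=\mathit{id}$, and $(\psi_{E_1,E_2}\cdot E_3^*)\circ\psi_{E_1\cdot E_2,E_3}=(E_1^*\cdot\psi_{E_2,E_3})\circ\psi_{E_1,E_2\cdot E_3}$. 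These hold for the left adjoint of any strict monoidal functor, and I would discharge them with a short computation using the adjunction triangle identities.

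I expect the coassociativity law to be the main obstacle: the whiskered terms must be expanded and matched against $(-)^*$ applied to $C_{a,b}\cdot\delta^C_{b,c,d}$ and $\delta^C_{a,b,c}\cdot C_{c,d}$, so the delicate bookkeeping is precisely how $(-)^*$ interacts with horizontal composition, mediated by naturality of $\psi$ and the interchange law. The two counit laws are the unital specialisation of the same pattern and should follow with little further effort. Should the oplax framing be judged too heavy for the appendix, a fully concrete alternative is to expand every $\radj{-}$ via $\radj{g}=\radj{\mathit{id}}\circ(-)^*g$ and settle each law by a direct diagram chase from the triangle identities and the laws for $C$; this performs the same computation without naming the oplax structure.
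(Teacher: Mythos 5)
Your proposal is correct, but it takes a genuinely different route from the paper's proof. The paper verifies each parameterised-comonad law for $C^*$ by a direct equational calculation: it expands $\varepsilon^{C^*}=\radj{\varepsilon^C_a}$ and $\delta^{C^*}_{a,b,c}=\radj{(\eta_{C_{a,b}}\cdot\eta_{C_{b,c}})\circ\delta^C_{a,b,c}}$ and repeatedly applies the adjunction naturality equations~(\ref{eq:adj-nat1})--(\ref{eq:adj-nat2}), the characterisation~(\ref{eq:adjuncts-from-unit-counit}), strictness of $U$, and bifunctoriality of $\cdot$, until the corresponding law for $C$ can be invoked inside a single $\radj{-}$; the second counit law is left as ``similar''. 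You instead factor the whole argument through doctrinal adjunction: the mates $\psi_{E_1,E_2}=\radj{\eta_{E_1}\cdot\eta_{E_2}}$ and $\psi_0=\radj{\mathit{id}_{\Id}}$ make $(-)^*$ oplax monoidal, your rewritings $\varepsilon^{C^*}_a=\psi_0\circ(\varepsilon^C_a)^*$ and $\delta^{C^*}_{a,b,c}=\psi_{C_{a,b},C_{b,c}}\circ(\delta^C_{a,b,c})^*$ are correct consequences of~(\ref{eq:adjuncts-from-unit-counit}) and~(\ref{eq:adj-nat2}), and the laws then transfer by the standard ``oplax monoidal functors preserve comonoids'' pasting, adapted to the parameterised setting. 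The trade-off is real: your version is modular and explains \emph{why} the result holds, isolating a reusable lemma (the oplax structure on the left adjoint and its coherence), whereas the paper's version is self-contained and never names that structure; but the burden has only moved, since verifying naturality of $\psi$ and the oplax coherence identities from the triangle identities is essentially the same mate calculus the paper performs inline. One point worth making explicit if you write this up: the coherence identities you list do hold (e.g.\ $(\psi_0\cdot E^*)\circ\psi_{\Id,E}=\radj{\eta_E}=\mathit{id}_{E^*}$ by~(\ref{eq:adj-nat2}) and the isomorphism property), so there is no gap, but they should be proved rather than only cited, since the paper nowhere establishes them.
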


\begin{proof}
 The first parameterised comonad law is:
 \[
 \lambda_{C_{a,b}} \circ (\varepsilon^C_a \cdot id) \circ \delta^C_{a,a,b}=id : 
 \Hom{\cat{E}}{C_{a,b}}{C_{a,b}}
 \]%
We check that:
 \[
 \lambda'_{C^*_{a,b}} \circ (\varepsilon^{C^*}_a \cdot id) \circ \delta^{C^*}_{a,a,b}=id : 
 \Hom{\cat{F}}{C^*_{a,b}}{C^*_{a,b}}
 \]%
 \allowdisplaybreaks 
\begin{flalign*}
    &  \lambda'_{C^*_{a,b}} \circ (\varepsilon^{C^*}_a \cdot id) 
                      \circ \delta^{C^*}_{a,a,b}
    & \\
 =  & \comment{Definition of $\delta^{C^*}$} 
    & \\ 
    & \lambda'_{C^*_{a,b}} \circ (\varepsilon^{C^*}_a \cdot id) \circ
      \radj{(\eta_{C_{a,a}} \cdot \eta_{C_{a,b}}) \circ \delta^C_{a,a,b}} 
    & \\
 =  & \comment{Eq.~\ref{eq:adj-nat2}} 
     & \\
    & \radj{U\lambda'_{C^*_{a,b}} \circ U(\varepsilon^{C^*}_a \cdot id) \circ
      (\eta_{C_{a,a}} \cdot \eta_{C_{a,b}}) \circ \delta^C_{a,a,b}}
     & \\
 =  & \comment{\ensuremath{U} is strict monoidal.}
    & \\
    & \radj{\lambda_{UC^*_{a,b}} \circ (U\,\varepsilon^{C^*}_a \cdot id) \circ
    (\eta_{C_{a,a}} \cdot \eta_{C_{a,b}}) \circ \delta^C_{a,a,b}}
   & \\
 =  & \comment{ Bifunctor \ensuremath{\cdot}, definition of $\varepsilon^{C^*}$}
   & \\
    & \radj{\lambda_{UC^*_{a,b}} \circ 
      ((U\,\radj{\varepsilon^C_a} \circ \eta_{C_{a,a}}) \cdot \eta_{C_{a,b}}) 
      \circ \delta^C_{a,a,b}}
   & \\
 =  & \comment{ Eq~\ref{eq:adjuncts-from-unit-counit}}
   & \\
    & \radj{\lambda_{UC^*_{a,b}} \circ 
      (\ladj{\radj{\varepsilon^C_a}} \cdot \eta_{C_{a,b}}) 
      \circ \delta^C_{a,a,b}}
   & \\
 =  & \comment{ isomorphism}
   & \\
    & \radj{\lambda_{UC^*_{a,b}} \circ 
      (\varepsilon^C_a \cdot \eta_{C_{a,b}}) 
      \circ \delta^C_{a,a,b}}
   & \\
 =  & \comment{ naturality of \ensuremath{\lambda}}
   & \\
    & \radj{\eta_{C_{a,b}} \circ \lambda_{C_{a,b}} \circ 
      (\varepsilon^C_a \cdot id) 
      \circ \delta^C_{a,a,b}} 
   & \\
 =  & \comment{ first parameterised comonad law}
   & \\
    & \radj{\eta_{C_{a,b}}}
   & \\
 =  & \comment{Eq.~\ref{eq:eta-epsilon-from-adjuncts}}
   & \\
    & \radj{\ladj{id}}
   & \\
 =  & \comment{ isomorphism}
   & \\
    & id   
\end{flalign*}
  
For the second parameterised comonad law we proceed in a similar way
to the first.

 The third parameterised comonad law states
 \[
  \alpha_{C_{a,b},C_{b,c},C_{c,d}} \circ (\delta^C_{a,b,c} \cdot id ) \circ \delta^C_{a,c,d} 
 =       
  (id \cdot \delta^C_{b,c,d}) \circ \delta^C_{a,b,d}
 : \Hom{\cat{E}}{C_{a,d}}{C_{a,b} \cdot (C_{b,c} \cdot C_{c,d})}
 \]%
 Let us prove that
 \[
  \alpha'_{C^*_{a,b},C^*_{b,c},C^*_{c,d}} \circ (\delta^{C^*}_{a,b,c} \cdot id ) \circ \delta^{C^*}_{a,c,d} 
 =       
  (id \cdot \delta^{C^*}_{b,c,d}) \circ \delta^{C^*}_{a,b,d}
 : \Hom{\cat{F}}{C^*_{a,d}}{C^*_{a,b} \cdot (C^*_{b,c} \cdot C^*_{c,d})}
 \]%
 \begin{flalign*}
    &  \alpha' \circ (\delta^{C^*}_{a,b,c} \cdot id ) \circ \delta^{C^*}_{a,c,d}
    & \\
 =  & \comment{Definition of $\delta^{C^*}$} 
    & \\ 
    &  \alpha' \circ (\delta^{C^*}_{a,b,c} \cdot id ) \circ
       \radj{(\eta_{C_{a,c}} \cdot \eta_{C_{c,d}}) \circ \delta^C_{a,c,d}}
    & \\
 =  & \comment{Eq.~\ref{eq:adj-nat2}, \ensuremath{U} strict monoidal} 
     & \\
    &  \radj{\alpha \circ (U\delta^{C^*}_{a,b,c} \cdot id ) \circ 
       (\eta_{C_{a,c}} \cdot \eta_{C_{c,d}}) \circ \delta^C_{a,c,d}}
    & \\
 =  & \comment{\ensuremath{\cdot} bifunctor} 
     & \\
    &  \radj{\alpha \circ ((U\delta^{C^*}_{a,b,c} \circ \eta_{C_{a,c}}) \cdot 
       \eta_{C_{c,d}}) \circ \delta^C_{a,c,d}}
    & \\
 =  & \comment{ Eq~\ref{eq:adjuncts-from-unit-counit}}
   & \\
    &  \radj{\alpha \circ (\ladj{
              \delta^{C^*}_{a,b,c}} \cdot \eta_{C_{c,d}}) \circ \delta^C_{a,c,d}}
   & \\
 =  & \comment{Definition of $\delta^{C^*}$} 
    & \\ 
    &  \radj{\alpha \circ (\ladj{
 \radj{(\eta_{C_{a,b}} \cdot \eta_{C_{b,c}}) \circ \delta^C_{a,b,c}}}
       \cdot  \eta_{C_{c,d}}) \circ \delta^C_{a,c,d}}
   & \\
 =  & \comment{ isomorphism}
   & \\
    &  \radj{\alpha \circ((\eta_{C_{a,b}} \cdot \eta_{C_{b,c}}) \circ 
                    \delta^C_{a,b,c}) \cdot \eta_{C_{c,d}}) 
                \circ \delta^C_{a,c,d}}
   & \\
 =  & \comment{\ensuremath{\cdot} bifunctor} 
     & \\
    & \radj{
         \alpha \circ ((\eta_{C_{a,b}} \cdot \eta_{C_{b,c}}) \cdot
         \eta_{C_{c,d}}) \circ (\delta^C_{a,b,c} \cdot id) 
                \circ \delta^C_{a,c,d}
        }
    & \\
 =  & \comment{naturality of \ensuremath{\alpha}} 
     & \\
    & \radj{
         ((\eta_{C_{a,b}} \cdot (\eta_{C_{b,c}} \cdot  \eta_{C_{c,d}})) \circ
          \alpha \circ (\delta^C_{a,b,c} \cdot id) \circ \delta^C_{a,c,d}
        }
    & \\
 =  & \comment{ third parameterised comonad law}
   & \\
    & \radj{
         ((\eta_{C_{a,b}} \cdot (\eta_{C_{b,c}} \cdot  \eta_{C_{c,d}})) 
          \circ 
        (id \cdot \delta^C_{b,c,d}) \circ \delta^C_{a,b,d}
        }
    & \\
 =  & \comment{\ensuremath{\cdot} bifunctor} 
    & \\
 %%%%%%%%%%%%%%%%%%%%%%%%%%%%
    & \radj{ (\eta_{C_{a,b}} \cdot 
    ((\eta_{C_{b,c}} \cdot \eta_{C_{c,d}}) \circ \delta^C_{b,c,d})) 
     \circ \delta^C_{a,b,d}}
   & \\
 =  & \comment{ isomorphism}
   & \\
    & \radj{ (\eta_{C_{a,b}} \cdot \ladj{
         \radj{(\eta_{C_{b,c}} \cdot \eta_{C_{c,d}}) \circ \delta^C_{b,c,d}}
          }) 
     \circ \delta^C_{a,b,d}}
   & \\ 
 =  & \comment{Definition of $\delta^{C^*}$} 
 & \\
    & \radj{ (\eta_{C_{a,b}} \cdot \ladj{\delta^{C^*}_{b,c,d}}) 
     \circ \delta^C_{a,b,d}}
   & \\
 =  & \comment{ Eq~\ref{eq:adjuncts-from-unit-counit}}
   & \\
    & \radj{ (\eta_{C_{a,b}} \cdot (U\delta^{C^*}_{b,c,d}\circ \eta_{C_{b,d}})) 
     \circ \delta^C_{a,b,d}}
    & \\
 =  & \comment{\ensuremath{\cdot} bifunctor} 
     & \\
    & \radj{ (id \cdot U\delta^{C^*}_{b,c,d}) \circ 
       (\eta_{C_{a,b}} \cdot \eta_{C_{b,d}}) \circ \delta^C_{a,b,d}}
    & \\
 =  & \comment{Eq.~\ref{eq:adj-nat2}, \ensuremath{U} strict monoidal} 
     & \\
    & (id \cdot \delta^{C^*}_{b,c,d}) \circ 
       \radj{(\eta_{C_{a,b}} \cdot \eta_{C_{b,d}}) \circ \delta^C_{a,b,d}}
   & \\ 
 =  & \comment{Definition of $\delta^{C^*}$} 
 & \\
   &  (id \cdot \delta^{C^*}_{b,c,d}) \circ \delta^{C^*}_{a,b,d} 
 \end{flalign*}%
\end{proof}

 \begin{prop}
   Let $(D, \varepsilon^D, \delta^D)$ be a
   $\cP$-parameterised comonad on $\cD$, such that for every $a, b
   : \cP$, we have an endofunctor $D_{a,b} : \cat{F}$.
   Then $(U\,D,\varepsilon^{U\,D},\delta^{U\,D})$ is a
   $\cP$-parameterised comonad on $\cC$ where
 $$
 \begin{array}{lcl}
 \varepsilon^{U\,D}_{a}  & : & U\,D_{a,a} \to \Id \\
 \varepsilon^{U\,D}_{a} & = & U\varepsilon^D_a \\
 \\
 \delta^{U\,D}_{a,b,c} & : & U\,D_{a,c} \to U\,D_{a,b} \cdot U\,D_{b,c} \\
 \delta^{U\,D}_{a,b,c} & = & U\delta^D_{a,b,c}
 \end{array}
 $$
 \end{prop}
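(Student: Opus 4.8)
The plan is to exploit the fact that $U$ is a strict monoidal functor and that a parameterised comonad is a comonoid-like structure in the strict monoidal category of endofunctors (with tensor given by composition $\cdot$ and unit $\Id$); strict monoidal functors transport such structures. Concretely, strict monoidality supplies $U\Id = \Id$, $U(F\cdot G) = UF\cdot UG$, and $U(\phi\cdot\psi)=U\phi\cdot U\psi$ for horizontally composable natural transformations $\phi,\psi$, together with the coherence equations $U\lambda' = \lambda$, $U\rho' = \rho$, and $U\alpha' = \alpha$ (suitably indexed). Combined with ordinary functoriality of $U$, these are the only facts the argument requires. Note that this proposition is the easy counterpart of the preceding one: there $U$ appeared behind the adjuncts $\radj{-}$, whereas here $U$ is applied directly.

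First I would confirm that the proposed families have the correct shape. Since $U\varepsilon^D_a : U D_{a,a} \to U\Id = \Id$ and $U\delta^D_{a,b,c} : U D_{a,c} \to U(D_{a,b}\cdot D_{b,c}) = U D_{a,b}\cdot U D_{b,c}$, the definitions $\varepsilon^{U\,D}_a = U\varepsilon^D_a$ and $\delta^{U\,D}_{a,b,c}=U\delta^D_{a,b,c}$ typecheck. The naturality in $X$ and the required (di)naturality in the parameters $a,b,c$ are inherited automatically, because applying the functor $U$ to a (di)natural family again yields a (di)natural family.

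The core of the proof is to establish the three parameterised comonad laws for $U\,D$ by reducing each to the corresponding law for $D$. For each law I would substitute the definitions of $\varepsilon^{U\,D}$ and $\delta^{U\,D}$; rewrite every identity as $U$ of an identity, every coherence iso $\lambda,\rho,\alpha$ as $U$ of the corresponding primed iso, and every horizontal composite $U\phi\cdot U\psi$ as $U(\phi\cdot\psi)$ using strict monoidality; then collapse the resulting string of $U$-images into $U$ applied to a single composite by functoriality; and finally invoke the relevant law for $D$. For example, the first counit law reduces as
\begin{align*}
\lambda_{U D_{a,b}} \circ (\varepsilon^{U\,D}_a \cdot id) \circ \delta^{U\,D}_{a,a,b}
&= U\big(\lambda'_{D_{a,b}} \circ (\varepsilon^D_a \cdot id) \circ \delta^D_{a,a,b}\big) \\
&= U(id) = id,
\end{align*}
and the second counit law and the coassociativity law proceed in exactly the same fashion, the latter terminating by equating the two sides inside $U$ via the third comonad law for $D$.

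The only point demanding care — and hence the main, minor, obstacle — is the bookkeeping of the coherence isomorphisms: at each rewriting step one must confirm that the instance of $\lambda,\rho,\alpha$ appearing in the $\cat{E}$-law really is the $U$-image of the matching $\cat{F}$-law instance, with the subscripts lining up. Because $\cat{E}$ and $\cat{F}$ are strict, these isomorphisms are in fact identities, which dissolves even this difficulty; nevertheless I would keep them explicit so that the proof mirrors the structure of the preceding proposition.
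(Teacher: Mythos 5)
Your proof is correct and takes essentially the same approach as the paper: the paper's entire proof of this proposition is the single sentence that the laws follow directly from $U$ being a strict monoidal functor, and your write-up simply makes explicit the transport argument (push $U$ through $\cdot$, $\Id$, and the coherence isomorphisms, collapse by functoriality, and invoke the corresponding law for $D$) that this sentence compresses.
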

\begin{proof}
  The laws of a parameterised comonad follow directly from the fact
  that $U$ is a strict monoidal functor.
\end{proof}

\begin{prop}[Generalised lens representation (Theorem~\ref{theorem:generalized-lens-representation}]\label{prop:glr}
  Given a functor $K : \cP \to Set$, define $R^{(K)}_{a,b} X = K a \times (K b \tto X) : \cP\times\op\cP\times Set \to
Set$ as the parameterised comonad with counit $\varepsilon^{R^{(K)}}$ and
comultiplication $\delta^{R^{(K)}}$ as defined in Example~\ref{example:RK}.
Assume that $R^{(K)}_{a,b} : \cat{E}$ for every $a$ and $b$.  Then 
\begin{enumerate}
\item $UR^{(K)*}$ is a parameterised comonad and
\item given a functor $J : \cP \to Set$, then the families
  $k_{a,b} : Ja \tto UR^{(K)*}_{a,b}(Jb)$ which form the
  $UR^{(K)*}$-coalgebras $(J,k)$ are isomorphic to the families of ends
\[
  \int_{\F:\cat{F}} (Ka \tto U\F (Kb)) \tto Ja \tto U\F (Jb)
\]%
which satisfy the linearity and unity laws.
\end{enumerate}
\end{prop}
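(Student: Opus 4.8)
The plan is to treat the two claims separately: part~(1) follows by composing the two lifting propositions just proved in this appendix, while part~(2) reduces the generalised statement to the unary representation theorem together with the law-matching argument of Theorem~\ref{theorem:lens-representation}.

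For part~(1) I would simply chain the two preceding propositions. Since $R^{(K)}$ is a $\cP$-parameterised comonad on $\cC$ whose components $R^{(K)}_{a,b}$ all lie in $\cat{E}$ by hypothesis, the first proposition yields that $(R^{(K)*},\varepsilon^{R^{(K)*}},\delta^{R^{(K)*}})$ is a $\cP$-parameterised comonad on $\cD$. Its components $R^{(K)*}_{a,b}$ lie in $\cat{F}$, so the second proposition yields that $(UR^{(K)*}, U\varepsilon^{R^{(K)*}}, U\delta^{R^{(K)*}})$ is a $\cP$-parameterised comonad on $\cC$, which is exactly claim~(1).

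For part~(2) I would first establish the bare isomorphism of families, ignoring the laws, by replaying the opening calculation of Theorem~\ref{theorem:lens-representation} but with Equation~(\ref{eq:simpleRepresentation}) replaced by the unary representation theorem (Theorem~\ref{thm:representation1}). Concretely, starting from $Ja \tto UR^{(K)*}_{a,b}(Jb)$, I would use $R^{(K)}_{a,b} = R_{Ka,Kb}$ and apply Theorem~\ref{thm:representation1} with $A = Ka$, $B = Kb$, $X = Jb$ to rewrite $UR^{*}_{Ka,Kb}(Jb)$ as the end $\int_{\F:\cat{F}} (Ka \tto U\F(Kb)) \tto U\F(Jb)$; then, since Hom functors preserve ends (Remark~\ref{remark:ends_as_limits}), pull $Ja \tto (-)$ inside the end and swap the two function arguments. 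This delivers the stated isomorphism with witnessing maps $\gamma, \gamma^{-1}$ built directly from the isomorphism $\varphi, \varphi^{-1}$ of Theorem~\ref{thm:representation1}. It then remains to match the two coalgebra laws with the linearity and unity laws, following the second half of Theorem~\ref{theorem:lens-representation} line for line: for the counit-coalgebra/unity correspondence it suffices to show $\varepsilon_{a,Ja} \circ k_{a,a} = \gamma^{-1}(k_{a,a})_{\Id}(id)$, which reduces to a generalised analogue of Proposition~\ref{prop:epsilon-alpha}(a) expressing the counit of $UR^{(K)*}$ through the adjoint transpose; for the comultiplication-coalgebra/linearity correspondence I would invoke the generalised technical lemmas Lemma~\ref{lemma:gen-tech1} and Lemma~\ref{lemma:gen-tech2}, with the forward implication immediate and the reverse following, as before, by instantiating $F$ and $G$ at the representing functors $R^{(K)*}_{a,b}$ and $R^{(K)*}_{b,c}$ so that both adjunct transposes collapse to the identity natural transformation.

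The main obstacle is proving these generalised technical lemmas. In Lemma~\ref{lemma:tech1} and Lemma~\ref{lemma:tech2} one may use the explicit formulas of Example~\ref{example:RK} for $\varepsilon$ and $\delta$ together with Proposition~\ref{prop:delta-alpha}(b); in the generalised setting those formulas are replaced by the adjoint-transpose definitions $\varepsilon^{UR^{(K)*}} = U\radj{\varepsilon^{R^{(K)}}}$ and $\delta^{UR^{(K)*}} = U\radj{(\eta \cdot \eta) \circ \delta^{R^{(K)}}}$, so the calculations must instead be carried out using the naturality of the adjuncts (Equation~(\ref{eq:adj-nat2})), the characterisation of adjuncts via unit and counit (Equation~(\ref{eq:adjuncts-from-unit-counit})), and the strict monoidality of $U$ — precisely the toolkit already exercised in the two lifting propositions above. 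Correctly tracking the horizontal composition $(-)\cdot(-)$ of natural transformations through these transposes is the delicate bookkeeping that makes this step substantially more involved than its non-generalised predecessor.
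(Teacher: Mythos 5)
Your proposal is correct and follows essentially the same route as the paper: part (1) by chaining the two preceding lifting propositions, and part (2) by obtaining the bare isomorphism from the unary representation theorem (Theorem~\ref{thm:representation1}) and then matching the coalgebra laws to linearity and unity via Lemma~\ref{lemma:gen-tech1} and Lemma~\ref{lemma:gen-tech2}, with the reverse implication obtained by instantiating at the representing functors so the transposed adjuncts collapse to the identity. The toolkit you identify for the generalised technical lemmas (naturality of adjuncts, Equation~(\ref{eq:adjuncts-from-unit-counit}), strict monoidality of $U$, and Proposition~\ref{prop:delta-alpha}) is exactly what the paper's calculations use.
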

\begin{proof}
  The previous two propositions entail that $UR^{(K)*}$ is a
  parameterised comonad with the following counit and
  comultiplication.
 $$
 \begin{array}{lcl}
 \varepsilon^{UR^{(K)*}}_{a}  & : & UR^{(K)*}_{a,a} \to \Id \\
 \varepsilon^{UR^{(K)*}}_{a} & = & U\radj{\varepsilon^{R^{(K)}}_a} \\
 \\
 \delta^{UR^{(K)*}}_{a,b,c} & : & UR^{(K)*}_{a,c} \to UR^{(K)*}_{a,b} \cdot UR^{(K)*}_{b,c} \\
 \delta^{UR^{(K)*}}_{a,b,c} & = & U\radj{(\eta_{R^{(K)}_{a,b}}\cdot\eta_{R^{(K)}_{b,c}}) \circ \delta^{R^{(K)}}_{a,b,c}}
 \end{array}
 $$%
The unary representation theorem (Theorem~\ref{thm:representation1}) entails the isomorphism
$$
  Ja \tto UR^{(K)*}_{a,b}(Jb) 
        \quad \cong \quad 
     \int_{\F:\cat{F}} (Ka \tto U\F (Kb)) \tto Ja \tto U\F (Jb)
$$%
witnessed by the following functions
\[
\begin{array}{l@{\,}c@{\,}l}
\gamma & : & (\int_{\F}  (Ka \tto U\F (Kb)) \tto Ja \tto U\F (Jb)) \to (Ja \tto UR^{(K)*}_{a,b}\,(Jb))
\\
\gamma(h) & = & h_{R^{(K)*}_{a,b}}\,(\alpha^{-1}_{UR^{(K)*}_{a,b}}(\eta_{R^{(K)}_{a,b}}))
\\
\\
\gamma^{-1} & : &  (Ja \tto UR^{(K)*}_{a,b}\,(Jb)) \to \int_{\F} (Ka \tto U\F (Kb)) \tto (Ja \tto U\F (Jb))\\
\gamma^{-1}(k) & = & \tau \quad\text{where~}
\begin{array}[t]{l@{\,}c@{\,}l}
\tau_{\F} & : & (Ka \tto U\F (Kb)) \tto Ja \tto U\F (Jb) \\
\tau_{\F}(g) & = & U\radj{\alpha_{U\F}(g)}_{Jb} \circ k
\end{array} \\
\end{array}
\]%
All that remains is to show that $k_{a,b}$ satisfies the coalgebra
laws if and only if $\gamma^{-1}(k_{a,b})$ satisfies the linearity and
unity laws.

First we prove two lemmas:
\begin{lemma}\label{lemma:gen-tech1}
  For all $F,G : \cat{F}$ and $f : Ka \to UF(Kb)$ and $g : Kb \to
  UG(Kc)$ we have that
\[
  \gamma^{-1}(k_{a,c})_{F \cdot G}(UFg \circ f) \quad = \quad
  U(\radj{\alpha_{UF}(f)} \cdot \radj{\alpha_{UG}(g)})_{Jc} \circ \delta^{UR^{(K)*}}_{a,b,c,Jc} \circ k_{a,c}
\]%
\end{lemma}
\begin{proof}
 \begin{flalign*}
    & \gamma^{-1}(k_{a,c})_{F \cdot G}(UFg \circ f)
    & \\
 =  & \comment{Definition of $\gamma^{-1}$} 
   \\
    & U\radj{\alpha_{UF \cdot UG}(UFg \circ f)}_{Jc} \circ k_{a,c}
    \\
 =  & \comment{Proposition~\ref{prop:delta-alpha}(b)} 
   \\
    & U\radj{(\alpha_{UF}(f) \cdot \alpha_{UG}(g)) \circ \delta^{R^{(K)}}_{a,b,c}}_{Jc} \circ k_{a,c}
   \\
 =  & \comment{isomorphism} 
   \\
    & U\radj{(\ladj{\radj{\alpha_{UF}(f)}} \cdot \ladj{\radj{\alpha_{UG}(g)}}) \circ \delta^{R^{(K)}}_{a,b,c}}_{Jc} \circ k_{a,c}
   \\
 =  & \comment{Eq~\ref{eq:adjuncts-from-unit-counit}} 
   \\
    & U\radj{((U\radj{\alpha_{UF}(f)} \circ \eta_{R^{(K)}_{a,b}}) \cdot
              (U\radj{\alpha_{UG}(g)} \circ \eta_{R^{(K)}_{b,c}}))
      \circ \delta^{R^{(K)}}_{a,b,c}}_{Jc} \circ k_{a,c}
   \\
 =  & \comment{$\cdot$ bifunctor and U is strict} 
   \\
    & U\radj{U(\radj{\alpha_{UF}(f)} \cdot \radj{\alpha_{UG}(g)}) \circ
             (\eta_{R^{(K)}_{a,b}} \cdot \eta_{R^{(K)}_{b,c}})
      \circ \delta^{R^{(K)}}_{a,b,c}}_{Jc} \circ k_{a,c}
   \\
 =  & \comment{Eq~\ref{eq:adj-nat2} and U is strict} 
   \\
    & (U(\radj{\alpha_{UF}(f)} \cdot \radj{\alpha_{UG}(g)}) \circ
       U\radj{(\eta_{R^{(K)}_{a,b}} \cdot \eta_{R^{(K)}_{b,c}})
      \circ \delta^{R^{(K)}}_{a,b,c}})_{Jc} \circ k_{a,c}
   \\
 =  & \comment{Definition of $\delta^{UR^{(K)*}}$} 
   \\
    & U(\radj{\alpha_{UF}(f)} \cdot \radj{\alpha_{UG}(g)})_{Jc} \circ \delta^{UR^{(K)*}}_{a,b,c,Jc} \circ k_{a,c}
 \end{flalign*}
\end{proof}

We note that Lemma~\ref{lemma:tech1} follows from Lemma~\ref{lemma:gen-tech1}
by considering the identity adjunction between $\cat{E}$ and itself.

\begin{lemma}\label{lemma:gen-tech2}
  For all $F,G : \cat{F}$ and $f : Ka \tto UF(Kb)$ and $g : Kb \tto
  UG(Kc)$ we have that
\[
  UF(\gamma^{-1}(k_{b,c})_{G}(g)) \circ \gamma^{-1}(k_{a,b})_{F}(f) \quad = \quad
  U(\radj{\alpha_{UF}(f)} \cdot \radj{\alpha_{UG}(g)})_{Jc} \circ UR^{(K)*}_{a,b}(k_{b,c}) \circ k_{a,b}
\]%
\end{lemma}
\begin{proof}%
 \begin{flalign*}
    & UF(\gamma^{-1}(k_{b,c})_{G}(g)) \circ \gamma^{-1}(k_{a,b})_{F}(f) 
   & \\
 =  & \comment{Definition of $\gamma^{-1}$} 
   \\
    & UF(U\radj{\alpha_{UG}(g)}_{Jc} \circ k_{b,c}) \circ U\radj{\alpha_{UF}(f)}_{Jb} \circ k_{a,b} 
   \\
 =  & \comment{$UF$ is a functor} 
   \\
    & UF(U\radj{\alpha_{UG}(g)}_{Jc}) \circ UF(k_{b,c}) \circ U\radj{\alpha_{UF}(f)}_{Jb} \circ k_{a,b} 
   \\
 =  & \comment{$U\radj{\alpha_{UF}(f)}$ is natural} 
   \\
    & UF(U\radj{\alpha_{UG}(g)}_{Jc}) \circ U\radj{\alpha_{UF}(f)}_{UR^{(K)*}_{b,c}(Jc)} \circ UR^{(K)*}_{a,b}(k_{b,c}) \circ k_{a,b} 
   \\
 =  & \comment{Definition of $\cdot$}
   \\
    & (U\radj{\alpha_{UF}(f)} \cdot U\radj{\alpha_{UG}(g)})_{Jc} \circ UR^{(K)*}_{a,b}(k_{b,c}) \circ k_{a,b}
   \\
 =  & \comment{$U$ is strict} 
   \\
    & U(\radj{\alpha_{UF}(f)} \cdot \radj{\alpha_{UG}(g)})_{Jc} \circ UR^{(K)*}_{a,b}(k_{b,c}) \circ k_{a,b}
 \end{flalign*}%
\end{proof}

We note that Lemma~\ref{lemma:tech2} follows from
Lemma~\ref{lemma:gen-tech2} by considering the identity adjunction
between $\cat{E}$ and itself.

The linearity law for the image of $\gamma^{-1}$ states
\[
  \forall F, G, f, g.
  \gamma^{-1}(k_{a,c})_{F \cdot G}(UFg \circ f) = 
  UF(\gamma^{-1}(k_{b,c})_{G}(g)) \circ \gamma^{-1}(k_{a,b})_{F}(f)
\]%

By the previous two lemmas, this linearity law is equivalent to stating that $  \forall F, G, f, g$
\[
  U(\radj{\alpha_{UF}(f)} \cdot \radj{\alpha_{UG}(g)})_{Jc} \circ \delta^{UR^{(K)*}}_{a,b,c,Jc} \circ k_{a,c}
  = 
  U(\radj{\alpha_{UF}(f)} \cdot \radj{\alpha_{UG}(g)})_{Jc} \circ UR^{(K)*}_{a,b}(k_{b,c}) \circ k_{a,b}
\]%
With this reformulation we see that the comultiplication-coalgebra law,
$$
  \delta^{UR^{(K)*}}_{a,b,c,Jc} \circ k_{a,c} \quad = \quad UR^{(K)*}_{a,b}(k_{b,c}) \circ k_{a,b}
$$
trivially implies the linearity law.  To derive the
comultiplication-coalgebra law from the linearity law consider the
instance where 
$F =R^{(K)}_{a,b}$, 
$f =\alpha^{-1}_{UR^{(K)*}_{a,b}}(\eta_{R^{(K)}_{a,b}})$, 
$G = R^{(K)}_{b,c}$, and 
$g = \alpha^{-1}_{UR^{(K)*}_{b,c}}(\eta_{R^{(K)}_{b,c}})$.  
In this case we have
 \begin{flalign*}
    & U(\radj{\alpha_{UF}(f)} \cdot \radj{\alpha_{UG}(g)})
    & \\
 =  & \comment{definition of $f$ and $g$} 
    \\
    & U(\radj{\alpha_{UR^{(K)*}_{a,b}}(\alpha^{-1}_{UR^{(K)*}_{a,b}}(\eta_{R^{(K)}_{a,b}}))} \cdot
        \radj{\alpha_{UR^{(K)*}_{b,c}}(\alpha^{-1}_{UR^{(K)*}_{b,c}}(\eta_{R^{(K)}_{b,c}}))})
    \\
 =  & \comment{isomorphism} 
    \\
    & U(\radj{\eta_{R^{(K)}_{a,b}}} \cdot \radj{\eta_{R^{(K)}_{b,c}}})
    \\
 =  & \comment{Eq.~\ref{eq:eta-epsilon-from-adjuncts}}
    \\
    & U(\radj{\ladj{id}} \cdot \radj{\ladj{id}})
    \\
 =  & \comment{isomorphism} 
    \\
    & U(id \cdot id)
    \\
 =  & \comment{identity} 
   \\
    & id
 \end{flalign*}
and then the comultiplication-coalgebra law follows.

The unity law for the image of $\gamma^{-1}$ states
\[
  \gamma^{-1}(k_{a,a})_{\Id}(id) \quad = \quad id : J a \tto J a
\]%
The counit-coalgebra law states
\[
  \varepsilon^{UR^{(K)*}} \circ k_{a,a} \quad = \quad id : J a \tto J a
\]%
Therefore, in order to show that these laws are equivalent, it
suffices to prove the following.
\[
  \gamma^{-1}(k_{a,a})_{\Id}(id) \quad = \quad \varepsilon^{UR^{(K)*}} \circ k_{a,a}
\]%
 \begin{flalign*}
    & \gamma^{-1}(k_{a,a})_{\Id}(id)
    & \\
 =  & \comment{definition of $\gamma^{-1}$}
    \\
    & U(\radj{\alpha_{\Id}(id)})(J a) \circ k_{a,a}
    \\
 =  & \comment{Proposition~\ref{prop:epsilon-alpha}} 
    \\
    & U(\radj{\varepsilon^{R^{(K)}}_a})(J a) \circ k_{a,a}
    \\
 =  & \comment{Definition of $\varepsilon^{UR^{(K)*}}$}
    \\
    & \varepsilon^{UR^{(K)*}} \circ k_{a,a}
 \end{flalign*}
\end{proof}

% ----- Configure Emacs -----
%
% Local Variables: ***
% mode: latex ***
% mmm-classes: literate-haskell-lhs2TeX ***
% End: ***

\end{document}